\newcommand*{\rom}[1]{\expandafter\@slowromancap\romannumeral #1@}
\DeclareMathOperator{\EX}{\mathbb{E}}
\newtheorem{theorem}{Theorem}
\newtheorem{proposition}{Proposition}
\newtheorem{problem}{Problem}
\newtheorem{definition}{Definition}
\newtheorem{example}{Example}
\newtheorem{strategy}{Strategy}
\newlength\mylen
\def\BibTeX{{\rm B\kern-.05em{\sc i\kern-.025em b}\kern-.08em
    T\kern-.1667em\lower.7ex\hbox{E}\kern-.125emX}}
\journal{Future Generation Computer Systems}
\begin{document}

\begin{frontmatter}

%% Title, authors and addresses

%% use the tnoteref command within \title for footnotes;
%% use the tnotetext command for theassociated footnote;
%% use the fnref command within \author or \affiliation for footnotes;
%% use the fntext command for theassociated footnote;
%% use the corref command within \author for corresponding author footnotes;
%% use the cortext command for theassociated footnote;
%% use the ead command for the email address,
%% and the form \ead[url] for the home page:
%% \title{Title\tnoteref{label1}}
%% \tnotetext[label1]{}
%% \author{Name\corref{cor1}\fnref{label2}}
%% \ead{email address}
%% \ead[url]{home page}
%% \fntext[label2]{}
%% \cortext[cor1]{}
%% \affiliation{organization={},
%%            addressline={}, 
%%            city={},
%%            postcode={}, 
%%            state={},
%%            country={}}
%% \fntext[label3]{}

\title{On Scaling LT-Coded Blockchains in Heterogeneous Networks and their Vulnerabilities to DoS Threats\\
}

%% use optional labels to link authors explicitly to addresses:
%% \author[label1,label2]{}
%% \affiliation[label1]{organization={},
%%             addressline={},
%%             city={},
%%             postcode={},
%%             state={},
%%             country={}}
%%
%% \affiliation[label2]{organization={},
%%             addressline={},
%%             city={},
%%             postcode={},
%%             state={},
%%             country={}}

\author[first]{Harikrishnan K}
\author[first]{J. Harshan}
\affiliation[first]{organization={Department of Electrical Engineering, Indian Institute of Technology Delhi},%Department and Organization
            % addressline={}, 
            city={Delhi},
            postcode={110016}, 
            % state={},
            country={India}}
\author[second]{Anwitaman Datta}
\affiliation[second]{organization={College of Computing and Data Science, Nanyang Technological University},%Department and Organization
            % addressline={}, 
            % city={},
            postcode={639798}, 
            % state={},
            country={Singapore}}

\begin{abstract}
%% Text of abstract
% Coded blockchains have acquired prominence as a promising solution to reduce storage costs and facilitate scalability. Within this class, Luby Transform (LT) coded blockchains are an appealing choice for scalability in heterogeneous networks owing to the availability of a wide range of low-complexity decoders. However, traditional LT decoders like Belief Propagation (BP) and On-the-Fly Gaussian Elimination (OFG) may not be optimal for heterogeneous networks with nodes that have varying computational and download capabilities. To address this, we introduce Bootstrap-Rigid Hybrid (BRH) and Complexity-Rigid Hybrid (CRH) decoders,    
% To address this, we introduce a family of hybrid decoders for LT codes, and propose optimal operating regimes for them as to recover the blockchain for a given total cost on download and computational capabilities.
Coded blockchains have acquired prominence as a promising solution to reduce storage costs and facilitate scalability. Within this class, Luby Transform (LT) coded blockchains are an appealing choice for scalability owing to the availability of a wide range of low-complexity decoders. In the first part of this work, we identify that traditional LT decoders like Belief Propagation and On-the-Fly Gaussian Elimination may not be optimal for heterogeneous networks with nodes that have varying computational and download capabilities.
To address this, we introduce a family of hybrid decoders for LT codes and propose optimal operating regimes for them to recover the blockchain at the lowest decoding cost.
% Therefore, to suit heterogeneous network constraints, we present a family of hybrid decoders for LT codes, and propose optimal operating regimes for them as to recover the blockchain for a given total cost on download and computational capabilities.	
While LT coded blockchain architecture has been studied from the aspects of storage savings and scalability, not much is known in terms of its security vulnerabilities. Pointing at this research gap, in the second part, we present novel denial-of-service threats on LT coded blockchains that target nodes with specific decoding capabilities, preventing them from joining the network. Our proposed threats are non-oblivious in nature, wherein adversaries gain access to the archived blocks, and choose to execute their attack on a subset of them based on underlying coding scheme. We show that our optimized threats can achieve the same level of damage as that of blind attacks, however, with limited amount of resources. Overall, this is the first work of its kind that opens up new questions on designing coded blockchains to jointly provide storage savings, scalability and also resilience to optimized threats.

% In addition to addressing the scalability issues of LT-coded blockchains, we also study their vulnerabilities against denial of service threats that forbid new nodes from joining the blockchain. Along this direction, we present novel attack strategies that are targeted against a variety of LT decoders including the classical BP, OFG and the proposed BRH and CRH decoders.
\end{abstract}

%%Graphical abstract
%\begin{graphicalabstract}
%\includegraphics{grabs}
%\end{graphicalabstract}

%%Research highlights
%\begin{highlights}
%\item Research highlight 1
%\item Research highlight 2
%\end{highlights}

\begin{keyword}
%% keywords here, in the form: keyword \sep keyword, up to a maximum of 6 keywords
Blockchain \sep Luby transform codes \sep Belief propagation decoder \sep On the fly gaussian elimination decoder \sep Heterogeneous networks \sep DoS threats \sep Non-oblivious adversaries

%% PACS codes here, in the form: \PACS code \sep code

%% MSC codes here, in the form: \MSC code \sep code
%% or \MSC[2008] code \sep code (2000 is the default)
\end{keyword}
\end{frontmatter}

%\tableofcontents

%% \linenumbers

%% main text
\section{Introduction and Background}
\label{Introduction}

Blockchain is a decentralized ledger technology that works on the principle of distributed trust among the peers of the underlying network. Since blockchain is not governed by a single trusted third party, it avoids single point of failure. Furthermore, it offers several other advantages such as (i) open accessibility, (ii) tighter security by chaining the blocks using cryptographic hashes, (iii) data is tamper-proof, (iv) fault-tolerance as the information is archived by storing duplicate copies at a large number of trustworthy nodes, (v) ensuring transparency through distributed consensus algorithms, and (vi) immutability \citep{Block_Beyond_Bitcoin}. Consequently, blockchain has emerged as the underlying technology for cryptocurrencies such as bitcoin \citep{Bitcoin} and ethereum \citep{Blockchain}. Furthermore, it has gained popularity in a wide range of applications such as Internet of Things (IoT) \citep{IoT}, healthcare (to track, verify and secure health data)\citep{Healthcare}, supply chain management \citep{Supplychain}, digital rights management \citep{Digital_Rights} and in several other domains \citep{Blockchain_Applications}.

Blockchain's distinguished security properties including decentralization, distributed trust, immutability, and transparency are safeguarded by a special class of fully functional nodes, known as \emph{full nodes}. These nodes serve as the backbone of a blockchain network by replicating and storing all blocks. The key responsibilities of a full node include archiving blockchain history, independently validating new blocks and bootstrapping new nodes joining the network. As a result, full nodes require a large storage space. For instance, the size of the bitcoin blockchain has grown over $554\emph{GB}$ as of March 2024 \citep{Blockchain Size}.
% The Ethereum chain full sync data size is over $1047\emph{GB}$ as of March 2024 \cite{EthereumSize}. 
Similarly, the XRP ledger size is approximately $26\emph{TB}$ as of July 2023, in which the storage requirement per full node increases by approximately $12\emph{GB}$ per day \citep{XRPSize}. As a consequence, scalability of blockchains is limited by its high storage requirements. This causes a significant drop in the number of full nodes, which violates blockchain's decentralization feature. For instance, the number of operational nodes of Bitcoin blockchain fell from $200,000$ in $2018$ to $47000$ in $2020$ \citep{Bitcoin node count falls}. These numbers clearly indicate that storage is a bottleneck for full nodes in blockchains. Hence, much research effort has been dedicated towards addressing the problem of high storage requirement in blockchains.

In the recent past, several approaches such as simplified payment verification \citep[Section 8]{Bitcoin}, block pruning \citep{Pruning} and sharding \citep{Sharding} have been proposed to cut down the storage cost of full nodes. However, these approaches are prone to security issues, as reported in \citep{Survey paper}. 
As a result, coded-blockchain \citep{Coded Blockchain} has emerged as an alternative approach to securely scale the blockchain with reduced storage requirement on the devices of the nodes. 

The idea of coded blockchain originates from distributed storage settings wherein all the archived blocks of a node, say a full node, are erasure coded using an underlying coding strategy to generate coded blocks. Subsequently, these coded blocks are distributed across a number of storage devices, referred to as droplet nodes, thereby reducing storage requirement per device and also enhancing reliability against device failures \citep{Erasure_code}. With such an architecture, a new node that wishes to mirror the blockchain, would need to contact sufficient number of droplet nodes, download their data and then apply a decoding strategy to recover the archived-blocks. While coded blockchains enjoy the benefits in reduced storage size, it requires new nodes to download more data compared to the uncoded counterparts. In this line, \citep{SeF} proposed a secure fountain (SeF) architecture for blockchains by using Luby Transform (LT) codes \citep{LT_codes} to provide storage benefits.
Furthermore, LT codes were chosen in \citep{SeF} as they offer the following advantages on scalability: (i) New nodes would need to download marginally larger-sized data than the uncoded counterpart (ii) New nodes have the flexibility to use a wide range of low-complexity LT decoders depending on their computational complexity and communication overhead and (iii) LT coded blockchain systems are resilient against node failures. 

In this context, we address two major challenges associated with LT coded blockchains, owing to which this paper is divided into two parts. Initially, we work on the problem of designing optimal decoders that facilitate seamless scaling of LT coded blockchains under heterogeneous network constraints. To this end, we present new decoders for LT coded blockchains, that are customized to heterogeneous nodes with varying download and computing capabilities, so as to facilitate low cost mirroring of the blockchain. In the second part, we study the vulnerabilities of the LT coded structure that may be exploited by an adversary to deny a certain group of new nodes from decoding the original blockchain and joining the network, thereby stalling its scalability.

Towards designing low complexity decoders for LT coded blockchains, the prior works have aimed to either reduce the computational complexity by compromising on the bootstrap overhead using the Belief Propagation (BP) decoder \citep{SeF,SRED}, or vice-versa using the On the Fly Gaussian Elimination (OFG) decoder \citep{OFG}. However, in a blockchain network consisting of nodes of heterogeneous download and computing capabilities, standalone BP and OFG decoders are not suitable as they may not optimally trade the bootstrap overhead with computational complexity, depending on the needs of a heterogeneous node. Furthermore, under such heterogeneous network constraints, it is important to take into account both complexity as well as overhead metrics simultaneously, while designing a decoder. To this end, we design advanced LT decoders for heterogeneous nodes, which facilitates low cost decoding for any network constraint.

Furthermore, we are also interested in analyzing the vulnerability of coded blockchain architectures against security threats that target their scalability feature. Since full nodes assist in bootstrapping new nodes in the network, it is essential to study their vulnerability on a wide range of active threats. One way in which an adversary or a group of adversaries could jeopardize scalability is to contact full nodes as potential new nodes and then deny them from serving other new nodes in the network by flooding service requests \citep{DoS_1,DoS_2}. Another option for the adversaries is to contact full nodes, compromise their access structure and then manipulate their coded blocks in the form of integrity threats or ransomware threats \citep{Ransomware_healthcare,Ransomware_CryptoLocker,Ransomware_Petya}. Consequently, such attacks forbid new nodes from downloading the contents of those full nodes, thereby stalling scalability. With such potential threats on the full nodes in blockchain, in this work, we explore whether the coded structure of LT-coded architecture such as \citep{SeF} provides any advantage to the adversaries in terms of the attack strategies.

\begin{figure*}[ht!]
    \centering
    \includegraphics[scale = 0.4]{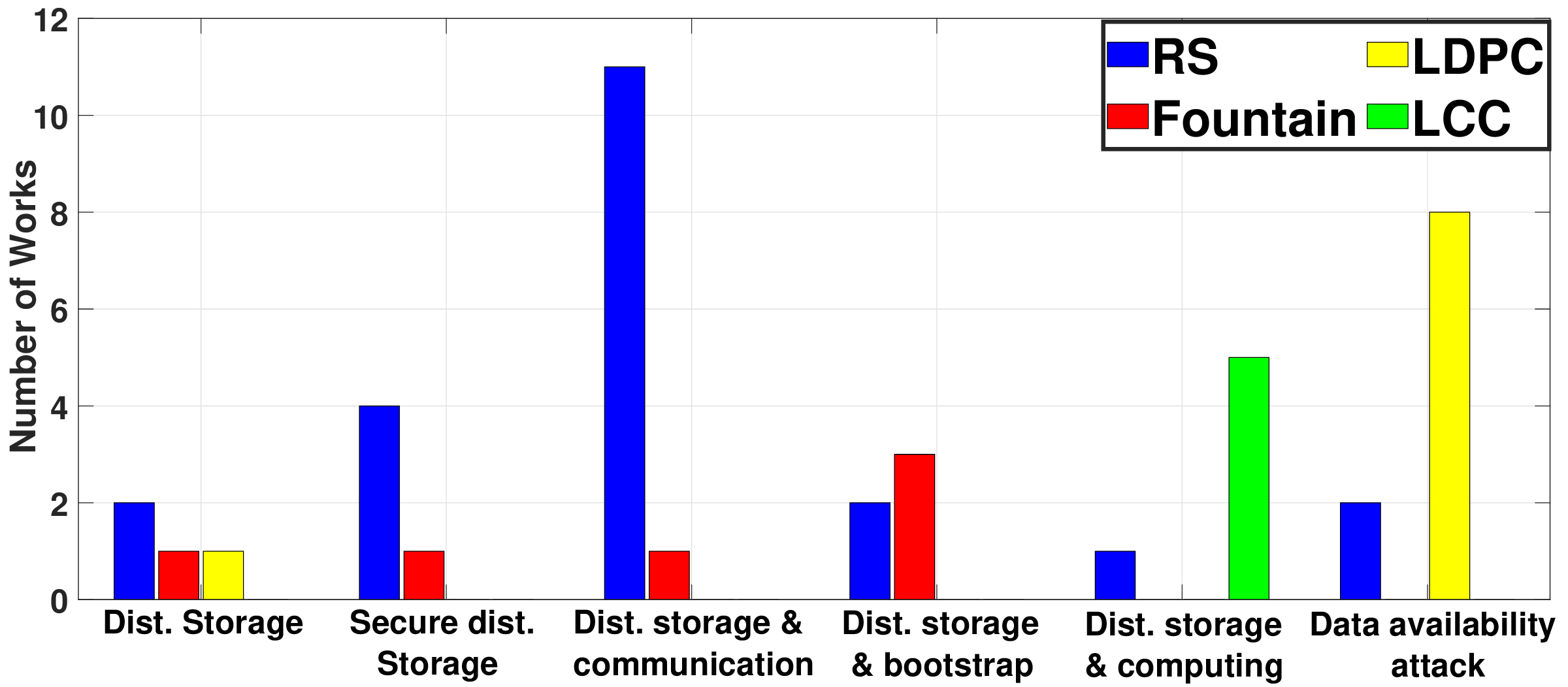}
    \caption{\textcolor{black}{Number of contributions that implement RS, Fountain, LDPC and LCC codes to address scalability issues in blockchains (source: \citep{Survey paper}).}}
    \label{fig:Related_work}
\end{figure*}

\subsection{Our Contributions}

Our key contributions towards scaling LT coded blockchains under heterogeneous network constraints are summarized as follows. 

\begin{enumerate}
    \item We introduce two novel variants of hybrid decoders for LT coded blockchains, namely: the Bootstrap-Rigid Hybrid (BRH) and the Complexity-Rigid Hybrid (CRH) decoders, both of which exploit the low complexity operations of the BP decoder before switching to their OFG counterpart in an opportunistic sense. As a result, both of them offer flexibility to a new node by allowing it to operate on its feasible range of computational complexity and bootstrap overhead.
    \item To equitably compare the traditional LT decoders with the newly introduced BRH and CRH decoders, we unify the costs associated with computation and download at the new node. Subsequently, we propose a new metric known as the \emph{mirroring cost}, which takes into account the heterogeneity of the nodes in the blockchain network as well as the complexity and overhead of the decoder used. We then formulate optimization problems for both BRH and CRH decoders to minimize their respective mirroring costs, taking into account the new node's download and computing capabilities.
    \item Through comprehensive experiments, we establish that for an LT coded blockchain node with given network constraints, BRH/CRH decoders operating at their optimal overhead/complexity combination allow decoding of the blockchain at a lower mirroring cost than that of the standalone BP and OFG decoders. Overall, our analysis provides pointers on how to choose the parameters of BRH and CRH decoders in practice as they provide joint benefits in both bootstrap overhead and computational complexity.
\end{enumerate}

% To address the problem of designing optimal decoders for heterogeneous nodes, we propose novel \emph{Bootstrap-Rigid Hybrid (BRH) decoder} and \emph{Complexity-Rigid Hybrid (CRH) decoder}. Both of these decoders exploit the low-complexity operations of the BP decoder before switching to the OFG counterpart in an oppurtunistic sense. To compare the LT decoders equitably, we unify the costs associated with computation and download, and then propose a new metric known as the \emph{mirroring cost}, which takes into account the heterogeneity of the nodes in the blockchain network. We then formulate optimization problems for both BRH and CRH decoders to minimize their respective mirroring costs, subject to the node's download and computational capabilities. Our analysis provides pointers on how to choose the parameters of BRH and CRH decoders in practice as they provide joint benefits in both bootstrap overhead and computational complexity.
 Next, we summarize our contributions on the vulnerabilities of LT coded blockchain architectures to DoS threats. 

\begin{enumerate}
    \item We present a novel threat model on LT coded blockchains, wherein an adversary can read a specific fraction of the available servers of a full node, and then use this information to launch DoS attacks on a subset of them, thereby forbidding new nodes from downloading the blockchain from the full nodes. By referring to the above class of adversaries as non-oblivious adversaries, we propose reasonable attack strategies corresponding to each of BP, OFG, BRH and CRH decoders. The proposed attack strategies use the information read by the adversary and then select a subset of servers on which DoS attacks has to be launched, so as to produce a high decoding failure rate for a new node joining the network. Through comprehensive experimental results, we show that the proposed non-oblivious adversaries achieve a higher level of damage than that of their oblivious counterparts blind attacks.  
    \item We also consider a cost constrained adversary, for which the total cost that the adversary could spend jointly in terms of reading the contents of the storage nodes as well as launching the DoS attack on a subset of them, is fixed. Under such a constraint, our analysis assists the adversary to decide on what fraction of the storage nodes has to be read/attacked to incur maximum decoding failure rate in the decoder he targets.  
\end{enumerate}

% For vulnerability analysis, we propose optimal attack strategies specific to all the four above-mentioned LT decoders, given that the adversary has information about a subset of storage nodes. This allows an adversary to deny new nodes with certain download and computational capabilities from joining the blockchain network, thereby forbidding its scalability. We also consider a cost constrained adversary, for which the total cost that the adversary could spend in terms of reading the contents of the storage nodes as well as launching the DoS attack on a subset of them, is fixed. Under such a constraint, our analysis assists the adversary to decide on what fraction of the storage nodes has to be read/attacked to incur maximum decoding failure rate in the decoder he targets. 

\textcolor{black}{The security threats proposed in this work inherently capture the vulnerabilities of the vanilla version of the LT coded blockchains \citep{SeF}, wherein decoders based on belief propagation (BP) and on-the-fly Gaussian (OFG) elimination are applicable. Therefore, the impact of our attacks are first presented against such an architecture. In addition, given that we also  propose new enhancements to the LT coded framework in the form of BRH and CRH decoders, we also extend the vulnerability study to those enhancements.} Overall, our work is the first of its kind in designing optimal decoders for heterogeneous network constraints and also analysing the threats caused by non-oblivious adversaries in LT coded blockchain architecture.

\subsection{Related Work}

Since coded blockchain addresses both scalability and security issues associated with blockchains, there are two major lines of research associated with it. Most of the works focuses on the scalability of blockchains, aiming to reduce the storage requirement of full nodes by using error correction codes. Another line of research studies the vulnerabilities of coded blockchains to Data Availability Attacks (DAA). The works associated with the aforementioned research lines are discussed in Section \ref{Section:Related_Work_Scalability} and Section \ref{Section:Related_Work_Security}.

\subsubsection{Blockchain Scalability using Error Correction Codes}\label{Section:Related_Work_Scalability}

% \begin{figure}[ht!]
%     \centering
%     \includegraphics[width = \linewidth, height=6.5cm]{related_work.eps}
%     \caption{Number of works that implement RS, Fountain, LDPC and LCC codes to address scalability issues in blockchains (source: \citep{Survey paper}).}
%     \label{fig:Related_work}
% \end{figure}

In the context of enhancing blockchain scalability using error correction codes, researchers have predominantly explored the communication overhead, bootstrap costs, storage optimizations, and computational aspects of coded blockchain framework. The most common error correction codes used for the above-mentioned purposes are, Reed-Solomon (RS) code \citep{RS_codes}, Low Density Parity Check (LDPC) code \citep{LDPC}, Luby Transform (LT) code \citep{LT_codes}, Raptor code \citep{Raptor} and Lagrange Polynomial code \citep{Lagrange}. Fig. \ref{fig:Related_work} shows the number of works in which the aforementioned codes are employed to address various scalability issues in blockchain \citep{Survey paper}. It is clear from the statistics shown in Fig. \ref{fig:Related_work} that RS codes are widely employed to optimize the communication overhead, owing to its Maximum Distance Separable (MDS) property. However, it does not support low-complexity reconstruction operations due to higher-order finite field arithmetic. On the other hand, the class of fountain codes are predominantly used to optimally trade storage savings with bootstrap cost \citep{SeF,SRED}. It is also evident that LDPC codes are used to overcome DAA in blockchain systems \citep{LDPC_DAA}, whereas Lagrange Coded Computing (LCC) codes are widely used in applications where the devices operating the blockchain have limited computational capabilities \citep{LCC_IoT}.

\begin{figure*}[ht!]
\begin{center}
  \includegraphics[scale = 0.4]{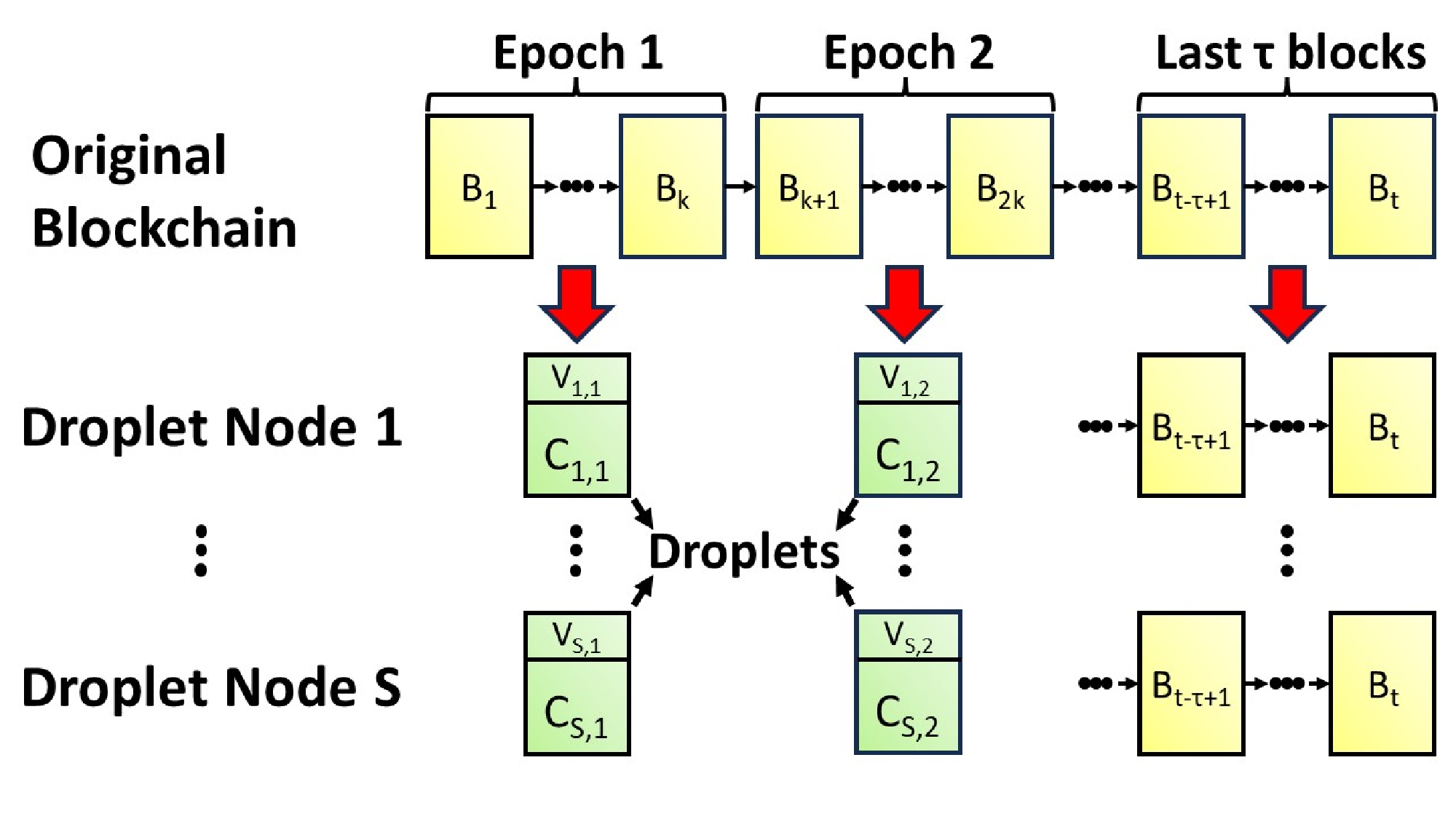}
  \end{center}
  \vspace{-1cm}
  \caption{Depiction of coded blockchain architecture wherein $k$ blocks of an epoch are coded to generate $S$ coded blocks and then stored across $S$ droplet nodes. The vectors $\mathbf{v}$ store the indices of the message blocks used to generate the corresponding coded block, thus capturing the degree information. The most recent $\tau$ blocks are stored in an uncoded format to handle blockchain reorganizations due to potential forks \citep{SeF}.}
  \label{fig:Storage_Model}
\end{figure*}

Furthermore, as another interesting direction of research, several coded architectures have been proposed to address the scalability issue in blockchains. In this context, authors of \citep{SeF} present a secure fountain (SeF) architecture for blockchains by using LT codes to provide storage benefits. Ultimately, \citep{SeF} aims to achieve a near-optimal trade-off between bootstrap overhead and storage savings. Authors of \citep{SRED} have also worked on similar lines as \citep{SeF}, and suggested raptor codes to be used in place of LT codes to achieve an almost constant decoding complexity, however, with a higher bootstrap overhead. Subsequently, \citep{WCNC_paper} pointed out the problem of high bandwidth requirement for decoding a particular requested block, faced by \citep{SeF,SRED}. To this end, \citep{WCNC_paper} presented two different architectures namely (i) Secure and Private Fountain codes, (ii) Secure and Private Randomly Sampled codes to obtain storage savings, low repair bandwidth per epoch, low communication cost for decoding a requested block, and simultaneously provide privacy of the archival data. However, unlike \citep{SeF,SRED}, the architectures in \citep{WCNC_paper} incur higher costs for decoding the complete chain. Another work \citep{Regenerating_codes}, presents a secure-repair-block protocol based on regenerating codes for sharded blockchains to reduce both storage and bootstrap costs, which makes it easier for a new miner to enter the system. A different study \citep{Homomorphic_hashes} introduces homomorphic hashes in coded blockchains, which allows instantaneous detection of erroneous fragments, thereby avoiding decoding with wrong data. \textcolor{black}{A survey on coded blockchains is also available in \citep{new_survey}}.

\textcolor{black}{Among the various error correction coding techniques, we adopt LT codes in this work to enhance the scalability, reliability, and data availability of blockchain systems. By fragmenting blockchain data into smaller, redundant coded symbols, LT codes enable efficient data recovery even when some fragments are lost or corrupted, reducing the per-device storage load and enhancing resilience to node failures or network disruptions \citep{BC_IoT}. Unlike traditional erasure codes, LT codes are non-systematic and do not rely on a fixed generator matrix; each coded block is equally informative, allowing decentralized, load-balanced retrieval from any subset of fragments \citep{Consortium}. This property is particularly beneficial in decentralized environments, as it prevents traffic bottlenecks when multiple nodes simultaneously attempt to sync with the blockchain. Furthermore, LT coding minimizes the need for retransmissions in unreliable networks by enabling decoding from any sufficiently large subset of coded symbols \citep{Facebook}, thereby improving transmission efficiency and supporting scalability. As decoding can begin once enough fragments are received, LT-coded systems can also accelerate transaction confirmations, making them well-suited for distributed networks with limited resources.}

\subsubsection{Data Availability Attacks in Blockchains} \label{Section:Related_Work_Security}

Typically, blockchains face security threats in the form of DAA. This is because, when the ratio of light nodes to full nodes increases due to rising storage requirements on full nodes, only a smaller number of nodes will have access to the entire set of data or transactions. As a result, some of the full nodes may engage in malicious behaviours, as the light nodes are unable to detect any malicious activity such as invalid or missing transactions \citep{DAA1,DAA2}. In recent times, several approaches have been proposed to address the DAA issue in blockchains. In one such study, coded merkle tree is proposed as a constant-cost protection against blockchain based DAA \citep{DAA3}. Another line of work uses 2-D RS code with merkle trees to mitigate DAA \citep{DAA4,DAA5}. However, 2-D RS code experiences large decoding complexity and coding fraud proof size when applied to blockchains with large blocks. To this end, \citep{DAA6} constructs a merkle tree using polar codes that can be used to mitigate DAA, with a lower coding fraud proof size compared to that of 2D-RS codes.

Furthermore, DAA in the form of Denial-of-Service (DoS) threats faced by blockchains has also been widely studied. In this line, \citep{DAA7} and \citep{DAA8} thoroughly investigate the DoS attacks in blockchains and solutions to mitigate them. A later study introduces Blockchain DoS (BDoS), an incentive-based DoS attack that targets proof-of-work cryptocurrencies \citep{DAA9}. It is claimed that BDoS targets the system's mechanism design by exploiting the reward structure, effectively discouraging miner participation.

Some preliminary results of this work are available in \citep{My_Paper}, which captures a part of our contributions on the vulnerabilities of LT coded blockchains to optimized DoS threats. In particular, this work builds upon the work in \citep{My_Paper} by providing a comprehensive analysis of LT decoders' performance metrics and exploring the trade-off between computational complexity and bootstrap overhead, which \citep{My_Paper} does not address. While \citep{My_Paper} briefly discusses hybrid decoders, this work optimizes them for blockchain decoding at minimal cost under heterogeneous network constraints. In vulnerability analysis, we introduce more advanced and powerful attack strategies and efficient implementation techniques. Specifically, we propose a more effective score-based attack strategy on the BP decoder and devise a computationally efficient algorithm for attacking the OFG decoder, improving upon the attacks presented in \citep{My_Paper}. \textcolor{black}{To the best of our knowledge, coded blockchains have not been used in live blockchain deployments, however, they have been validated on real Bitcoin trace data through academic prototypes \citep{BC_IoT} and studied on IoT networks \citep{Recent_paper}. While commercial readiness of such architectures are possibly a few years away, we believe our study on the security vulnerabilities is imperative before large-scale deployments as it helps in reducing the risks, prevent downtime, and build more trust and compliance.}

\subsection{Organization}
The remainder of this paper is organized as follows. In Section $2$, we discuss the storage model of the LT coded blockchain architecture, followed by its encoding and decoding methodologies. In Section $3$, we introduce the BRH and CRH decoders for low cost mirroring of LT coded blockchains and carry out a detailed performance analysis of the various LT decoders. Furthermore, we optimize the newly introduced BRH and CRH decoders to mirror the blockchain at the lowest cost, for any heterogeneous constraints on the mirroring node. In Section $4$, we present reasonable attack strategies against various LT decoders along with feasible algorithms to execute them, and some experimental results supporting the effectiveness of the proposed strategies. Furthermore, we optimize the attacks from an adversary's perspective in order to maximize the decoding failure rate of the targeted decoder, subject to the adversary's cost constraint on the attack execution. Finally, a concluding summary and directions for further research are provided in Section $5$.

\vspace{-2mm}
\section{LT Coded Blockchain Architecture}
\label{sec2_model}

To accommodate the rapid growth in the size of a full node, coded blockchain architecture distributes the contents of a full node as coded fragments across multiple smaller-sized storage devices. Fig. \ref{fig:Storage_Model} shows a visual illustration of coded blockchain architecture. In this architecture, the original blockchain stored on a full node is divided into several \emph{epochs}, wherein an epoch is defined as a collection of $k$ blocks of the blockchain, for some $k \in \mathbb{Z}^{+}$. For every epoch, a sufficient number of coded blocks referred to as \emph{droplets} is generated, and these droplets are stored across storage constrained nodes referred to as \emph{droplet nodes}, as seen in Fig. \ref{fig:Storage_Model}. Henceforth, we assume that a droplet node will store one droplet per epoch, thus storing as many droplets as the number of epochs, as shown clearly in Fig. \ref{fig:Storage_Model}. This way, the storage size of a full node is slashed by distributing it across smaller-sized storage devices with roughly $\frac{1}{k}$ storage capacity. 
% An illustration of this model is given in Fig. \ref{fig:Storage_Model} wherein each epoch is encode   
In the next section, we explain how a blockchain is encoded to obtain the droplet nodes using an LT coded architecture \citep{SeF}. 

% In the next section, we discuss how an original blockchain is encoded to obtain the droplet nodes using an LT coded architecture \citep{SeF}.

%how a bucket node uses the droplets to retrieve the original blockchain.

%A new node who wants to join the coded blockchain network is referred to as a \emph{bucket node}. 

% In our model, a data center that stores coded data across its servers is referred to as node. There are two kinds of nodes in our model, namely: a source node and a mirror node. A node that has already been established at a particular location is termed as the source node. A new node at a different location, that mirrors the contents of a source node is termed as the mirror node. We discuss how a node encodes the original data and stores the coded data across its servers, and discuss how a mirror node retrieves the original data from the source node.   
% In this section, we define original node as the existing data centre that stores the data in its servers in coded format. We define mirror node as the data centre of a different location that intends to mirror the contents of the original node. We discuss about how a mirror node can retrieve the original data from the original node via decoding, and then how the mirror node performs LT encoding to store the coded data in its servers.
\subsection{Slashing Storage Costs through LT Encoding}
\label{sec:Encoding}

% When generating the droplets of a full node, LT encoding is done as follows. First, a number $d \in [k]$ is chosen randomly from a suitable probability distribution. Subsequently, $d$ distinct input blocks are uniformly chosen from the $k$ blocks of an epoch, and a bit-wise XOR of the chosen blocks, denoted by $C$, is obtained. Along with $C$, a binary vector $\mathbf{v}$ of length $k$ is also generated, wherein the $b$-th entry is $1$ if the $b$-th input block is among the $d$ chosen blocks, else the $b$-th entry is $0$. This combination of $C$ and $\mathbf{v}$ constitute a droplet for an epoch. We assume that all the droplets of a droplet node that correspond to different epochs, will have the same degree and also the same selection of indexes of the input blocks. In the LT code terminology, the number $d$, which is referred to as the degree, is picked from the popular Robust Soliton Distribution (RSD) \citep[Definition 11]{LT_codes}. By choosing $d$ in a statistically independent manner, a total of $S$ droplet nodes, for $S > k$, are created at the full node along the similar lines. 

When generating the droplets of a full node, LT encoding is done as follows. First, a number $d \in [k]$ is chosen randomly from a suitable probability distribution. Subsequently, $d$ distinct input blocks are uniformly chosen from the $k$ blocks of an epoch, and a bit-wise XOR of the chosen blocks, denoted by $C$, is obtained. Henceforth, this collection of $d$ input blocks of an epoch that are XORed to obtain $C$ is referred to as the neighbours of the droplet $C$. Along with $C$, a binary vector $\mathbf{v}$ of length $k$ is also generated, wherein the $b$-th entry of $\mathbf{v}$ is $1$ if the $b$-th input block is among the $d$ chosen blocks, else the $b$-th entry is $0$. This combination of $C$ and $\mathbf{v}$ constitute a droplet for an epoch. An illustration of this procedure is given in Fig. \ref{fig:LT_Encoding}, for the case when $k=6$. 

\begin{figure}
  \includegraphics[height = 6.5cm, width=\linewidth]{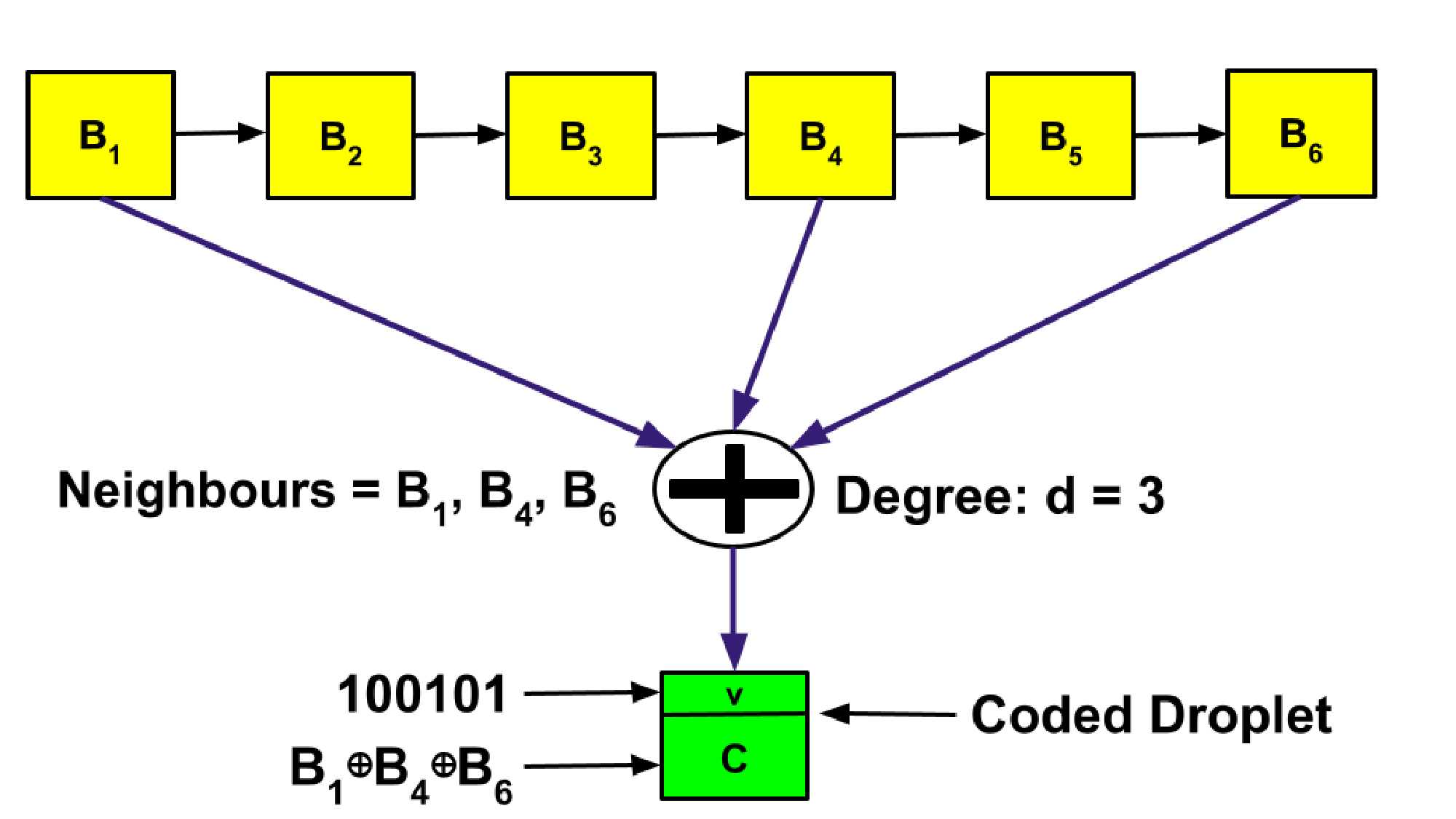}
  \caption{Depiction of generating a coded droplet from an epoch using LT encoding.}
  \label{fig:LT_Encoding}
\end{figure}

Furthermore, we assume that all the droplets of a droplet node that correspond to different epochs, will have the same degree and also the same set of neighbours.
%same selection of indexes of the input blocks. 
In the LT code terminology, the number $d$, which is referred to as the degree, is picked from the popular Robust Soliton Distribution (RSD) \citep[Definition 11]{LT_codes}, denoted by $\Omega_{RS}(\cdot)$ \textcolor{black}{which is defined for a given set of parameters $0<\delta<1$ and $c>0$, as}
\begin{equation}
    % \Omega_{RS}(d) = \frac{\rho(d)+\tau(d)}{\sum_{j=1}^{k} \rho(j)+\tau(j)} \quad \forall d = 1,\hdots,k , \nonumber
    \Omega_{RS}(d) = \frac{\rho(d)+\tau(d)}{\beta} \quad \forall d = 1,\hdots,k,
\end{equation}
where $\rho(\cdot)$ is the Ideal Soliton distribution \citep[Definition 9]{LT_codes} \textcolor{black}{defined as}
\begin{equation}
\rho(d)= 
\begin{cases}
   1/k & \text{for } d = 1\\
   \frac{1}{d(d-1)}  & \text{for } d = 2,\hdots,k ,\\ 
\end{cases}
\end{equation}
and $\tau(.)$ \textcolor{black}{is given by}

\begin{equation}
\tau(d)= 
\begin{cases}
   \frac{R}{dk} & \text{for } d = 1,\hdots,\frac{k}{R}-1\\
   \frac{R}{k}\cdot\ln{\left(\frac{R}{\delta}\right)}  & \text{for } d = \frac{k}{R}\\
   0 & \text{for } d = \frac{k}{R}+1,\hdots,k , 
\end{cases}
\end{equation}
in which $R = c\sqrt{k}\ln{\left(\frac{k}{\delta}\right)}$ and $\beta = \sum_{j=1}^{k} \rho(j)+\tau(j)$.

% By choosing $d$ in a statistically independent manner, a total of $S$ droplet nodes, for $S > k$, are created at the full node along the similar lines.
\vspace{5mm}

Using the above distribution, $d$ is chosen in a statistically independent manner and a total of $S$ droplet nodes, for $S > k$, are created at the full node along the similar lines.

\subsection{Blockchain Retrieval through Traditional LT Decoders}

A new storage-constrained node that wants to retrieve the blockchain from a full node is termed as a \emph{bucket node}. For the blockchain retrieval, the bucket node contacts a little more than $k$ droplet nodes (from the set of $S$ nodes), and downloads their droplets corresponding to all the epochs. To perform LT decoding, the bucket node can employ any one of the following traditional LT decoders, namely the BP, OFG and variants of the so-called hybrid decoders.

\begin{figure}
  \includegraphics[height = 6.5cm, width=\linewidth]{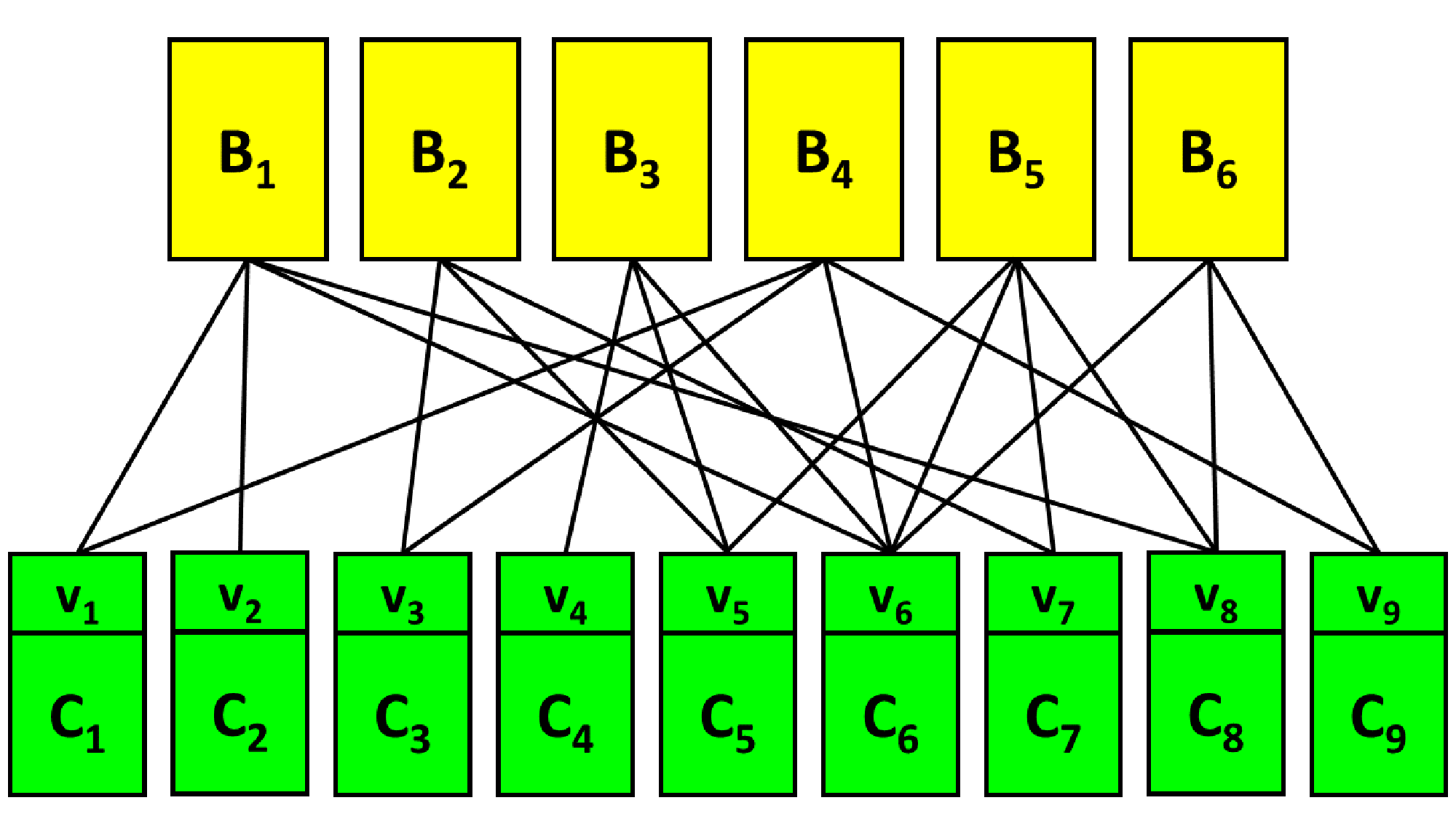}
  % \vspace{-0.9cm}
  \caption{Illustration of bipartite graph $\mathcal{T}$ formed using the coded droplets collected by a bucket node.}
  \label{fig:Bipartite_Graph}
\end{figure}

% \vspace{-1mm}
\subsubsection{Belief Propagation (BP) Decoder}

The bucket node randomly contacts $K \ge k$ droplet nodes, for some $K \in \mathbb{N}$, and downloads their droplets. Since all the droplets corresponding to a particular droplet node share the same set of neighbours, decoding procedure of one epoch explains the decoding procedure of the entire blockchain. The decoding procedure to recover an epoch is explained as follows:
% \item Let $H_{i}$ and $T_{i}$, for $i = 1,...,n^{*}$ denote the header and payload of all the $n$ downloaded droplets of a particular epoch. Compute the cryptographic hash of $T_{i}$ for $i = 1,...,n^{*}$ and compare it with its corresponding header $H_{i}$. If the calculated hash of $T_{i}$ matches with its corresponding $H_{i}$ then that particular droplet is declared as honest and the payload part of it $T_{i}$ is stored for decoding purpose. Otherwise the droplet is simply discarded.
\begin{enumerate}
\item The bucket node forms a bipartite graph $\mathcal{T}$ with the $k$ input blocks as top vertices and the $K$ droplets as bottom vertices. The input blocks are denoted by $\{B_{m}~|~ m \in [k]\}$, and the droplets are denoted by $\{C_{i}~|~ i \in [K]\}$. An edge connects a droplet $C_{i}$ to an input block $B_{m}$ if $B_{m}$ is used in computing $C_{i}$. Note that this information is available from the vector $\mathbf{v}_{i}$ associated with each $C_{i}$. This step is illustrated in Fig. \ref{fig:Bipartite_Graph}.
\item The bucket node finds a droplet $C_{i}$, for $i \in [K]$, that is connected to exactly one input block $B_{m}$ in $\mathcal{T}$. Such a droplet is called a singleton. If there are no singletons, the bucket node declares a decoding failure and terminates the process.
\item If $C_{i}$ is a singleton, the bucket node sets $\hat B_{m} = C_{i}$, where $\hat B_{m}$ denotes the $m$-th decoded block. For all the droplets $C_{i}$ connected to the decoded block $B_{m}$ in $\mathcal{T}$, it sets $C_{i} \leftarrow C_{i} \oplus B_{m}$ ($\oplus$ denotes bit-wise XOR) and modifies $\mathcal{T}$ by removing all the edges connected to block $B_{m}$.
\item If all the $k$ blocks are not recovered, the bucket node navigates to Step 2 to repeat the procedure.
\end{enumerate}
For more details on the BP decoding process, we refer the reader to \citep[Section 3.2.3]{SeF}. In our work, the BP decoder is said to experience decoding failure if all the $k$ blocks of an epoch cannot be recovered using the $K$ droplet nodes.

% If decoder cannot decode from $n^{*}$ droplets, download a few more, say $\hat{n}<<n^{*}$ droplets by contacting a few more servers of the source node, remove the contribution of already decoded blocks from the newly downloaded coded blocks, append these updated coded blocks as right vertices in $T$, and add an edge connecting $C_{i}$ to input symbol $B_{m}$ if $B_{m}$ is not decoded and is used in computing new $C_{i}$. Now go to Step (2) and continue the decoding process. If the singletons are again exhausted before decoding is completed and the mirror node is unable to find additional droplets, then declare decoding failure.

\subsubsection{On-the-Fly Gaussian Elimination (OFG) Decoder} \label{sec:OFG decoder}

The OFG decoder prepares a binary generator matrix $\mathbf{G} \in \{0,1\}^{K \times k}$, whose rows are the binary vectors $\{\mathbf{v}_{i}~|~ i \in [K]\}$ stored along with the droplets $\{C_{i}~|~ i \in [K]\}$. Subsequently, the decoder works on the principle of Gaussian Elimination (GE) decoding, which requires $\mathbf{G}$ to have rank $k$ for decoding the $k$ blocks of an epoch. The main idea here is to use $\mathbf{G}$ to obtain a sparse upper triangular matrix by deleting redundant equations on the fly. Once the sparse upper triangular matrix is ready, back-substitution can be performed to recover the epoch. For more details on the OFG process, we refer the reader to \citep[Algorithm 1]{OFG}. Note that the OFG decoder is said to experience decoding failure if the rank of $\mathbf{G}$ is less than $k$.

Other than the BP and OFG decoders, LT coded blockchains can also be retrieved through a class of hybrid decoders.

% \begin{algorithm}
% \caption{On the Fly Gaussian Elimination}
% \begin{algorithmic} 
% \STATE Initialize $k\times k$ matrix-$G$, k-vectors $Y$ and \emph{NumOnes} to 0
% \STATE Initialize \emph{EmptyRows} = $k$
% \While{\emph{EmptyRows} $>$ 0} 
% \STATE \quad receive $k$-vector \emph{NewEq} and droplet \emph{NewY}
% \end{algorithmic}
% \end{algorithm}

\subsubsection{Hybrid Decoder} \label{sec:Hybrid decoder}
 Similar to the BP and OFG decoders, the bucket node contacts $K \ge k$ droplet nodes and downloads the droplets stored in them. The idea of hybrid decoder, as described in \citep[Section \Romannum{3}]{Hybrid_decoding_idea}, is to start with the BP decoding process using the $K$ downloaded droplets and retrieve the input blocks of an epoch one by one. Once the BP part of the hybrid decoder exhausts its singletons before retrieving all the $k$ input blocks, the non-singletons remaining in the BP part are used by its OFG counterpart to retrieve the remaining blocks. This idea exploits the advantage of low-complexity decoding of BP decoder by retrieving as many blocks as possible using BP, and also exploits the advantage of the OFG decoder, which guarantees successful decoding by contacting fewer droplet nodes than the BP decoder. Note that the hybrid decoder is said to experience decoding failure if it is unable to recover all $k$ blocks of an epoch from the $K$ droplet nodes despite using its BP and OFG counterparts.

In all of the preceding discussions, the primary goal was to explain the decoding procedures of the various LT decoders for a given value of $K$. In particular, we do not consider the case of incremental decoding, in which the bucket node contacts additional droplet nodes if decoding fails after utilizing the first $K$ droplet nodes. Overall, we emphasize that a bucket node can employ one of the above variants of the decoders depending on its constraints on the computational complexity, the number of droplet nodes it can contact and its desired failure rate. In the next section, we present new variants of the hybrid decoder, and discuss their benefits over the traditional BP and the OFG decoders. 

% \vspace{-2mm}
\section{Advanced Hybrid Decoders and their Performance Measures}
% In this section, we propose two variants of the 
Towards proposing specific variants of the hybrid decoders, we define the following two main metrics, namely: \emph{bootstrap overhead} and \emph{computational complexity}. 

\begin{definition}[Bootstrap overhead]
    Bootstrap overhead is defined as the number of droplet nodes that a bucket node must contact in order to successfully decode an epoch.
\end{definition}

\begin{definition}[Computational complexity]
    Computational complexity is defined as the number of XOR operations performed between droplets to decode an epoch.
\end{definition}

% Here, bootstrap overhead is defined as the number of droplet nodes that a bucket node must contact in order to successfully decode an epoch. Similarly, computational complexity is defined as the number of XOR operations performed between droplets to decode an epoch. 
It is important to note that for a bucket node employing any of the aforementioned decoders, the necessary condition for successful decoding is that the rank of the binary generator matrix $\mathbf{G}$ obtained using the coefficients of the $K$ droplets must to be equal to $k$. According to the working principle of the OFG decoder, a bucket node can successfully decode the blockchain if this condition is met. On the other hand, the BP decoder declares success only when it iteratively finds $k$ singletons during its decoding process. Consequently, for a given failure rate, a bucket node using the BP decoder needs to contact more droplet nodes compared to using the OFG decoder. However, the OFG decoding process relies on Gaussian Elimination, which involves matrix inversion to solve a set of linear equations, making it computationally intensive. In contrast, the BP decoder avoids such heavy computations, making it more efficient in terms of computational complexity. Therefore, there is a clear trade-off between the bootstrap overhead and computational complexity for the BP and OFG decoders. 
% This is because the necessary condition for successful decoding for any decoder is that the rank of the binary generator matrix $\mathbf{G}$ has to be equal to $k$.
Hence, to optimally trade the bootstrap overhead with the computational complexity for bucket nodes, we propose the following two new variants of hybrid decoders.

% The two important metrics of any LT decoder are bootstrap overhead and computational complexity. In this context, bootstrap overhead is defined as the number of droplet nodes that a bucket node must contact in order to successfully decode an epoch. Similarly, computational complexity is defined as the number of XOR operations performed between droplets to decode an epoch. In this section, we propose two variants of the hybrid decoder, namely: the \emph{Bootstrap-Rigid Hybrid (BRH) Decoder} and the \emph{Complexity-Rigid Hybrid (CRH) Decoder}. We also present results on their bootstrap overhead and computational complexity, and compare their performance with traditional LT decoders such as BP and OFG decoders. 

%To establish a fair comparison between the decoders, we assume that the decoding failure rate of any decoder is upper bounded by $\delta$, where $0 < \delta < 1$.

\subsection{Bootstrap-Rigid Hybrid (BRH) Decoder}
% As the name suggests, BRH decoder is rigid in terms of bootstrap overhead. However, it is flexible in terms of computational complexity. 
With this decoder, the bucket node initially fixes on the number of droplet nodes that it contacts and it is unwilling to download from more droplet nodes. Let that fixed number of droplet nodes be $K$, for some $K > k$. Then, it executes the hybrid decoding method as explained in Section \ref{sec:Hybrid decoder}. Note that the BRH decoder focuses solely on the number of droplet nodes it contacts and does not consider what fraction of the epoch is being recovered by its BP and OFG counterparts. Consequently, the BRH decoder is rigid in terms of bootstrap overhead and flexible in terms of computational complexity. It is worth mentioning that the OFG decoder can be a special case of BRH decoder if the BRH decoder decides to recover the epoch exclusively using its OFG counterpart. This situation arises when the BRH decoder cannot find a singleton among the $K$ collected droplets. Given the decoder's flexibility in computational complexity, it will experience decoding failure only if the rank of $\mathbf{G}$ is less than $k$, where $\mathbf{G}\in \{0,1\}^{K \times k}$ is the binary generator matrix as explained in Section \ref{sec:OFG decoder}.
% With this decoder, the bucket node contacts $K\ge k$ droplet nodes and executes the hybrid decoding method as explained in Section \ref{sec:Hybrid decoder}. However, the BRH decoder is particular about the number of droplet nodes it contacts and not concerned about what fraction of the epoch is getting recovered by its BP and OFG counterparts. As a result, the BRH decoder is rigid in terms of bootstrap overhead and flexible in terms of computational complexity. It is worth mentioning that the OFG decoder is a special case of BRH decoder because the decoder can decide to recover the epoch using only the OFG counterpart since it is flexible in terms of computational complexity. However, given that the number of contacted droplet nodes is rigid, whether the BRH decoder can successfully recover the blockchain or not, depends on the initial choice on the number of

\subsection{Complexity-Rigid Hybrid (CRH) Decoder}

Unlike the BRH decoder, the CRH decoder can contact up to $S$ droplet nodes, where $S$ is the total number of droplet nodes available at a full node. On the other hand, a bucket node employing the CRH decoder fixes on the minimum number of blocks of an epoch that must be retrieved exclusively by its BP component in order to restrict its computational complexity. Let that minimum number be denoted by $\eta_{c}$, such that $0\le\eta_{c}\le k$. Then the bucket node first contacts $K_{init}$ droplet nodes, where $k<K_{init}<S$, and performs the hybrid decoding technique described in Section \ref{sec:Hybrid decoder}. Once the target $\eta_{c}$ is met using $K_{init}$ droplet nodes, the rest of the blocks can be decoded either using the BP component or the OFG component. It is important to note that the bucket node iteratively contacts more droplet nodes in a sequential manner and then conducts the hybrid decoding technique, in either of the following cases: (i) when a minimum of $\eta_{c}$ blocks cannot be decoded by the BP component, (ii) when the residual $k - \eta_{c}$ blocks cannot be decoded by the OFG component after successfully decoding the first $\eta_{c}$ blocks using the BP component. If either of the aforementioned scenarios occurs despite the bucket node utilizing all $S$ droplet nodes for decoding, the CRH decoder will experience a decoding failure.

\subsection{Performance Analysis}

Now that we have defined the operations of various LT decoders, in this section we derive their respective bootstrap overheads and computational complexities. We then use these metrics to compare their performances. 

\begin{proposition}\thlabel{Overhead_BP}
The average bootstrap overhead of a bucket node employing the BP decoder, to successfully decode an epoch with a probability greater than $1-\delta$ is given by $K_{BP} = k + c\sqrt{k}\ln^2{\left(\frac{k}{\delta}\right)}$, \textcolor{black}{where $c>0$ and $0<\delta<1$ are parameters of RSD.}
% (bootstrap overhead for epoch recovery): For a mirror node employing the BP decoder to retrieve an epoch with a probability  contacts at least $K_{BP} = k + c\sqrt{k}\ln^2{(k/\delta)}$ servers of a source node and downloads the $K_{BP}$ droplets of a particular epoch, then the probability that the mirror node fails to retrieve the epoch is at most $\delta$. 
\end{proposition}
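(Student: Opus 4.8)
The plan is to lean directly on Luby's original analysis of the Robust Soliton Distribution rather than to re-derive it. The quantity $K_{BP}$ claimed here is exactly the classical decoding-overhead bound for LT codes under BP decoding: the RSD was engineered so that collecting $K = \sum_{d} (\rho(d)+\tau(d))\cdot(\text{normalization}) = k\beta$ droplets suffices for the BP ripple process to recover all $k$ input symbols with probability at least $1-\delta$. So the first step is to recall (from \citep{LT_codes}) that the number of encoded symbols required is $K = k\beta$, where $\beta = \sum_{j=1}^{k}\bigl(\rho(j)+\tau(j)\bigr)$ is the normalizing constant of $\Omega_{RS}$ defined in Section~\ref{sec:Encoding}.

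Next I would evaluate $\beta$ asymptotically. Since $\sum_{j=1}^{k}\rho(j) = 1$ (the Ideal Soliton distribution is already normalized up to the $1/k$ truncation, contributing $1 + O(1/k)$), the whole overhead beyond $k$ comes from the $\tau$ terms. Summing $\tau$: the first branch contributes $\sum_{d=1}^{k/R-1}\frac{R}{dk} = \frac{R}{k}\sum_{d=1}^{k/R-1}\frac{1}{d} \approx \frac{R}{k}\ln\!\bigl(\tfrac{k}{R}\bigr)$, and the single spike at $d=k/R$ contributes $\frac{R}{k}\ln\!\bigl(\tfrac{R}{\delta}\bigr)$. Adding these two pieces gives $\sum_d \tau(d) \approx \frac{R}{k}\ln\!\bigl(\tfrac{k}{\delta}\bigr)$. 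Therefore $\beta \approx 1 + \frac{R}{k}\ln\!\bigl(\tfrac{k}{\delta}\bigr)$, and substituting $R = c\sqrt{k}\,\ln\!\bigl(\tfrac{k}{\delta}\bigr)$ yields $k\beta \approx k + c\sqrt{k}\,\ln^{2}\!\bigl(\tfrac{k}{\delta}\bigr)$, which is precisely $K_{BP}$.

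Finally I would tie the two ends together: invoking Luby's theorem, with $K = k\beta$ droplets the BP decoder recovers the epoch with probability at least $1-\delta$, so the average (indeed worst-case sufficient) bootstrap overhead is $K_{BP} = k + c\sqrt{k}\,\ln^{2}\!\bigl(\tfrac{k}{\delta}\bigr)$ as claimed. I expect the only real subtlety to be bookkeeping in the $\ln(k/R)$ versus $\ln(k/\delta)$ step of the $\tau$-sum — one must be slightly careful that $\ln(k/R) + \ln(R/\delta) = \ln(k/\delta)$ exactly, so the two $\tau$ contributions combine cleanly, and that the $\ln(1/d)$ harmonic sum is dominated by $\ln(k/R)$ to leading order. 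The $\rho$-sum and the normalization are routine, so the harmonic-sum estimate for $\tau$ is the main (mild) obstacle; everything else is a direct appeal to the established RSD overhead guarantee.
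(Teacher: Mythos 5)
Your proposal is correct and follows essentially the same route as the paper: both rest on Luby's Theorem 12 guarantee that the RSD overhead $K = k\beta$ (equivalently $k + R\,H(k/R) + R\ln(R/\delta)$) suffices for BP success with probability at least $1-\delta$, then approximate the harmonic sum by $\ln(k/R)$, combine $\ln(k/R)+\ln(R/\delta)=\ln(k/\delta)$, and substitute $R = c\sqrt{k}\ln(k/\delta)$. Your only cosmetic difference is that you expand $k\beta$ from the $\rho$ and $\tau$ sums yourself rather than quoting the packaged bound, and you omit the paper's (trivial) remark that droplets per epoch map one-to-one to droplet nodes, which is what identifies this count with the bootstrap overhead.
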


\begin{proof}
% The results follow from \citep[Theorem 12]{LT_codes} since one droplet node stores one droplet per epoch.
% \begin{IEEEeqnarray}{rcl} \label{eqn:Upper bound bootstrap cost BP}
% K_{BP} & \quad \le \quad & k + R\cdot H(k/R) + R\cdot \ln{(R/\delta)} \nonumber\\ 
% & \quad = \quad & k + R\cdot \ln{(k/\delta)}\nonumber \\
% & \quad = \quad & k + c\sqrt{k}\ln^2{(k/\delta)}. \nonumber
% \end{IEEEeqnarray}
% \end{proof}
The bootstrap overhead of a bucket node is equal to the number of droplets needed to decode an epoch, as each droplet node stores one droplet per epoch and all the droplets of a droplet node corresponding to different epochs have the same degree and also the same selection of indexes of the input blocks.
% As one droplet node stores one droplet per epoch, and all the droplets of a droplet node that correspond to different epochs have the same degree and also the same selection of indexes of the
% input blocks, bootstrap overhead of a bucket node is equivalent to number of droplets required by the bucket node to decode an epoch. 
Therefore from \citep[Theorem 12]{LT_codes}, the average bootstrap overhead is given by 
% Let $H(n)$ be the Harmonic number, which is defined as $H(n) = \sum_{k=1}^{n} \frac{1}{k}$. Then, from \citep[Theorem 12]{LT_codes}, we have:
% since one droplet node stores one droplet per epoch.
\begin{equation} \label{eqn:Upper bound bootstrap cost BP}
K_{BP} \le k + R\cdot H\left(\frac{k}{R}\right) + R\cdot \ln{\left(\frac{R}{\delta}\right)},
% & \quad = \quad & k + R\cdot \ln{(k/\delta)}\nonumber \\
% & \quad = \quad & k + c\sqrt{k}\ln^2{(k/\delta)}. \nonumber
\end{equation}
where $H(n)$ is the Harmonic number given by $H(n) = \sum_{k=1}^{n} \frac{1}{k}$.
Now, we approximate $H\left(\frac{k}{R}\right)$ to $\ln{\left(\frac{k}{R}\right)}$ and substitute the value of $R = c\sqrt{k}\ln{\left(\frac{k}{\delta}\right)}$ in \eqref{eqn:Upper bound bootstrap cost BP} to get the average bootstrap overhead.
\end{proof}
% \begin{equation}
    % K_{BP} = k + c\sqrt{k}\ln^2{\left(\frac{k}{\delta}\right)}.
% \end{equation}
% \end{proof}

\begin{proposition}\thlabel{Computation_BP}
The average computational complexity incurred by the BP decoder for decoding an epoch is given by $C_{BP} = k\cdot \left(\ln\left(\frac{c\cdot k^\frac{3}{2}\cdot \ln(k/\delta)}{\delta}\right) + 1.577\right)$.
\end{proposition}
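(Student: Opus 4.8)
The plan is to count the expected number of XOR operations performed during BP decoding of a single epoch. The key observation from the BP decoding procedure in Section 2 is that each XOR operation corresponds to an edge of the bipartite graph $\mathcal{T}$ being processed: whenever a block $B_m$ is recovered from a singleton, every remaining droplet $C_i$ adjacent to $B_m$ is updated via $C_i \leftarrow C_i \oplus B_m$, contributing one XOR per such edge. Therefore, up to lower-order effects, the total computational complexity is essentially the total number of edges in $\mathcal{T}$, which equals the sum of the degrees of the $K_{BP}$ droplets collected by the bucket node. First I would write the expected complexity as $C_{BP} \approx K_{BP} \cdot \bar{d}$, where $\bar{d} = \sum_{d=1}^{k} d\,\Omega_{RS}(d)$ is the average degree under the Robust Soliton Distribution.

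Next I would estimate $\bar{d}$. Splitting $\Omega_{RS}$ into its Ideal Soliton part $\rho$ and the correction part $\tau$ (divided by the normalizing constant $\beta$), the dominant contribution to $\sum_d d\,\rho(d)$ is $\sum_{d=2}^{k} \frac{1}{d-1} \approx H(k-1) \approx \ln k + \gamma$, where $\gamma \approx 0.5772$ is the Euler--Mascheroni constant; the $\tau$ terms contribute $\sum_{d=1}^{k/R-1}\frac{R}{k} + \frac{R}{k}\ln(R/\delta) \approx 1$, which is lower order, and $\beta \to 1$ asymptotically. Hence $\bar{d} \approx \ln k + \gamma + (\text{lower order})$. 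Substituting $K_{BP} = k + c\sqrt{k}\ln^2(k/\delta) \approx k$ to leading order, I would obtain $C_{BP} \approx k\,(\ln k + \text{const})$. To match the stated closed form, I would rewrite $\ln k + \text{const}$ by absorbing the $\sqrt{k}$ and $\ln(k/\delta)$ factors appearing inside the logarithm: note $\ln\!\left(\frac{c\,k^{3/2}\ln(k/\delta)}{\delta}\right) = \tfrac{3}{2}\ln k + \ln c + \ln\ln(k/\delta) + \ln(1/\delta)$, and I would argue (as the paper implicitly does) that this is the natural grouping arising from tracking $R$ through the computation, with the residual additive constant collected as $1.577 \approx 1 + \gamma$ where the extra $1$ comes from the $\tau$ contribution to the average degree.

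The main obstacle I anticipate is making the edge-counting argument precise: strictly speaking, the number of XOR operations is the number of edges \emph{remaining at the time each block is decoded}, which is at most the total edge count but could be smaller since edges incident to already-decoded blocks are removed; I would argue that since each edge is XOR-processed at most once, the total edge count $K_{BP}\bar d$ is a valid (and tight up to constants) estimate. A secondary subtlety is justifying the specific way the $\sqrt{k}$ and $\ln(k/\delta)$ factors get folded inside the logarithm in the final expression — this appears to be a presentational choice reflecting the appearance of $R = c\sqrt k \ln(k/\delta)$ in the analysis rather than a consequence of a cleaner asymptotic identity, so I would present it as such and verify the additive constant $1.577$ numerically against $1 + \gamma$.
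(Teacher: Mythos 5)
Your overall route is the same as the paper's: each XOR corresponds to the removal of one edge of the bipartite graph $\mathcal{T}$, so the average complexity is (average bootstrap overhead) $\times$ (average droplet degree). The paper simply invokes Theorem 13 of Luby's LT-codes paper for the average degree, $D = \frac{1}{\beta}\bigl(\ln\bigl(\frac{c\,k^{3/2}\ln(k/\delta)}{\delta}\bigr)+1.577\bigr)$, and concludes $C_{BP}=K_{BP}\,D$ (the $1/\beta$ cancels against $K_{BP}\approx k\beta$, which is where the leading factor $k$ in the statement comes from), whereas you attempt to compute the mean degree from the RSD directly.

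That direct computation, however, contains a genuine error which then forces an incorrect reading of the final formula. The mean degree is $\frac{1}{\beta}\sum_d d\,(\rho(d)+\tau(d))$; for the spike of $\tau$ at $d=k/R$ you must multiply $\tau(k/R)=\frac{R}{k}\ln(R/\delta)$ by $d=k/R$, giving a contribution of $\ln(R/\delta)$, but you wrote $\frac{R}{k}\ln(R/\delta)$ (omitting the factor $d$) and dismissed the whole $\tau$ part as lower order. That dropped term is precisely the dominant source of the $\delta$- and $R$-dependence: combining $\sum_{d\ge 2}\frac{1}{d-1}\approx\ln k+\gamma$, the flat part of $\tau$ (contributing $\approx 1$), and the spike's $\ln(R/\delta)$ gives $\ln k+\ln(R/\delta)+1+\gamma=\ln\bigl(\frac{c\,k^{3/2}\ln(k/\delta)}{\delta}\bigr)+1.577$, i.e., exactly the closed form in the proposition. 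So the appearance of $k^{3/2}$, $\ln(k/\delta)$ and $1/\delta$ inside the logarithm is not a presentational regrouping of $\ln k+\mathrm{const}$, as your last step suggests; it follows from the spike term you omitted, and your argument as written would only establish the weaker, $\delta$-independent estimate $C_{BP}\approx k(\ln k+\mathrm{const})$. Your precaution about edge counting (each edge is XOR-processed at most once) is consistent with the paper, which likewise equates the XOR count with the total edge count of $\mathcal{T}$.
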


\begin{proof}
% The computational complexity for decoding an epoch can be determined by the number of edges in the bipartite graph $\mathcal{T}$, generated at the start of BP decoding \citep[Appendix B]{SeF}. This is under the assumption that BP decoder uses all the edges of $\mathcal{T}$ for decoding an epoch. Moreover, the number of XOR operations to recover the epoch is equivalent to the number of edges utilised by the BP decoder. Therefore, the average number of edges is obtained as the product of the bootstrap overhead and average degree of a droplet. Let D be the average degree of a droplet. From \citep[Theorem 13]{LT_codes}, it is straightforward to show that
The computational complexity for decoding an epoch can be determined by the number of edges in the bipartite graph $\mathcal{T}$, generated at the start of the BP decoder \citep[Appendix B]{SeF}. This is because, each XOR operation between two droplets corresponds to removal of an edge in the bipartite graph. Subsequently, all the edges of the bipartite graph will be removed at the end of BP decoding process, thereby resulting in as many number of XOR operations. It is important to note that the number of edges in the bipartite graph is random in nature, by the virtue of the random selection of the droplet nodes. Therefore, we calculate its mean value to determine the average computational complexity. In this regard, the average number of edges can be approximated as the product of average bootstrap overhead and the average degree of a droplet. In particular, if $D$ is the average degree of a droplet, \textcolor{black}{according to \citep[Theorem 13]{LT_codes}, we have:}
% it can be shown from \citep[Theorem 13]{LT_codes} that
% \begin{IEEEeqnarray} \label{eqn:D}
% $$D = \frac{1}{\beta}\cdot \left(\ln\left(\frac{c\cdot k^\frac{3}{2}\cdot \ln(k/\delta)}{\delta}\right) + 1.577\right),$$
% \end{IEEEeqnarray}
% where $c$, $\delta$ and $\beta$ are parameters of RSD.
% Also, we have $K_{BP} = k\beta$ \citep[Theorem 12]{LT_codes} and then $C_{BP} = K_{BP} D$.
% Now, we can calculate the average computational complexity as $C_{BP} = K_{BP} D$.

\textcolor{black}{\begin{equation}\label{Average_degree}
D = \frac{\sum_{i=1}^{k} i\cdot(\rho(i)+\tau(i))}{\beta}.
\end{equation}}
\textcolor{black}{Recalling the expressions for $\rho(i)$, $\tau(i)$, and $\beta$ as given in Section \ref{sec:Encoding}, and substituting them into \eqref{Average_degree}, we obtain}

\textcolor{black}{\begin{equation}\label{Degree_simplified}
    D = \frac{1}{\beta}\left(\frac{1}{k}+\sum_{i=2}^{k}\frac{1}{i-1}+\sum_{i=1}^{\frac{k}{R}-1}\frac{R}{k}+\ln\left(\frac{R}{\delta}\right)\right).
\end{equation}}

\textcolor{black}{Further, \eqref{Degree_simplified} can be simplified as follows}

\textcolor{black}{\begin{equation}\label{Degree_simplified_2}
    D = \frac{1}{\beta}\left(\sum_{i=2}^{k+1}\frac{1}{i-1}+\sum_{i=1}^{\frac{k}{R}-1}\frac{R}{k}+\ln\left(\frac{R}{\delta}\right)\right).
\end{equation}}

\textcolor{black}{Finally, \eqref{Degree_simplified_2} can be further approximated as follows}
\textcolor{black}{\begin{equation}\label{Degree_approximated}
    D \approx \frac{1}{\beta}\left(H(k)+1+\ln\left(\frac{R}{\delta}\right)\right),
\end{equation}}
\noindent \textcolor{black}{where $H(k)$ is the Harmonic number given by $H(k) = \sum_{n=1}^{k} \frac{1}{n}$. According to \citep{Harmonic_number}, for large $k$, the harmonic number can be approximated as $H(k) \approx \ln(k)+\gamma$, where $\gamma \approx 0.577$ is the Euler–Mascheroni constant. Now, substituting $R = c\sqrt{k}\ln{\left(\frac{k}{\delta}\right)}$ and $H(k) = \ln(k)+0.577$ in \eqref{Degree_approximated}, we obtain the average degree of a droplet as:}

\textcolor{black}{\begin{equation} \label{eqn:D}
D = \frac{1}{\beta}\cdot \left(\ln\left(\frac{c\cdot k^\frac{3}{2}\cdot \ln(k/\delta)}{\delta}\right) + 1.577\right).
\end{equation}}

\textcolor{black}{According to \citep[Theorem 12]{LT_codes}, $K_{BP}=k\beta$. With this, we arrive at the average computational complexity as $C_{BP} = K_{BP}D$.}

\end{proof}

The above two results present closed-form expressions on the average bootstrap overhead and average computational complexity of the BP decoder. \textcolor{black}{Given that the performance of the BP decoder is closely tied with the parameters of the RSD, we observe from \eqref{eqn:Upper bound bootstrap cost BP} and Proposition \ref{Computation_BP} that the bootstrap overhead and the average computational complexity of the BP decoder are inversely proportional to the allowable decoding failure probability $\delta$. Along similar lines, we also observe that both these metrics are directly proportional to the RSD constant $c$.}
% For an OFG decoder, it is known that decoding fails as long as rank of the generator matrix $\mathbf{G}$ of the LT code is less than $k$. 

Now, to derive the bootstrap overhead of OFG decoder, we must investigate the relationship between its probability of decoding failure and the number of downloaded droplets per epoch. To this end, authors of \citep{upper_bound} have established an upper bound on the decoding failure probability, denoted by ${P_{F}}$, for an LT code whose degree follow any arbitrary degree distribution. As a result, for binary LT codes whose degrees follow RSD, the upper bound in \citep[Theorem 1]{upper_bound} can be modified as follows:

\vspace{-3mm}
\begin{flalign} \label{eq:Upper bound of data erasure probability}
{P_{F}} &=  \sum_{w=1}^{k}{k \choose w}&&\\\nonumber
&\cdot\left[\frac{1}{2}\textcolor{black}{\sum_{d=1}^{k}}\Omega_{RS}(d)\frac{\sum_{l=0}^{d}{w \choose l}{k-w \choose d-l}\left[1-(-1)^{1-l}\right]}{{k \choose d}}\right]^{k(1+\epsilon)},&&
\end{flalign}
where \textcolor{black}{$\epsilon>0$} is the fractional overhead, defined as the ratio of excess droplets (in addition to $k$) and $k$. With that, an upper bound on the bootstrap overhead of the OFG decoder is given in \thref{Overhead_OFG}.

\begin{theorem}\thlabel{Overhead_OFG}
Let $\epsilon_{min}$ = $\frac{m}{k}$ where $m\in\mathbb{N}$ such that $P_{F}|_{\epsilon=\epsilon_{min}}\le\delta$, then the bootstrap overhead of OFG decoder is upper bounded by $K_{OFG} = k(1+\epsilon_{min})$.
% \begin{IEEEeqnarray}{rcl} \label{eqn: bootstrap overhead of OFG}
% K_{OFG} = k(1+\epsilon_{min})
% \end{IEEEeqnarray}
\end{theorem}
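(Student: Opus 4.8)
The plan is to establish the claim almost directly from the upper bound on the decoding failure probability $P_F$ given in \eqref{eq:Upper bound of data erasure probability}, together with the working principle of the OFG decoder stated in Section \ref{sec:OFG decoder}. Recall that the OFG decoder succeeds precisely when the binary generator matrix $\mathbf{G} \in \{0,1\}^{K\times k}$ has rank $k$, so its decoding failure event coincides with the event $\{\mathrm{rank}(\mathbf{G}) < k\}$. The quantity $P_F$ in \eqref{eq:Upper bound of data erasure probability} is exactly an upper bound on the probability of this rank deficiency when the bucket node downloads $K = k(1+\epsilon)$ droplets whose degrees are drawn i.i.d.\ from $\Omega_{RS}(\cdot)$. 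Hence, for any fractional overhead $\epsilon$, the probability that the OFG decoder fails is at most $P_F$ evaluated at that $\epsilon$.

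First I would observe that $P_F$, as a function of $\epsilon$, is monotonically decreasing: each bracketed term in \eqref{eq:Upper bound of data erasure probability} is a probability lying in $[0,1]$ (it is the probability, for a random degree-$d$ droplet, of having an even-sized intersection with a fixed nonempty set of $w$ input blocks), and raising a number in $[0,1]$ to the larger power $k(1+\epsilon)$ only decreases it. Therefore the set $\{\epsilon : P_F|_\epsilon \le \delta\}$ is an up-set, and among the admissible values of the form $\epsilon = m/k$ with $m \in \mathbb{N}$, there is a smallest one, which is exactly the $\epsilon_{min}$ defined in the statement. Next I would argue that choosing $K_{OFG} = k(1+\epsilon_{min})$ suffices: by construction $P_F|_{\epsilon=\epsilon_{min}} \le \delta$, so the OFG decoder with $K_{OFG}$ droplets succeeds with probability at least $1-\delta$, i.e.\ $K_{OFG}$ is a valid bootstrap overhead achieving the target failure rate. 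Since $\epsilon_{min}$ is the minimal such overhead for which the \emph{bound} \eqref{eq:Upper bound of data erasure probability} certifies the $\delta$-guarantee, and the true bootstrap overhead needed can only be no larger (the bound may be loose), we conclude that the bootstrap overhead of the OFG decoder is upper bounded by $K_{OFG} = k(1+\epsilon_{min})$, which is the assertion of \thref{Overhead_OFG}.

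The remaining bookkeeping step is to connect ``number of droplets'' with ``bootstrap overhead'' as defined earlier: since each droplet node stores exactly one droplet per epoch and all droplets of a given droplet node share the same degree and the same neighbour set, downloading from $K_{OFG}$ droplet nodes yields $K_{OFG}$ usable droplets per epoch, so the bootstrap overhead (number of droplet nodes contacted) equals the number of droplets used. This is the same reasoning already invoked in the proof of \thref{Overhead_BP}, so it can be stated briefly.

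The main obstacle I anticipate is not the logical skeleton but justifying the use of \eqref{eq:Upper bound of data erasure probability} as a genuine bound on the OFG failure probability under RSD: one must be careful that the bound from \citep[Theorem 1]{upper_bound}, originally stated for an arbitrary degree distribution, has been correctly specialized by plugging in $\Omega_{RS}$ and that the exponent $k(1+\epsilon)$ indeed corresponds to the number of received droplets. A secondary subtlety is the well-definedness of $\epsilon_{min}$ --- strictly one should note that $P_F \to 0$ as $\epsilon \to \infty$ (each base is strictly less than $1$ whenever the corresponding intersection-parity probability is, and the $w$-sum is finite), so the set of feasible $m$ is nonempty and has a least element. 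Beyond these points, the argument is essentially a monotonicity-plus-definition-unwinding exercise and requires no heavy computation.
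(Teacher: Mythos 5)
Your proposal is correct and takes essentially the same route as the paper's proof: argue that $P_{F}$ in \eqref{eq:Upper bound of data erasure probability} is decreasing in $\epsilon$, take the minimal admissible $\epsilon_{min}$ with $P_{F}\le\delta$, and conclude that $k(1+\epsilon_{min})$ upper bounds the bootstrap overhead of the OFG decoder. The additional details you supply (why each bracketed factor lies in $[0,1]$, the well-definedness of $\epsilon_{min}$, and the droplet-node-to-droplet bookkeeping) merely make explicit what the paper leaves implicit.
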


\begin{proof}
% Since OFG decoding is based on the principle of ML decoding, upper bound of the data erasure probability given in \eqref{eq:Upper bound of data erasure probability}, follows.
From \eqref{eq:Upper bound of data erasure probability}, ${P_{F}}$ is clearly a decreasing function of $\epsilon$. The minimum value of $\epsilon$ (denoted by $\epsilon_{min}$) that satisfies the condition ${P_{F}} \leq \delta$, gives the minimum fractional overhead for an allowable probability of decoding failure $\delta$. Therefore, $k(1+\epsilon_{min})$ is an upper bound on the bootstrap overhead of the OFG decoder.   
\end{proof}

\begin{theorem}\thlabel{Computation_OFG}
Computational complexity of OFG decoder, denoted by $C_{OFG}$ is upper bounded by $k^2$. 
\end{theorem}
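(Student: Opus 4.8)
The plan is to bound the number of XOR operations performed during the OFG decoding procedure by separately accounting for the two phases of the algorithm, namely the forward elimination that builds the sparse upper-triangular matrix and the back-substitution that recovers the epoch. First I would recall that the OFG decoder operates on the binary generator matrix $\mathbf{G}\in\{0,1\}^{K\times k}$, and that the necessary condition for decoding success is $\mathrm{rank}(\mathbf{G})=k$. Since a row is retained only if it increases the rank, the decoder effectively keeps exactly $k$ linearly independent rows, discarding the remaining $K-k$ rows on the fly; the discarded rows contribute no XORs to the recovery, so the dominant cost comes from manipulating a $k\times k$ invertible submatrix.

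Next I would count the operations in the two phases. In the forward phase, processing each incoming row against the pivots already stored costs at most one XOR per existing pivot row, so over the course of inserting $k$ rows the total number of droplet XORs is at most $1+2+\cdots+(k-1)=\binom{k}{2}$. In the back-substitution phase, once the upper-triangular system is in place, recovering block $m$ from the $m$-th pivot equation requires XOR-ing out the already-decoded blocks appearing in that equation, which is again at most $k-1$ XORs per block, for a total of at most $\binom{k}{2}$. Adding the two contributions gives roughly $k(k-1) \le k^2$ droplet-level XOR operations, which is the claimed bound $C_{OFG}\le k^2$.

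The main obstacle — or rather the point that needs the most care — is making precise what counts as a single XOR operation in the definition of computational complexity (Definition of computational complexity: the number of XOR operations performed between droplets), and ensuring that the "sparse" nature of the upper-triangular matrix emphasized in Section \ref{sec:OFG decoder} is not actually needed to obtain the $k^2$ bound: the bound should follow from the worst case of dense $k\times k$ Gaussian elimination, with sparsity only improving the constant in practice. I would therefore phrase the argument as a worst-case upper bound, noting that each of the two phases touches at most the $\binom{k}{2}$ off-diagonal positions of the triangular factor once, so the sum is at most $k(k-1)<k^2$, and hence $C_{OFG}$ is upper bounded by $k^2$. A brief remark can then point out that, empirically, the deleted-equations-on-the-fly strategy and the sparsity of $\mathbf{G}$ (degrees drawn from RSD) make the realized complexity considerably smaller, but the clean $k^2$ figure suffices for the comparison with $C_{BP}$.
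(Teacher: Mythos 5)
Your proposal is correct and follows essentially the same route as the paper: bound the forward OFG phase by a triangular sum over the rows entering the partially filled $k\times k$ matrix, bound back-substitution by another triangular sum, and add to get at most $k^2$. The only difference is a marginal bookkeeping detail (you allow $i-1$ XORs for the $i$-th inserted row where the paper allows $i$, giving $k(k-1)$ versus exactly $k^2$), which does not change the argument or the claimed bound.
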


\begin{proof}
Computational complexity of the OFG decoder is evaluated as the sum of number of XOR operations performed by the OFG algorithm and back substitution, which together completes the decoding process. In the OFG algorithm, the maximum number of XOR operations associated with the $i^{th}$ droplet, whose binary coefficient entering the partially filled $k\times k$ matrix, is $i$. Hence, to fill the $k$ rows of the matrix, the number of XOR operations performed by the decoder is upper bounded by $\sum_{i=1}^{k} i = \frac{k(k+1)}{2}$. 
In the back substitution method, the number of XOR operations performed by the decoder at the $i^{th}$ iteration is $k-i$, wherein $i \in [k]$. Therefore, the total number of XOR operations performed at this stage is $\sum_{i=1}^{k} (k-i) = \frac{k(k-1)}{2}$. Adding the number of XOR operations associated with the OFG algorithm as well as the back substitution, we get an upper bound on the computational complexity as $C_{OFG}=k^{2}$.
% Hence, the computational complexity of OFG part of the decoder is upper bounded by $\frac{k(k+1)}{2} + \frac{k(k-1)}{2} = k^{2}$. 
\end{proof}

The above two results present the closed form expressions on the bootstrap overhead and computational complexity of the OFG decoder. 

Next, for the BRH decoder, it is clear that its bootstrap overhead, denoted by $K_{BR}$, is in the range $K_{OFG} \le K_{BR} \le K_{BP}$. Furthermore, it is also clear that its decoding failure rate is upper bounded by that of the OFG decoder, as $K_{BR}$ is atleast $K_{OFG}$. 
%denoted the minimum number of droplets required to start the decoding process is $K_{OFG}$. However, the maximum number of droplets is $K_{BP}^{*}$, because this many symbols are adequate for the BP decoder to decode with least decoding-complexity, assuring a decoding failure rate of less than $\delta$. As a hybrid decoder is going to have at least $K_{OFG}$ droplets,  
When implementing the BRH decoder, choosing values of $K_{BR}$ closer to $K_{OFG}$ leads to achieve a low bootstrap overhead, however, it compromises on computational complexity. The behaviour is vice-versa for choosing $K_{BR}$ closer to $K_{BP}$. Let $\eta_{B}$ be the number of input blocks of an epoch recovered by the BP component of the BRH decoder using $K_{BR}$ droplets. Since $\eta_{B}$ is a discrete random variable that takes values from $0$ to $k$, we can numerically evaluate the average value of $\eta_{B}$, denoted by $\EX[\eta_{B}]$ for each value of $K_{BR}$. The remaining $k-\eta_{B}$ blocks will be recovered using the OFG component. Hence, the average computational complexity of the OFG component of the BRH decoder, denoted by $\EX[(k-\eta_{B})^2]$ can also be evaluated for each value of $K_{BR}$. With that, the average computational complexity of BRH decoder is given below.

\begin{theorem}\thlabel{Computation_hybrid}
The average computational complexity of the BRH decoder is given by $C_{BR} = K_{BR}\cdot D\cdot \frac{\EX[\eta_{B}]}{k} + \EX[(k-\eta_{B})^2]$, \textcolor{black}{in which $K_{OFG}\le K_{BR}\le K_{BP}$ and $\eta_B \in \{0, 1, 2, \dots, k\}, \ \text{where } k \in \mathbb{Z}^+$.} 
\end{theorem}

\begin{proof}
The average computational complexity of BP part of the BRH decoder can be regarded as the average number of edges of the bipartite graph utilised by the decoder for recovering $\EX[\eta_{B}]$ blocks. From the complexity results on the BP decoder, we know that the average number of edges of the bipartite graph in the beginning of decoding is $K_{BR} D$. Furthermore, with the assumption that the number of edges originating from each input symbol is identical owing to uniform selection of input symbols in the encoding process, the average number of edges utilised by the BP component of the BRH decoder is $K_{BR} \cdot D \cdot \frac{\EX[\eta_{B}]}{k}$.
Furthermore, we have the average computational complexity of the OFG component of the BRH decoder as $\EX[(k-\eta_{B})^2]$ from the previous result. 
% Now, as the number of input symbols remaining to be decoded is $k-\eta_{B}$, the average computation-complexity of the OFG component of the BRH decoder is $\EX[(k-\eta_{B})^2]$.
The sum of these two terms gives the average computational complexity of the BRH decoder. 
\end{proof}

 Next, we derive the metrics for the CRH decoder. Recall that the minimum number of input blocks of an epoch that must be recovered exclusively by the BP component of the CRH decoder denoted by $\eta_{c}$, is in the range $0\le \eta_{c} \le k$. Note that the bootstrap overhead of the CRH decoder is directly dependent on the $\eta_{c}$ set by the bucket node. When implementing the CRH decoder, it is intuitive that choosing values of $\eta_{c}$ closer to $k$ leads to a lower computational complexity, however it compromises on the bootstrap overhead. In contrast, the behaviour is vice-versa for choosing $\eta_{c}$ closer to $0$. Furthermore, it is not straight-forward to determine the exact bootstrap overhead of the CRH decoder, for a given $\eta_{c}$. Nevertheless, for each value of $\eta_{c} \in [k]$, we can numerically evaluate the average bootstrap overhead denoted by $K_{CR}^{av}$, through simulations. As $\eta_{c}$ is fixed, the computational complexity of the OFG component of the CRH decoder will be $(k-\eta_{c})^{2}$, which is a constant. With that, we present the average computational complexity of the CRH decoder as follows. 

\begin{theorem}\thlabel{Computation_CRH}
The average computational complexity of the CRH decoder is given by $C_{CR} = K_{CR}^{av}\cdot D\cdot \frac{\eta_{c}}{k} + (k-\eta_{c})^2$, \textcolor{black}{where $0\le \eta_c\le k$.} 
\end{theorem}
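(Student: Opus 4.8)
The plan is to decompose the total computational complexity of the CRH decoder into the contribution of its BP component and that of its OFG component, exactly mirroring the argument used for the BRH decoder in \thref{Computation_hybrid}, and then simplify using the fact that $\eta_{c}$ is a fixed (deterministic) parameter here, unlike the random $\eta_{B}$ in the BRH case. First I would recall that the CRH decoder runs the hybrid procedure of Section \ref{sec:Hybrid decoder}: it uses its BP component to recover at least $\eta_{c}$ blocks, contacting additional droplet nodes as needed, and whatever number of droplet nodes it finally ends up contacting is, on average, $K_{CR}^{av}$ (evaluated numerically, as already granted in the text preceding the statement).

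Next I would bound the BP-component complexity. By the same reasoning as in the proof of \thref{Computation_BP} and reused in \thref{Computation_hybrid}, the number of XOR operations performed by the BP part equals the number of edges of the bipartite graph $\mathcal{T}$ that are removed while decoding those blocks. At the start of decoding, the average number of edges in $\mathcal{T}$ built from $K_{CR}^{av}$ droplets is $K_{CR}^{av}\cdot D$, where $D$ is the average droplet degree from \thref{Computation_BP}. Invoking the same uniform-selection assumption — each input symbol contributes an (on average) equal share of the edges — recovering $\eta_{c}$ out of the $k$ input blocks removes an average fraction $\eta_{c}/k$ of these edges, giving an average BP-component complexity of $K_{CR}^{av}\cdot D\cdot \tfrac{\eta_{c}}{k}$.

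For the OFG component, the argument is actually cleaner than in the BRH case: since $\eta_{c}$ is fixed by the bucket node rather than random, the OFG part always operates on exactly $k-\eta_{c}$ residual blocks. By \thref{Computation_OFG} applied to a system of $k-\eta_{c}$ unknowns, its complexity is upper bounded by $(k-\eta_{c})^{2}$, and since $\eta_{c}$ is deterministic this term is a constant, requiring no expectation. Summing the two contributions yields $C_{CR} = K_{CR}^{av}\cdot D\cdot \tfrac{\eta_{c}}{k} + (k-\eta_{c})^2$, as claimed. The only real subtlety — and the step I would flag as the main obstacle — is justifying that the edge count consumed by the BP component scales linearly as $\eta_{c}/k$; this rests on the heuristic that each recovered input block accounts for a proportional slice of the graph's edges, which is only exact in expectation under the uniform encoding assumption and is precisely the same approximation already invoked (and accepted) in \thref{Computation_hybrid}, so I would simply cite that reasoning rather than re-derive it.
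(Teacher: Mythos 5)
Your proposal is correct and follows essentially the same route as the paper, which simply notes that the result follows from the proof of \thref{Computation_hybrid}: you decompose the cost into the BP component's edge-fraction term $K_{CR}^{av}\cdot D\cdot \frac{\eta_{c}}{k}$ and the OFG component's term $(k-\eta_{c})^2$, correctly observing that no expectation is needed since $\eta_{c}$ is deterministic. Your flagged subtlety about the linear edge-count scaling is exactly the uniform-selection assumption the paper already invokes in the BRH proof, so nothing is missing.
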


\begin{proof}
Explanation follows from the proof of \thref{Computation_hybrid}.
\end{proof}

\subsection{Complexity vs Overhead}
In Fig. \ref{fig:complexity_vs_overhead}, we pictorially depict the trade-off between the computational complexity and bootstrap overhead of the various decoders when $k = 500$ and when their decoding failure rate is upper bounded by $\delta=0.1$ along with the RSD parameter $c = 1$.\footnote{\textcolor{black}{Note that the RSD parameters are fixed to $\delta = 0.1$ and $c = 1$ for the presentation of results. Since the behavior of the bootstrap overhead and the average computational complexity of the decoders are already well established, similar experiments can be conducted for the other values of $\delta$ and $c$.}} In this depiction, the blue circle in the extreme right corner represents the BP decoder which clearly has the lowest computational complexity, however, a very high bootstrap overhead. It is vice-versa for the OFG decoder, which is represented using the red square. Labels in black and magenta denote the overhead-complexity pairs of the family of BRH decoders with their $K_{BR}$ values ranging from $K_{OFG}$ to $K_{BP}$, and the family of CRH decoders with their $\eta_{c}$ values ranging from 0 to $k$, respectively. It can be observed that the respective families of the BRH and CRH decoders act as bridges between the OFG and BP decoders, in terms of the overhead and complexity trade-off. Therefore, depending on whether the bucket node wants to operate at a fixed bootstrap overhead or fixed computational complexity, it can choose between the BRH and CRH decoders. If the bucket node chooses BRH decoder, it needs to select the parameter $K_{BR}$ that allows it to achieve its desired combination of computational complexity and bootstrap overhead. On the other hand, if CRH decoder is chosen, the bucket node needs to select the parameter $\eta_{c}$ that satisfies its complexity and overhead requirements.

In the next subsection, we define a new metric to equitably compare the performances of the various LT decoders discussed hitherto, based on which we optimize the performances of BRH and CRH decoders.

\begin{figure}
  \includegraphics[scale = 0.32]{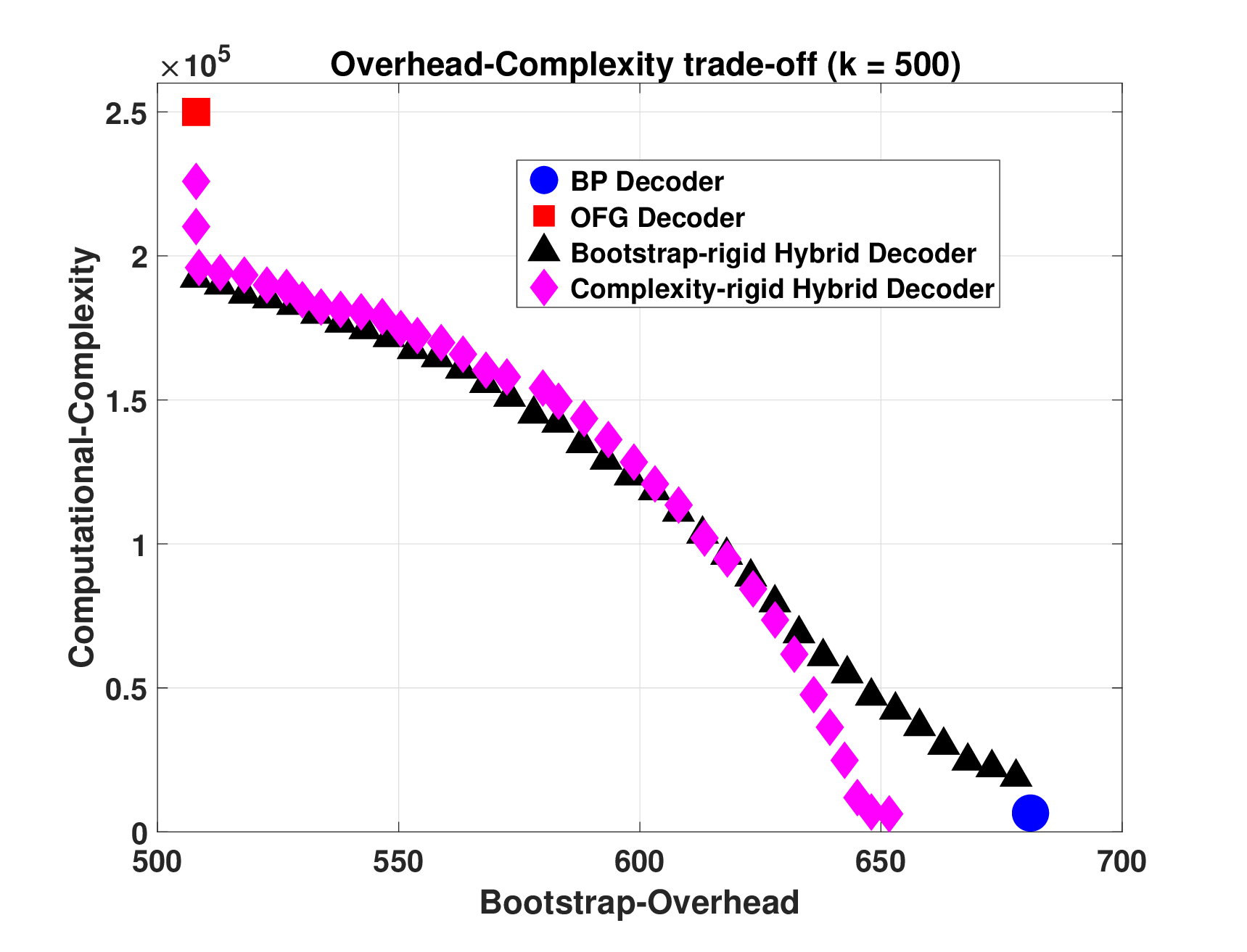}
  \vspace{-0.8cm}
  \caption{\textcolor{black}{BRH and CRH decoders acting as bridge between BP and OFG in terms of bootstrap overhead vs computational complexity trade-off.}}
  \label{fig:complexity_vs_overhead}
\end{figure}

\subsection{The Mirroring Cost}
A bucket node is always associated with a feasible range of computational complexity and bootstrap overhead over which it could operate. Out of the multiple complexity-overhead pairs which falls within this feasible range, a bucket node has to select one pair which gives the optimal performance. This could be done only after taking into consideration the costs associated with the bucket node in performing computations as well as downloading data. In this line, we define a new metric known as the \emph{absolute mirroring cost}, which encompasses the costs associated with computation and download for the bucket node, along with the computational complexity and communication overhead of the decoder. Henceforth, we use the absolute mirroring cost as the standard metric to compare the performances of the LT decoders. To quantify this metric, we represent \textcolor{black}{$c_1>0$} as the cost associated with the bucket node performing one XOR operation between two droplets. Similarly, \textcolor{black}{$c_2>0$} represents the cost for the bucket node to contact a droplet node and download one droplet from it. With that, the absolute mirroring cost is defined as follows:  

% Although we have the analytical expressions of computational complexity and bootstrap overhead of BP, OFG, BRH and CRH decoders, this knowledge is still not sufficient for a bucket node to choose a decoder that provides the lowest decoding cost. Hence, we propose a new metric, namely the \emph{mirroring cost}, which can be considered as the standard metric for decoder selection. Let $c_{1}$ denote the cost associated with the bucket node to perform one XOR operation between two droplets and $c_{2}$ denote the cost associated with the bucket node to contact a droplet node and download one droplet from it. With that, the mirroring cost is defined as follows:

% \begin{mdframed}
\begin{definition}[Absolute Mirroring Cost]\label{Mirroring Cost}
The absolute mirroring cost at a bucket node for decoding an epoch is defined as $c_{1}\times$\{computational complexity\} + $c_{2}\times$\{bootstrap overhead\}.
\end{definition}
% \end{mdframed}

In practice, $c_{1}$ and $c_{2}$ can be quantified as the time taken by the bucket node to perform one XOR operation between two droplets and the time taken by the bucket node to download one droplet, respectively. As a result, the absolute mirroring cost can also be quantified as the time taken by the bucket node to decode one epoch.  
% When considering the family of hybrid decoders, we evaluate the mirroring cost of hybrid decoders that operate with a number of droplets ranging from $K_{OFG}$ to $K_{BP}^{*}$ in order to obtain the lowest mirroring cost.
% Hybrid decoder operating with $K_{BP}^{*}$ droplets will offer a lower probability of decoding failure than BP decoder. Therefore, in this section we will just compare the mirroring costs of OFG and hybrid decoders. 

% \subsubsection{Mirroring cost as a function of $\alpha$ (cost ratio)}
In the rest of this section, we analyse the absolute mirroring cost using the cost ratio $\alpha$, where $\alpha = \frac{c_{2}}{c_{1}}$. In particular, we normalise $c_{1}$ to 1 and evaluate the normalized absolute mirroring costs of the decoders as functions of $\alpha$. We will henceforth refer to the normalized absolute mirroring cost as ``mirroring cost". From Definition \ref{Mirroring Cost}, \textcolor{black}{for a given $\alpha>0$}, the mirroring costs of BP and OFG decoders denoted by $M_{BP}(\alpha)$ and $M_{OFG}(\alpha)$ respectively, are:
\begin{equation} \label{eqn:Mirroring cost BP}
% \hspace{-7mm} M_{BP}(\alpha) = k\cdot \left(\ln\left(\frac{c\cdot k^\frac{3}{2}\cdot \ln(k/\delta)}{\delta}\right) + 1.577\right) + \nonumber\\  
% \alpha\cdot \left(k + c\sqrt{k}\ln^2{(k/\delta)}\right)
M_{BP}(\alpha) = K_{BP}\cdot D + \alpha\cdot K_{BP}.
\end{equation}
\vspace{-3mm}
\begin{equation} \label{eqn:Mirroring cost OFG}
\begin{aligned}
M_{OFG}(\alpha) = k^{2} + \alpha\cdot K_{OFG}.
\end{aligned}
\end{equation}
% \begin{IEEEeqnarray}{rcl} \label{eqn:Decoding cost function of alpha BP}
% O_{BP}(\alpha) = k\cdot \left(\ln\left(\frac{c\cdot k^\frac{3}{2}\cdot \ln(k/\delta)}{\delta}\right) + 1.577\right) + \nonumber\\  
% \alpha\cdot \left(k + c\sqrt{k}\ln^2{(k/\delta)}\right)
% \end{IEEEeqnarray}
% \begin{IEEEeqnarray}{rcl} \label{eqn:Decoding cost function of alpha OFG}
% O_{OFG}(\alpha) = k^{2} + \alpha\cdot k(1+\epsilon_{min})
% \end{IEEEeqnarray}

% \subsubsection{Optimization of BRH Decoder as a Function of $\alpha$}
% \subsubsection{Normalized Mirroring Cost of BRH Decoder}
\subsubsection{Mirroring Costs of BRH and CRH Decoders}
% The normalized mirroring cost of the BRH decoder, denoted by $M_{BR}(\alpha)$ is given as follows:

\textcolor{black}{For a given $\alpha>0$}, the mirroring costs of BRH and CRH decoders indicated by $M_{BR}(\alpha)$ and $M_{CR}(\alpha)$, respectively, are given in 
\eqref{Norm_mirror_cost_BRH} and \eqref{Norm_mirror_cost_CRH}, respectively.

\begin{equation}\label{Norm_mirror_cost_BRH}
M_{BR}(\alpha) = K_{BR}\cdot D \cdot \frac{\EX[\eta_{B}]}{k}+\EX[(k-\eta_{B})^2]+\alpha\cdot K_{BR}.    
\end{equation}

\begin{equation}\label{Norm_mirror_cost_CRH}
M_{CR}(\alpha) = K_{CR}^{av}\cdot D \cdot \frac{\eta_{c}}{k}+(k-\eta_{c})^2+\alpha\cdot K_{CR}^{av}.    
\end{equation}

It is important to note that the mirroring costs of BRH and CRH decoders are not fixed for a given $\alpha$, unlike the BP and OFG decoders, which have static mirroring costs. This is due to the BRH decoder's variable $K_{BR}$, which can range from $K_{OFG}$ to $K_{BP}$, and CRH decoder's variable $\eta_{c}$ which can range from $0$ to $k$. 
% It is important to note that the BRH decoder's mirroring cost is not fixed for a given $\alpha$, unlike the BP and OFG decoders, which have static mirroring costs. This is due to the BRH decoder's variable $K_{BR}$, which can range from $K_{OFG}$ to $K_{BP}$. Hence, for a given $\alpha$, we have unique values of mirroring costs arising from various values of $K_{BR}$.
Therefore, we propose an optimization problem (Problem \ref{problem:Optimization Problem}) for the BRH decoder to obtain the optimal number of droplet nodes that must be contacted by the bucket node, denoted by $K_{min}(\alpha)$, in order to achieve the minimum mirroring cost of the decoder. Similarly, we also propose an optimization problem (Problem \ref{problem:Optimization Problem CRH}) for the CRH decoder to obtain the optimal $\eta_{c}$, denoted by $\eta_{min}(\alpha)$, in order to achieve the minimum mirroring cost of the decoder.
% Therefore, we propose an optimization problem (Problem \ref{problem:Optimization Problem}) to obtain the optimum number of droplets that must be downloaded by the BRH decoder, denoted by $K_{min}(\alpha)$, in order to achieve the minimum mirroring cost of the decoder, denoted by $M_{BR}^{min}(\alpha)$.

% In case of the BRH decoder, which could possibly operate at different values of $K_{BR}$ where $K_{OFG}\le K_{BR} \le K_{BP}$, we obtain distinct values of mirroring costs for a specific $\alpha$. 

\begin{mdframed}
\begin{problem}\label{problem:Optimization Problem}
For a given $\alpha > 0$, solve
% \begin{small}
% \begin{mdframed}
\begin{equation} 
\begin{aligned}
K_{min}(\alpha) =& \,\underset{K_{BR}}{\arg\min}\left\{K_{BR} D \frac{\EX[\eta_{B}]}{k}+\EX[(k-\eta_{B})^2]+\alpha K_{BR}\right\} \\ \nonumber
\textrm{s.t.}\quad & K_{OFG}\le K_{BR}\le K_{BP}.\;
\end{aligned}
\end{equation}
% \end{\mdframed}
% \end{small}
\end{problem}
\end{mdframed}

\begin{mdframed}
\begin{problem}\label{problem:Optimization Problem CRH}
For a given $\alpha > 0$, solve
% \begin{small}
% \begin{mdframed}
\begin{equation} 
\begin{aligned}
\eta_{min}(\alpha) =& \,\underset{\eta_{c}}{\arg\min}\left\{K_{CR}^{av}\cdot D \cdot \frac{\eta_{c}}{k}+(k-\eta_{c})^2+\alpha\cdot K_{CR}^{av}\right\} \\ \nonumber
\textrm{s.t.}\quad & 0\le \eta_{c}\le k.\;
\end{aligned}
\end{equation}
% \end{\mdframed}
% \end{small}
\end{problem}
\end{mdframed}

\begin{figure}
  \includegraphics[scale = 0.29]{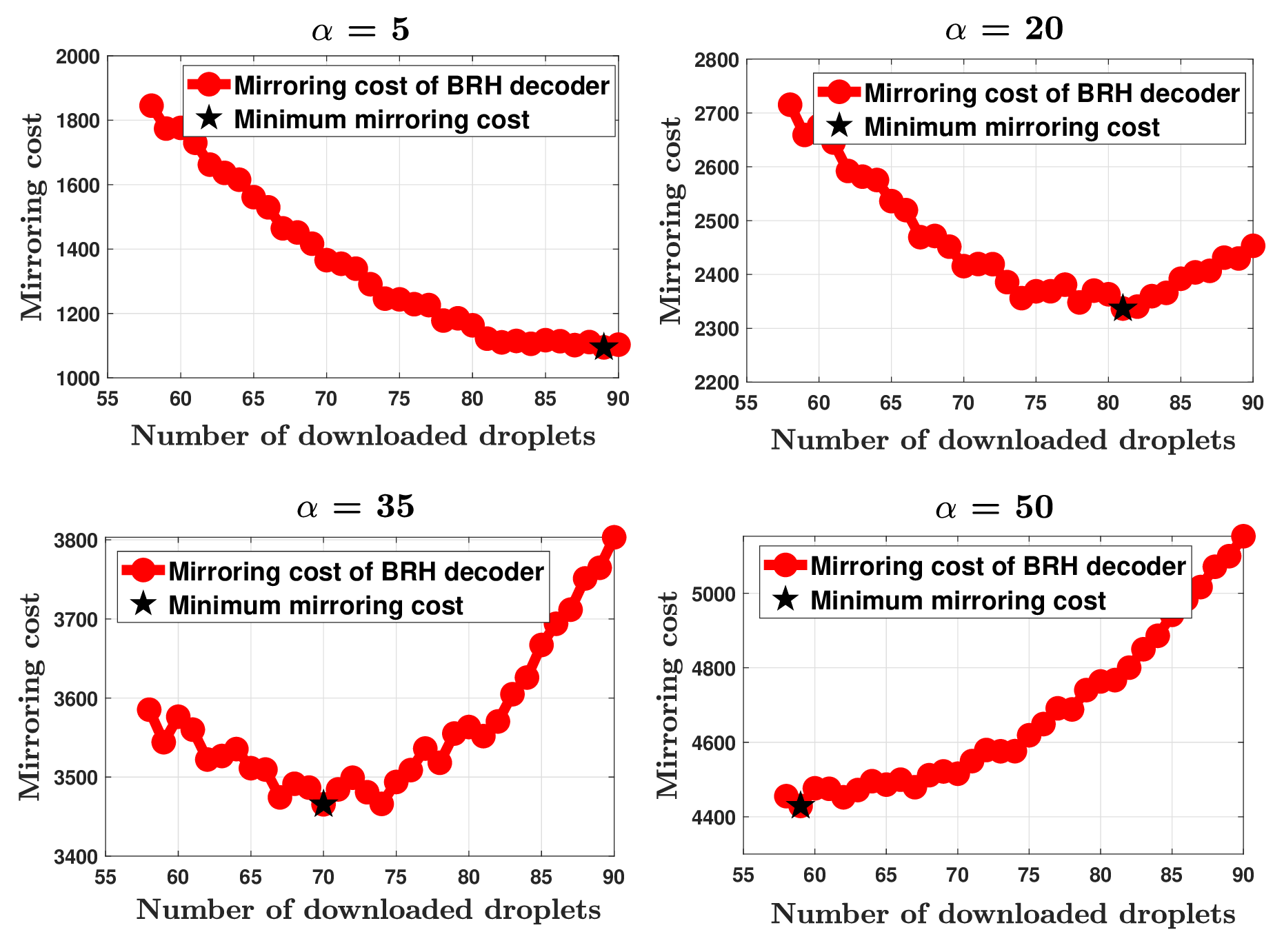}
  \vspace{-0.7cm}
  \caption{\textcolor{black}{Plots depicting the behaviour of the objective function of Problem \ref{problem:Optimization Problem} as a function of $\alpha$ for BRH decoder.}}
  \label{fig:Optimization}
\end{figure}

The behaviour of the objective functions captured in Problem \ref{problem:Optimization Problem} and Problem \ref{problem:Optimization Problem CRH} are depicted in Fig. \ref{fig:Optimization} and Fig. \ref{fig:Optimization_Complexity_Rigid}, respectively, for $k=50$.
The red circular points in Fig. \ref{fig:Optimization} and Fig. \ref{fig:Optimization_Complexity_Rigid} represent the evaluations of the objective functions in Problem \ref{problem:Optimization Problem} and Problem \ref{problem:Optimization Problem CRH}, respectively, for different values of $K_{BR}$ and $\eta_{c}$, respectively. The minimum mirroring costs of the BRH and CRH decoders, along with their respective $K_{min}(\alpha)$ and $\eta_{min}(\alpha)$, are also labelled in Fig. \ref{fig:Optimization} and Fig. \ref{fig:Optimization_Complexity_Rigid}, respectively. 
Furthermore, we can infer from both Fig. \ref{fig:Optimization} and Fig. \ref{fig:Optimization_Complexity_Rigid} that for a given $k$, as the value of $\alpha$ increases, the minimum point shifts towards the left. This is because an increase in $\alpha$ indicates a relative increase in $c_{2}$ with respect to $c_{1}$. Therefore, for the BRH decoder, $K_{BR}$ must be reduced to achieve the smallest mirroring cost, resulting in the leftward shift of $K_{min}(\alpha)$ as seen in Fig. \ref{fig:Optimization}.
Similarly, for the CRH decoder, $K_{CR}^{av}$ must be decreased because this contributes to the overhead term of the mirroring cost. To achieve this, $\eta_{c}$ must decrease, resulting in a decrease in $\eta_{min}(\alpha)$ as $\alpha$ increases.

% This is because high $\alpha$ means $c_{2}$ is relatively greater than $c_{1}$. Therefore, in order to obtain the minimum mirroring cost in this case, $K_{BR}$ has to reduce, which is the reason for the shift. 

% \subsubsection{Optimization of CRH Decoder as a Function of $\alpha$}

% For the family of CRH decoders, which operate at different values of $\eta_{c}$, we get distinct values of mirroring costs for a given $\alpha$. Therefore, we again propose an optimization problem (Problem \ref{problem:Optimization Problem CRH}) to obtain the optimal $\eta_{c}$ denoted by $\eta_{min}(\alpha)$, in order to achieve the minimum mirroring cost of the CRH decoder, denoted by $M_{CR}(\alpha)$.

% \begin{mdframed}
% \begin{problem}\label{problem:Optimization Problem CRH}
% For a given $\alpha > 0$, solve
% % \begin{small}
% % \begin{mdframed}
% \begin{equation} 
% \begin{aligned}
% \eta_{min}(\alpha) =& \,\underset{\eta_{c}}{\arg\min}\left\{K_{CR}^{av}\cdot D \cdot \frac{\eta_{c}}{k}+(k-\eta_{c})^2+\alpha\cdot K_{CR}^{av}\right\} \\ \nonumber
% \textrm{s.t.}\quad & 0\le \eta_{c}\le k.\;
% \end{aligned}
% \end{equation}
% % \end{\mdframed}
% % \end{small}
% \end{problem}
% \end{mdframed}

\begin{figure}[ht!]
\begin{center}
  \includegraphics[scale = 0.30]{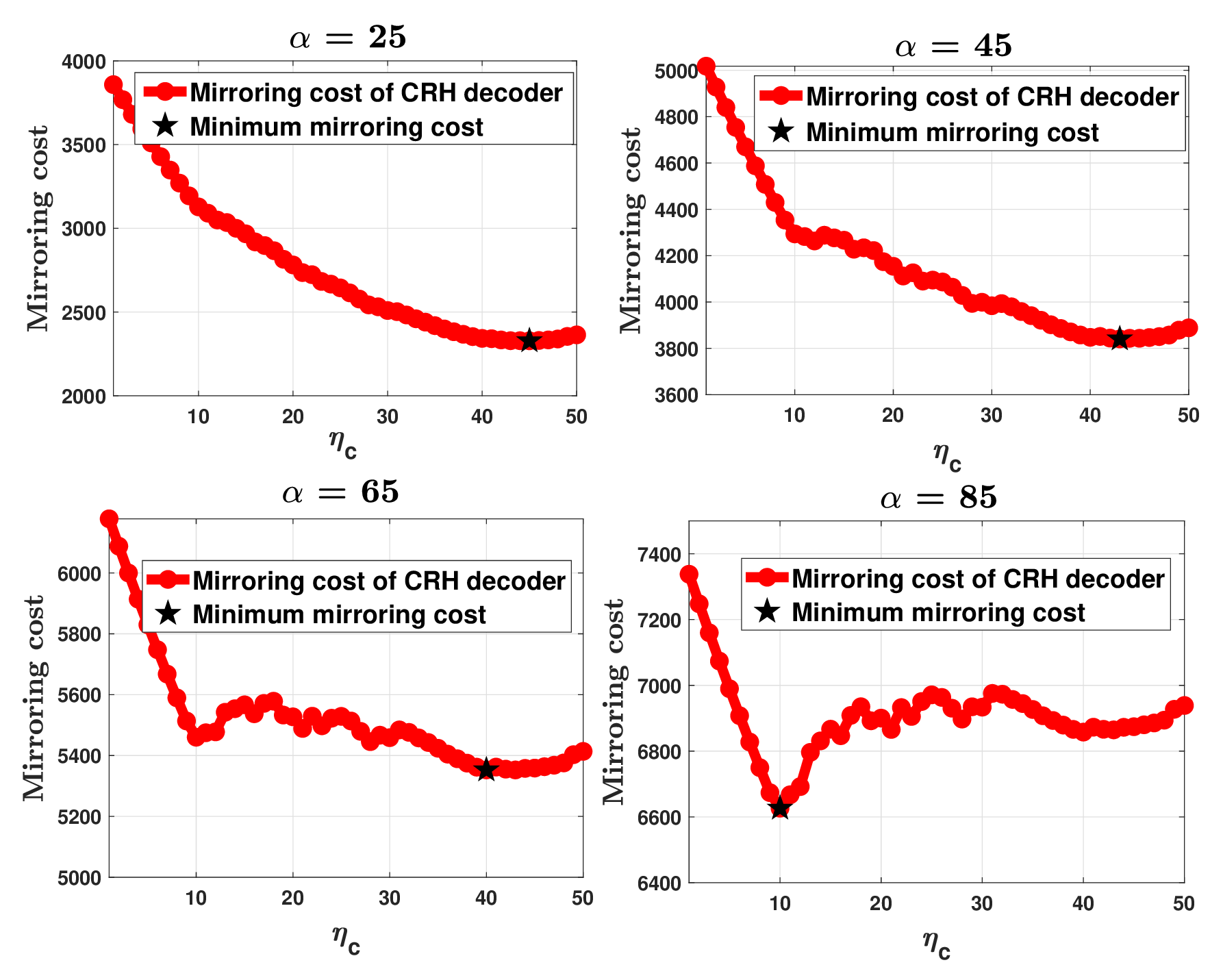}
  \end{center}
  \vspace{-0.7cm}
  \caption{\textcolor{black}{Plots depicting the behaviour of the objective function of Problem \ref{problem:Optimization Problem CRH} as a function of $\alpha$ for CRH decoder.}}
  \label{fig:Optimization_Complexity_Rigid}
\end{figure}

\begin{center}
\begin{table}[h] \caption{Numerical solutions to minimization problems of BRH and CRH decoders and their respective minimum mirroring costs (in 1000s)}\label{table:Optimization without Adversary}
\vspace{2mm}
{%
\addtolength{\tabcolsep}{-3.1pt}
% \addtolength{\tabcolsep}{10.3pt}
\renewcommand{\arraystretch}{1.8}
\begin{center}
\begin{tabular}{|c|c|c|c|c|c|c|c|c|c|}
\hline
\multicolumn{5}{|c|}{$k=10$}
& 
\multicolumn{5}{|c|}{$k=50$} \\ \hline
\rule{0pt}{10pt}${\alpha}$ & $K_{min}$ & $\eta_{min}$ & $M_{BR}^{min}$ & $M_{CR}^{min}$ & ${\alpha}$ & $K_{min}$ & $\eta_{min}$ & $M_{BR}^{min}$ & $M_{CR}^{min}$  \\ \hline
$1$ & $15$ & $5$ & $0.081$ & $0.07$ & $1$  & $91$ & $45$ & $0.637$ & $0.51$   \\ 
$2$ & $15$ & $5$ & $0.096$ & $0.085$ & $15$ & $91$ & $45$ & $1.910$ & $1.57$  \\ 
$3$ & $15$ & $5$ & $0.111$ & $0.101$ & $30$ & $74$ & $43$ & $3.096$ & $2.71$ \\ 
$4$ & $15$ & $5$ & $0.126$ & $0.116$ & $45$ & $59$ & $43$ & $4.134$ & $3.84$ \\ 
$5$ & $15$ & $5$ & $0.141$ & $0.131$ & $60$ & $59$ & $43$ & $5.019$ & $4.97$ \\ \hline
\multicolumn{5}{|c|}{$k=100$}
& 
\multicolumn{5}{|c|}{$k=500$} \\ \hline
\rule{0pt}{10pt}
${\alpha}$ & $K_{min}$ & $\eta_{min}$ & $M_{BR}^{min}$ & $M_{CR}^{min}$ & ${\alpha}$ & $K_{min}$ & $\eta_{min}$ & $M_{BR}^{min}$ & $M_{CR}^{min}$  \\ \hline
$1$  & $163$ & $95$ & $1.313$ & $1.134$ & $1$ & $681$ & $492$ & $7.387$ & $7.01$  \\ 
$35$ & $163$ & $95$ & $6.854$ & $6.03$ & $600$ & $681$ & $478$ & $415.3$ & $395.7$ \\ 
$70$ & $163$ & $95$ & $12.56$ & $11.07$ & $1200$ & $507$ & $455$ & $800.6$ & $784.5$ \\ 
$105$ & $136$ & $90$ & $17.78$ & $16.1$ & $1800$ & $507$ & $57$ & $1104$ & $1109$ \\ 
$140$ & $109$ & $90$ & $21.91$ & $21.14$ & $2400$ & $507$ & $57$ & $1409$ & $1414$ \\ \hline
\end{tabular}
\end{center}
}
\end{table}
\end{center}

\begin{figure}[ht!]
  \includegraphics[scale = 0.30]{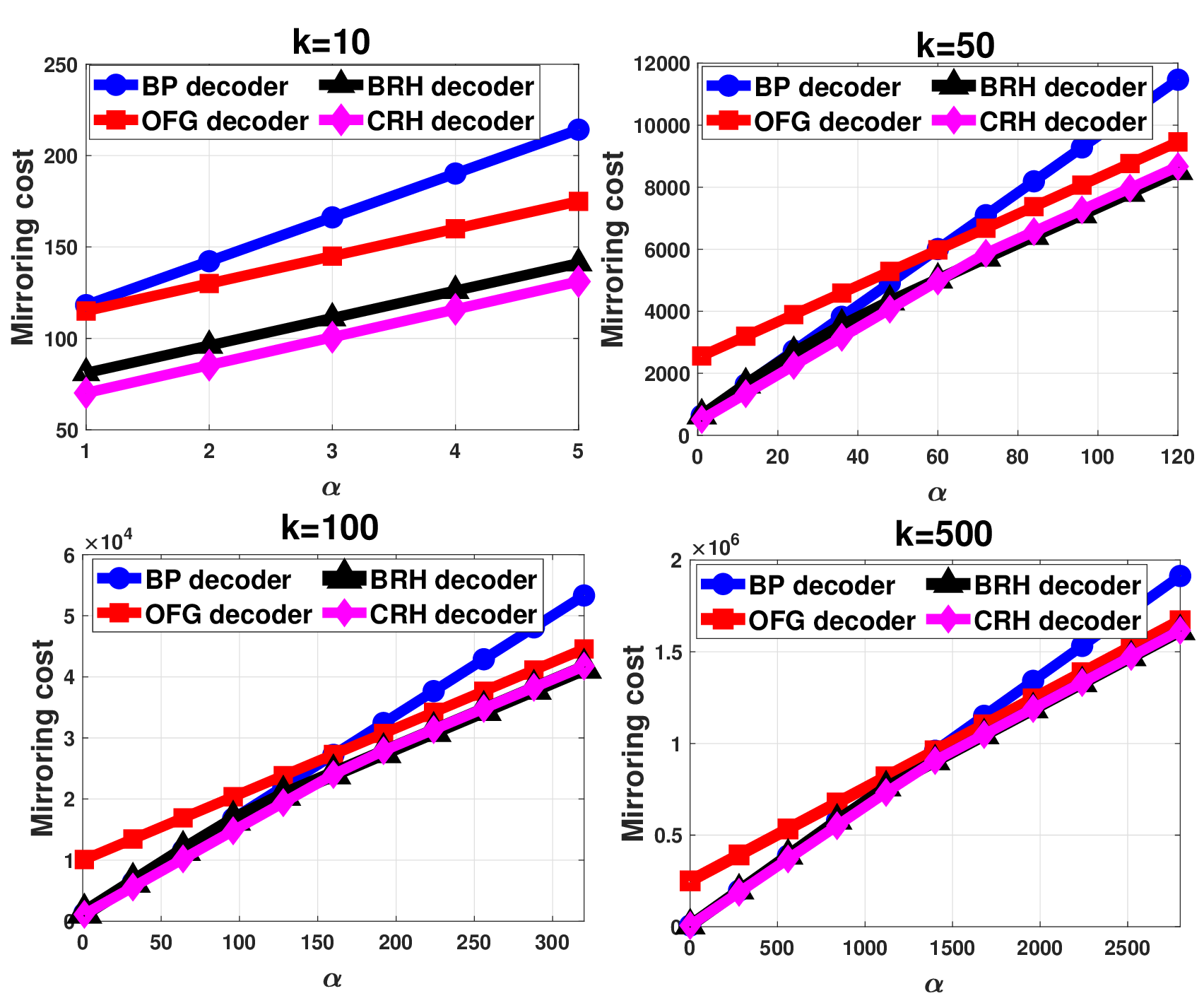}
  \vspace{-0.3cm}
  \caption{\textcolor{black}{Mirroring cost comparison of BP, OFG, BRH and CRH decoders.}}
  \label{fig:Overall Decoding cost}
\end{figure}

\subsubsection{Simulation Results}

In this section, we present the numerical solutions to Problem \ref{problem:Optimization Problem} and Problem \ref{problem:Optimization Problem CRH} for different values of $\alpha$ in Table \ref{table:Optimization without Adversary}, for values of $k \in \{10,50,100,500\}$. We also evaluate the minimum mirroring costs of the BRH and CRH decoders, denoted by $M_{BR}^{min}$ and $M_{CR}^{min}$ respectively, in Table \ref{table:Optimization without Adversary}.
% gives the solution to the optimization problems in Problem \ref{problem:Optimization Problem} and Problem \ref{problem:Optimization Problem CRH} for different values of $\alpha$ and also shows the mirroring costs (in 1000s) of BRH and CRH decoders. 
Furthermore, we show the comparison of the mirroring costs of the four decoders as functions of $\alpha$ in Fig. \ref{fig:Overall Decoding cost}, for values of $k \in \{10,50,100,500\}$. The parameters of LT code used for simulation are $c = 0.1$ and $\delta = 0.1$. Fig. \ref{fig:Overall Decoding cost} shows that for a given $\alpha$, the minimum mirroring costs of both BRH and CRH decoders are always less than or equal to those of the BP and OFG decoders, which is achieved by optimizing the BRH and CRH decoders.

\subsection{Discussion}
In this section, we have proposed two variants of hybrid decoders for LT codes, namely the BRH and the CRH decoders, both of which offer flexibility to a bucket node by allowing it to operate on its feasible range of computational complexity and bootstrap overhead. We have further optimized the proposed BRH and CRH decoders, taking into account the bucket node's computational and download expenses. Overall, we emphasize that a bucket node can choose to employ either the BRH or the CRH decoder, depending on whether it requires a constant bootstrap overhead or a constant computational complexity. Then the bucket node may appropriately select its $K_{min}(\alpha)$ (for BRH) or $\eta_{min}(\alpha)$ (for CRH) depending on its cost ratio $\alpha$, to decode the blockchain at the lowest mirroring cost.
% Overall, a bucket node can choose to employ BRH or CRH decoder, depending on whether it requires a constant bootstrap overhead or constant computational complexity. Then it can appropriately select its $K_{min}(\alpha)$ (for BRH) or $\eta_{min}(\alpha)$ (for CRH) depending on its cost ratio $\alpha$, to decode the blockchain at the lowest mirroring cost.
% \vspace{-2mm}
\section{Denial-of-Service Threats on LT Coded Blockchains}
\label{sec:threats}

\begin{figure*}
\begin{center}
   \includegraphics[scale = 0.4]{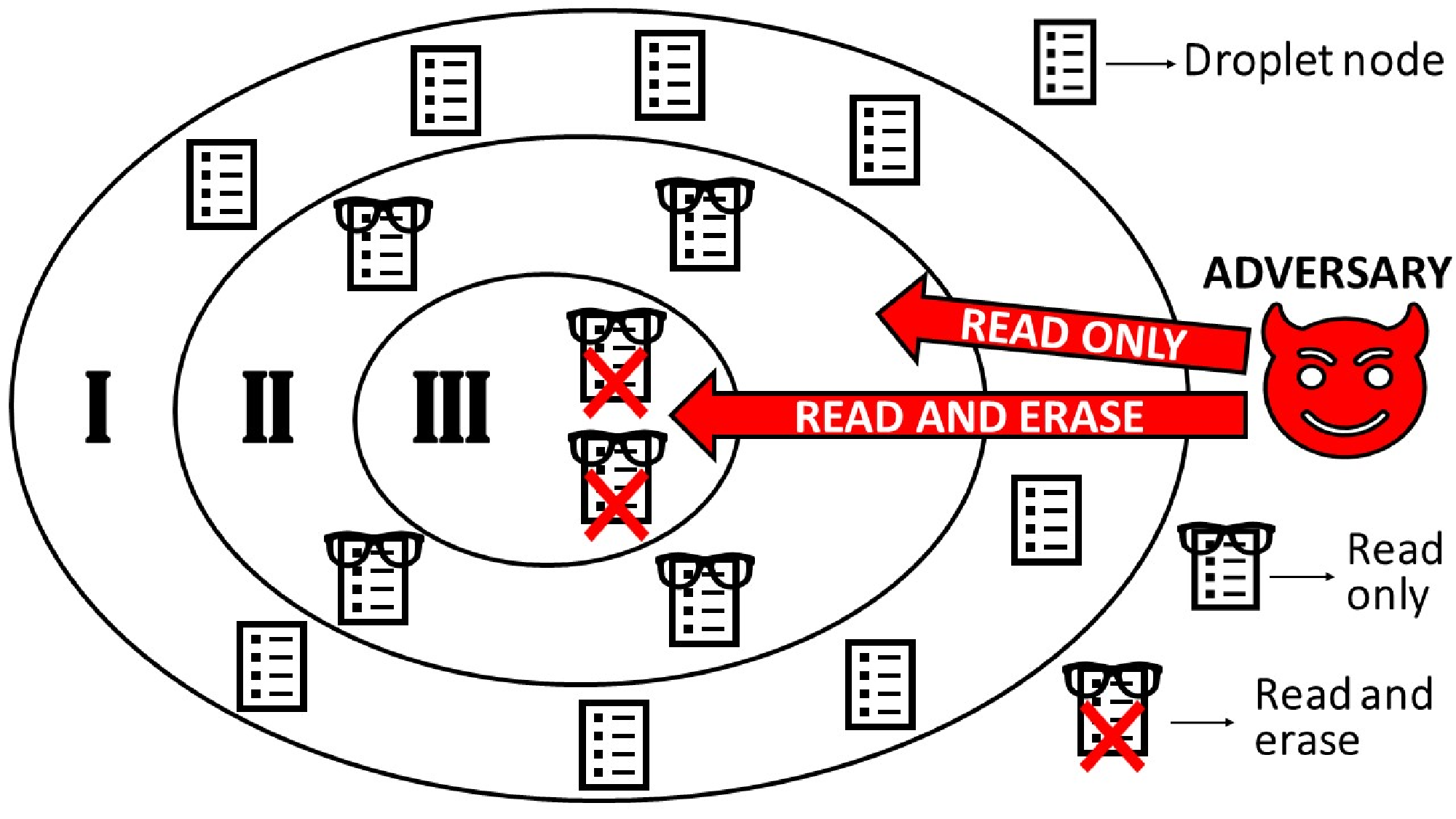}
  \end{center}
  \vspace{-0.5cm}
  \caption{The proposed non-oblivious adversarial model is executed in two stages. In the first stage, a set of droplet nodes of a full node is read by the adversary wherein the information on the degrees of the droplet nodes are retrieved (marked by region II). In the second stage, a subset of the read nodes is erased based on the information on their degrees as well as the decoder employed by the victim bucket nodes (marked by region III).}
  \label{fig:Adversary_model}
\end{figure*}

\textcolor{black}{In the preceding sections, we have shown that LT coded blockchains support low-complexity algorithms for the bucket nodes to download and mirror the transactions of the blockchains. As a consequence, bucket nodes, despite possessing limited resources in terms of computational workload and bandwidth, can join the blockchain thereby facilitating seamless scalability. Despite these benefits of LT coded blockchains, they may also be vulnerable to a wide range of security threats, akin to any other networked systems. In the context of blockchains, some relevant security threats could be those degrading the infrastructure scalability or compromising the ability to perform legitimate transactions.}

\textcolor{black}{In this work, we are interested in studying the vulnerabilities of LT coded blockchains against DoS (denial of service) threats that pose a critical risk to the scalability of LT coded blockchains, disrupting their ability to operate effectively in decentralized environments. In particular, our vulnerability analysis is on scenarios where the attacker's objective is to design strategies that render a subset of the droplet nodes unavailable for other bucket nodes to download the blockchain. As a consequence, new nodes will not be able to join the blockchain. Thus, the number of nodes that store the complete history of transactions in the blockchain will not grow thereby limiting the number of honest nodes to achieve consensus for verifying the  transactions. Although the direct consequence of the attacker's objective is to limit the scalability of LT coded blockchain, its indirect and long-term impact is to curtail the network's ability for consensus due to the presence of fewer number of full nodes.}

\textcolor{black}{In the following section, we present the assumptions we make with respect to the capabilities of the attacker, in terms of resources to execute the attacks, and the implementation methods through which the attacks could be executed in practice.}

\subsection{\textcolor{black}{Threat Model}}

\textcolor{black}{We consider an adversary or a group of adversaries that launch DoS attacks \citep{DoS_1,DoS_2,Ransomware_healthcare,Ransomware_CryptoLocker,Ransomware_Petya} on a subset of droplet nodes, thereby preventing new bucket nodes from using the compromised droplet nodes for their decoding process. As a consequence, blockchain scalability is hindered, as new bucket nodes who want to join the network may not receive sufficient data to retrieve the original blockchain. To launch DoS threats, an adversary may  flood a targeted set of droplet nodes with repeated service requests, such that a legitimate bucket node is not able to access the compromised droplet nodes \citep{DoS_1}. As an alternative, an adversary could also make the droplet nodes unavailable by compromising them akin to the methods used in ransomware attacks. We remark that the way an adversary makes the droplet nodes unavailable is out of the scope of this work. Instead, we argue that the way in which an adversary chooses the target droplet nodes to compromise is more relevant. This is because if the choice of the target droplet nodes is made based on the contents stored in them, then it might have a higher impact.}

\textcolor{black}{Towards choosing the target droplet nodes, one can classify the threat model into two categories, namely: (i) \emph{Oblivious model} and (ii) \emph{Non-oblivious model}. An adversary employing the oblivious model, referred to as an oblivious adversary, does not observe the storage contents of any of the droplet nodes before performing the attack. Consequently, the oblivious adversary blindly launches DoS attack on a random subset of droplet nodes. Henceforth, we refer to an attack executed by an oblivious adversary as \emph{blind attack}. On the other hand, non-oblivious threat model refers to a scenario wherein an adversary gains access to a set of droplet nodes, reads their contents, and uses this information to select a subset of the accessed nodes to attack. In practice, a typical way for the adversary to gain access into the droplet nodes is to enter the network as a potential bucket node and then download their contents. After reading the contents of the droplet nodes, it could then decide to attack a subset of them.}

\textcolor{black}{The authors of \citep{SeF} have shown that LT coded blockchains are resilient against such blind attacks. However, to the best of our knowledge, the impact of a non-oblivious adversarial model on LT-coded blockchains has yet to be investigated. Pointing at this research gap, in this work, we adopt a non-oblivious model, wherein the adversary gains access to a set of droplet nodes, reads their contents, and uses this information to select a subset of the accessed nodes to attack. To realize a non-oblivious model, we assume that the adversary in our threat model has sufficient resources to gain access into a set of droplet nodes, and read their contents. In practice, when the adversary joins the network as a potential bucket node, the resources required for this task corresponds to the communication-overheads for I/O access and download. Subsequently, based on the contents in the droplets, we assume that the adversary has necessary additional resources required to make the droplets unavailable, either through ransomware- or flooding-based attacks. Given that the resources for accessing the contents and executing attacks are attacker dependent, we parameterize such capabilities, and then study their impact on LT coded blockchains. Henceforth, we refer to the attacked droplet nodes as \emph{erased} nodes, and the residual droplet nodes that are not attacked as \emph{honest} nodes. In the subsequent section, we show that the selection of the subset of droplet nodes to be attacked is determined by the decoder employed by the bucket node.}

% \vspace{-2mm}
\subsection{DoS Threats from Non-Oblivious Adversaries}

% An illustration of the adversary-model that we consider in this paper, is given in Fig. \ref{fig:Adversary_model}. In \ref{fig:Adversary_model}, region I represents the set of droplet nodes that are neither read nor erased by the adversary. Region \Romannum{2} represents the droplet nodes that are read, however, not erased. Region \Romannum{3} represents the droplet nodes that are both read and erased.

% We consider the total number of droplet nodes present in the network to be $S$. 
Let us consider a full node that contains $S$ number of droplet nodes under the LT coded architecture. Recall from Section \ref{sec2_model} that the degree information of droplet nodes is available in the form of a binary vector $\mathbf{v}$, which is stored along with the coded blocks. On this full node, we consider a non-oblivious adversary who acts as a legitimate bucket node and reads the degree information of a randomly chosen $\sigma_{0}$-fraction of $S$ droplet nodes, where $\sigma_{0} \in (0, 1)$. 
% To read the degrees, the adversary acts as a legitimate bucket node, contacts a subset of droplet nodes and then downloads their droplets.
He then uses this information to tactically choose a subset of $\sigma S$ droplet nodes to erase among them, where $0<\sigma\le \sigma_{0}$. Fig. \ref{fig:Adversary_model} depicts our adversarial model wherein regions \Romannum{1}, \Romannum{2} and \Romannum{3} respectively represent the set of droplet nodes that the adversary has (i) neither read nor erased, (ii) read, however not erased, and (iii) both read and erased. As a special case, note that our adversarial model will collapse to the oblivious model if $\sigma_{0} = \sigma$. This is equivalent to the adversary attacking a randomly chosen $\sigma$-fraction of $S$ droplet nodes without reading the degree information on their coded structures. 
% To erase a subset of droplet nodes, the adversary launches the DoS attack as explained before.
% that are useful for a legitimate bucket node for the decoding purpose. 

% To learn the degrees of a set of droplet nodes, an adversary could possibly act as a legitimate bucket node, contact a subset of droplet nodes and then download their droplets. To erase a subset of droplet nodes, the adversary could pose a DoS or a ransomware attack as previously discussed.

In the rest of this section, we explore questions on what could be a reasonable \emph{attack strategy} from the adversary's perspective, given that he knows what decoder is used by the legitimate bucket nodes for blockchain retrieval. In this context, given the set of $\sigma_{0} S$ droplet nodes read by the adversary, a reasonable attack strategy refers to an appropriate selection of a subset of $\sigma S$ droplet nodes to erase, such that the probability of decoding failure at the legitimate bucket node resulting from this strategy is significantly higher than that of a blind attack.  
% from the $\sigma_{0} S$ droplet nodes read by the adversary, which maximizes the probability of decoding failure at the legitimate bucket node. 
To this end, we propose novel reasonable attack strategies for each of the BP, OFG, BRH and CRH decoders.

\subsubsection{Attack Strategy against BP Decoder}

Before presenting a reasonable attack strategy against the BP decoder, we recall that the BP decoder thrives on the availability of the singleton droplets and then iteratively generates new singletons to recover the blockchain. Note that the adversary neither has control on the specific droplet nodes that would be accessed by the victim bucket node, nor knows the order in which the singletons would be used by the BP decoder. Therefore, under such conditions, a reasonable attack strategy for the adversary would be to reduce the likelihood of the victim bucket node accessing singletons. With that, we present a reasonable attack strategy against the BP decoder in the following.

\begin{strategy}[Degree based attack strategy]
\thlabel{BP_Optimal_Attack}
%The optimal attack strategy for a non-oblivious adversary to maximize the failure rate of a bucket node who is employing BP decoder is: 
A reasonable attack strategy against the BP decoder is to arrange the $\sigma_{0} S$ droplet nodes in the increasing order of their degrees and then erase the first $\sigma S$ droplet nodes of the sorted set. Note that the order of the droplet nodes having the same degree can be chosen in an arbitrary manner.
\end{strategy}

In the above strategy, the objective of the adversary is to exhaust the singletons before the BP decoder retrieves all blocks of the epoch. It is intuitive that the probability that a degree-$d$ droplet node attains the singleton state sooner in the BP decoding process, decreases with increase in $d$. Therefore, among the droplets nodes read by the adversary, the singletons and the droplet nodes that attain the singleton state early in the BP decoding process must be erased first. As a result, a reasonable attack strategy is to erase the droplet nodes in the increasing order of their degrees. 

Although the aforementioned degree based attack strategy effectively exploits the weakness of the BP decoder, a drawback of this strategy is that it considers droplet nodes with the same degree to be equally valuable to the BP decoder. However, it does not take into account the information about the neighbours of the droplet nodes. Recall that neighbours of a droplet node refer to the collection of input blocks of an epoch that were XORed to obtain that droplet node. In particular, if an adversary needs to erase a subset of droplet nodes, all having the same degree $d$ for some $d\in [k]$, \thref{BP_Optimal_Attack} instructs him to erase a subset of them chosen at random. However, the droplet nodes that are not erased by the adversary may be more valuable to the BP decoder than those that are erased, thereby benefiting the legitimate bucket nodes.

To address this issue, we propose a score based attack strategy, where individual scores are assigned to each droplet node read by the adversary, based on their degrees and neighbours. The scores are assigned such that the droplet node most valuable to the BP decoder obtains the highest score, with the score decreasing as the droplet node's priority declines.

\begin{strategy}[Score based attack strategy]
\thlabel{BP_Score_Attack}
The score based attack strategy against the BP decoder involves the following steps.
\begin{itemize}
    % \item Sort the droplet nodes in the increasing order of their degrees.
    \item The adversary begins by assigning initial scores to all the $\sigma_{0}S$ droplet nodes read by him. In particular, for a droplet node with degree $d \in \{1, 2, \ldots, k\}$, its initial score is set to $\left(\frac{1}{2}\right)^{(d-1)}$.
    \item Assuming that the $\sigma_{0}S$ droplet nodes are ordered in some fashion, the adversary sets the iteration number $i = 1$ and updates the scores of each droplet node as follows.
    \begin{enumerate}
        \item The adversary picks the $i^{th}$ droplet node, denoted by $\mathbf{x}_i$. Let the degree of this droplet node be $d_{i}$, and its current score be $\mathcal{S}(\mathbf{x}_i)$. Then, he identifies the droplet nodes whose degrees are greater than $d_{i}$. Let that corresponding set of droplet nodes be denoted by the set $\mathcal{X}_{d_{i}}$. 
        \item For each droplet node $\mathbf{y} \in \mathcal{X}_{d_{i}}$ with degree $\Tilde{d}$, if the neighbours of $\mathbf{x}_i$ are a subset of the neighbours of $\mathbf{y}$, the adversary updates the score of $\mathbf{x}_i$ as $\mathcal{S}(\mathbf{x}_i)$ $\gets$ $\mathcal{S}(\mathbf{x}_i)$ + $\left(\frac{1}{2}\right)^{(\Tilde{d}-1)}$. The adversary repeats this step for all $\mathbf{y} \in \mathcal{X}_{d_{i}}$.
        \item If $i \le \sigma_{0}S$, the adversary increments $i$ by 1 and goes to Step (1). Otherwise, the scoring process is terminated.
    \end{enumerate}
    \item Finally, the adversary sorts the droplet nodes in the decreasing order of their respective scores and then erases the first $\sigma S$ droplet nodes of this sorted set. Note that the order of the droplet nodes having the same score can be chosen in an arbitrary manner.
\end{itemize}
\end{strategy}

We further demonstrate the score based attack strategy using a simplified toy example in the following.

\begin{example}
Let us consider a full node of LT coded blockchain with a total of $S=20$ droplet nodes and an epoch size of $k = 6$. In particular, we consider a non-oblivious adversary for which $\sigma_{0} = 0.5$ and $\sigma = 0.25$. Therefore, the adversary can read $10$ droplet nodes and erase $5$ of them. Furthermore, in this setup, we represent the droplet nodes with their corresponding length-$k$ binary vectors as they contain information about the droplet nodes' degrees and neighbours. The binary vectors of the droplet nodes read by the adversary are presented as columns in Table \ref{table:Binary_Vectors_Read}. 
% In addition, the last row of Table \ref{table:Binary_Vectors_Read} gives the initial scores of each droplet node, which are calculated as $\left(\frac{1}{2}\right)^{(d-1)}$, where $d$ is the degree of a droplet node.

\vspace*{-\baselineskip}
\begin{center}
\begin{table}[h] \caption{Binary vectors of droplet nodes read by the adversary, along with their respective initial scores} \label{table:Binary_Vectors_Read}
% \vspace{1mm}
{%
\addtolength{\tabcolsep}{-1.5pt}
\renewcommand{\arraystretch}{1.75}
\begin{center}
\begin{tabular}{|c|c|c|c|c|c|c|c|c|c|c|} 
\hline
\textbf{Droplet nodes} & $\mathbf{x}_1$ & $\mathbf{x}_2$ & $\mathbf{x}_3$ & $\mathbf{x}_4$ & $\mathbf{x}_5$ & $\mathbf{x}_6$ & $\mathbf{x}_7$ & $\mathbf{x}_8$ & $\mathbf{x}_9$ & $\mathbf{x}_{10}$ \\ \hline
Block 1 & $0$ & $0$ & $1$ & $0$ & $0$ & $0$ & $0$ & $1$ & $1$ & $1$\\
Block 2 & $0$ & $0$ & $0$ & $0$ & $0$ & $1$ & $1$ & $0$ & $1$ & $1$\\
Block 3 & $1$ & $0$ & $1$ & $0$ & $0$ & $0$ & $1$ & $1$ & $1$ & $1$\\
Block 4 & $0$ & $0$ & $0$ & $1$ & $0$ & $0$ & $0$ & $1$ & $0$ & $1$\\
Block 5 & $0$ & $1$ & $0$ & $0$ & $1$ & $1$ & $0$ & $0$ & $1$ & $1$\\
Block 6 & $0$ & $0$ & $0$ & $1$ & $1$ & $0$ & $1$ & $1$ & $0$ & $1$\\
\hline
\textbf{Initial scores} & $\mathbf{1}$ & $\mathbf{1}$ & $\mathbf{\frac{1}{2}}$ & $\mathbf{\frac{1}{2}}$ & $\mathbf{\frac{1}{2}}$ & $\mathbf{\frac{1}{2}}$ & $\mathbf{\frac{1}{4}}$ & $\mathbf{\frac{1}{8}}$ & $\mathbf{\frac{1}{8}}$ & $\mathbf{\frac{1}{32}}$\\ \hline

\end{tabular}
\end{center}
}
\end{table}
\end{center}

In the first step, the adversary calculates the initial scores of droplet nodes $x_i$ for $i\in[1,10]$, as $\left(\frac{1}{2}\right)^{(d-1)}$,  where $d$ is the degree of the $i^{th}$ droplet node. These initial scores are presented in Table \ref{table:Binary_Vectors_Read}.
In the next step, the adversary starts with $\mathbf{x}_1$ to perform score updation. 
Since the degree of $\mathbf{x}_1$ is $1$, $\mathcal{X}_{1} = \{ \mathbf{x}_i ~|~ i \in [3,10]\}$. We observe that the neighbours of $\mathbf{x}_1$ are a subset of the neighbours of $\mathbf{x}_3$, $\mathbf{x}_7$, $\mathbf{x}_8$, $\mathbf{x}_9$ and $\mathbf{x}_{10}$ of the set $\mathcal{X}_{1}$, whose degrees are $2$, $3$, $4$, $4$ and $6$ respectively. As a result, the updated score of $\mathbf{x}_1$ is $1+(\frac{1}{2}+\frac{1}{4}+\frac{1}{8}+\frac{1}{8}+\frac{1}{32})$. Similarly, the updated scores of the remaining droplet nodes, i.e. $\mathbf{x}_i$ for $i \in [2,10]$ are also calculated. The final scores of droplet nodes read by the adversary, $\mathbf{x}_i$ for $i \in [1,10]$ are \{$2.03125$, $2.15625$, $0.78125$, $0.65625$, $0.53125$, $0.65625$, $0.28125$, $0.15625$, $0.15625$, $0.03125$\}, respectively. Finally, the adversary arranges the droplet nodes in the decreasing order of their respective scores to obtain the new order as \{$\mathbf{x}_2$, $\mathbf{x}_1$, $\mathbf{x}_3$, $\mathbf{x}_4$, $\mathbf{x}_6$, $\mathbf{x}_5$, $\mathbf{x}_7$, $\mathbf{x}_8$, $\mathbf{x}_9$, $\mathbf{x}_{10}$\}. He then erases the first $5$ droplet nodes of this ordered set.
\end{example}
% We represent this data in the form of a $k \times \sigma_{0}S$ matrix, in which each column contains the binary vector $\mathbf{v}$ of each droplet node. In this context, let us have a $6\times 10$ matrix as follows:

In Section \ref{Simulation_Attack_Strategies}, we will show that the score based attack strategy performs better than the degree based attack strategy in terms of enhancing the failure rate at a legitimate bucket node.
% In this case, the current attack strategy doesn't tell us about which subset of the degree $d$ droplet nodes has to be erased in order to increase the probability of decoding failure.
% The intuition behind the strategy is that, the BP decoder begins its decoding by processing the singletons, which are the lowest degree droplet nodes. Furthermore, the BP decoder declares decoding failure when it exhausts all its singletons before recovering all the blocks of an epoch. As a result, the objective of the adversary is to exhaust the singletons as early as possible in the decoding process. It is also intuitive that the probability of a degree-$d$ droplet node attaining the singleton state sooner in the BP decoding process, decreases with increase in $d$. Therefore, the most useful droplet nodes for BP decoder are the lower degree ones, and hence the optimal attack strategy is to erase the droplet nodes in increasing order of their degrees. 

\subsubsection{Attack Strategy against OFG Decoder}
Along the lines of the BP decoder, even with the OFG decoder, the adversary has neither control over the specific droplet nodes accessed by the bucket node nor knowledge of the order in which gaussian elimination is used. Therefore, under such conditions, a reasonable attack strategy for the adversary is to reduce the probability with which the bucket node can access droplet nodes with rank $k$ on the $\mathbf{G}$ matrix.

\begin{strategy}\thlabel{OFG_Optimal_Attack}
A reasonable attack strategy against the OFG decoder involves the following steps:
\begin{enumerate}
\item Using the $\sigma_{0}S$ droplet nodes, the adversary juxtaposes the binary row vectors $\mathbf{v}$ corresponding to each droplet node and forms a binary matrix of dimensions $\sigma_{0}S \times k$.  
\item Then, he identifies $(\sigma_{0}-\sigma)S$ rows of the above matrix that will result in the minimum rank. In this context, we refer to the matrix's rank as the rank calculated over the binary field.
\item Finally, he erases the residual $\sigma S$ droplet nodes which correspond to the complementary rows of the above step.
\end{enumerate}
\end{strategy}

The idea behind the strategy is that for an OFG decoder to fail, the matrix derived from the binary row vectors of the honest droplet nodes must become rank deficient. To accomplish this, a reasonable approach is to leave those droplet nodes read by the adversary unerased, whose coefficients produce a matrix of least possible rank.

One possible method to carry out the aforementioned attack strategy is to undertake an exhaustive search to find the combination of droplet nodes that produces a matrix of minimum rank. However, this is a computationally expensive approach as it involves the evaluation of ranks of $\sigma_{0}S \choose \sigma S$ matrices, each with dimensions $(\sigma_{0}-\sigma)S \times k$.   
To circumvent this problem, we present a computationally feasible algorithm, presented in Algorithm \ref{alg:Minimum rank}, as an alternative approach to exhaustive search. We emphasize that the algorithm may not identify the exact $(\sigma_{0}-\sigma)S$ droplet nodes of minimum rank, however, it does provide a solution with low rank.

\begin{algorithm}
\caption{Computationally efficient algorithm for carrying out the attack strategy against OFG decoder}\label{alg:Minimum rank}
  % \hspace*{\algorithmicindent} \textbf{Input:} \\
  % \hspace*{\algorithmicindent} \textbf{Output} 
\begin{algorithmic}[1]
    \Require{Matrix $\mathbf{R} \in \{0,1\}^{\sigma_{0}S \times k}$ whose rows correspond to binary vectors of the droplet nodes read by the adversary} 
    \Ensure{Matrix $\mathbf{M} \in \{0,1\}^{(\sigma_{0}-\sigma)S \times k}$ whose rows correspond to the set of droplet nodes with a reasonably low rank}

    % \textbf{Input:} $\sigma_{0}S\times k$ matrix $\mathbf{R}$ containing binary vector coefficients of droplet nodes read\
    \State Initialize $\mathbf{M} \gets \mathbf{0}$;
    \State $\mathbf{M}[1][\cdot]$ $\gets$ $\mathbf{R}[1][\cdot]$;
		\For {$i \gets 2$ to $(\sigma_{0}-\sigma)S$}
            \State r $\gets$ Rank($\mathbf{M}$);
			\For {$j \gets 2$ to $\sigma_{0}S-i+2$}
				\State $\mathbf{M}[i][\cdot] \gets \mathbf{R}[j][\cdot]$;
				\If {Rank($\mathbf{M}$) == r}
                    \State Swap($\mathbf{R}[j][\cdot]$,$\mathbf{R}[\sigma_{0}S-i+2][\cdot]$);
                    \State \textbf{break};
                \EndIf
			\EndFor
		\EndFor
    % \State Swap($\mathbf{R}[1][\cdot]$,$\mathbf{R}[\sigma S+1][\cdot]$)  
    \end{algorithmic} 
\end{algorithm}

The input of Algorithm \ref{alg:Minimum rank} is the matrix $\mathbf{R} \in \{0,1\}^{\sigma_{0}S \times k}$, whose rows correspond to binary vectors of the droplet nodes read by the adversary. The algorithm outputs a matrix $\mathbf{M} \in \{0,1\}^{(\sigma_{0}-\sigma)S \times k}$, whose rows correspond to a set of droplet nodes with a reasonably low rank. With that, the sequence of steps involved in the algorithm are enumerated as follows.
\begin{enumerate}
    \item First, we initialize a matrix $\mathbf{M}$ of dimension $(\sigma_{0}-\sigma)S \times k$, and set all its entries to zero.
    
    \item In the next step, we replace the first row of $\mathbf{M}$ by the first row of $\mathbf{R}$. The rest of the rows of $\mathbf{R}$ are referred to as unused rows. We also set the iteration number, $i=2$.
    
    \item When choosing the entries for the $i^{th}$ row of $\mathbf{M}$, we look for an unused row in $\mathbf{R}$ that does not enhance the rank of $\mathbf{M}$, when that row replaces the $i$-th row of $\mathbf{M}$.
    \begin{enumerate}
        \item If such a row exists in $\mathbf{R}$, we replace that row as the $i^{th}$ row of $\mathbf{M}$, and refer this row of $\mathbf{R}$ as a used row.
        \item Otherwise, we replace the $i^{th}$ row of $\mathbf{M}$ by an arbitrary unused row of $\mathbf{R}$, and refer this row of $\mathbf{R}$ as a used row.
    \end{enumerate}
    \item Increment $i$ by $1$.
    \item If $\mathbf{M}$ has all-zero rows, go to Step (3). 
\end{enumerate}

Finally, the adversary obtains a matrix $\mathbf{M}$ with a reasonably low rank by carrying out the aforementioned steps. He then erases the residual droplet nodes, whose binary coefficients correspond to the set of rows in $\mathbf{R}$ that are not present in $\mathbf{M}$, leaving behind a set of droplet nodes with a reasonably low rank for the legitimate bucket nodes.
% It also returns the matrix $\mathbf{R}$, which is segmented such that the last $(\sigma_{0}-\sigma)S$ rows correspond to minimum rank and the first $\sigma S$ rows corresponds to the droplet nodes that are to be erased. 

In Algorithm \ref{alg:Minimum rank}, the maximum number of rank evaluations in the $i^{th}$ iteration is $\sigma_{0}S-i+1$. Likewise, there are $(\sigma_{0}-\sigma)S$ number of iterations in total. As a result, the algorithm's total number of rank evaluations is limited to $\sum_{i=1}^{(\sigma_{0}-\sigma)S} (\sigma_{0}S-i+1)$, which can be further approximated to $\sigma_{0}S \cdot (\sigma_{0}-\sigma)S$. Therefore, it can be concluded that the implementation of Algorithm \ref{alg:Minimum rank} requires substantially lower computational effort than that of an exhaustive search. 
% Ultimately, an adversary who performs an exhaustive search, executes the attack strategy in factorial time, whereas an adversary who implements Algorithm \ref{alg:Minimum rank} does the same in polynomial time. 

\subsubsection{Attack Strategy against BRH Decoder}
When using the BRH decoder, recall that the bucket node is flexible with its computational complexity. As a result, owing to the possibility of implementing the OFG decoder, minimizing the rank of the $\mathbf{G}$ matrix corresponding to the honest droplet nodes is a reasonable attack strategy.  

\begin{strategy}\thlabel{BFG_Optimal_Attack}
A reasonable attack strategy against the BRH decoder is same as that against the OFG decoder.
\end{strategy}

\subsubsection{Attack Strategy against CRH Decoder}

With the CRH decoder, it is clear that its requirements for successfully decoding of the blockchain are in terms of both singleton droplet nodes as well as rank offered by the residual honest droplet nodes. As a result, putting forward a reasonable attack strategy for this case is non-trivial. However, given that the BP part of the CRH decoder is executed before the OFG part, we believe that depleting singletons in the network would help stalling the decoding process. Thus, with the following strategy, we show that attacks in \thref{BP_Optimal_Attack} and \thref{OFG_Optimal_Attack} are applicable depending on the value of $\eta_{c}$. Recall that $\eta_{c}$ is the number of blocks of an epoch that must be retrieved by the BP part of the CRH decoder.

\begin{strategy}\thlabel{CRH_Optimal_Attack}
For the CRH decoder with $\eta_{c} > 0$ , we propose to follow the same attack strategy as that against the BP decoder. However, for the CRH decoder with $\eta_{c} = 0$ , we propose to follow the same attack strategy as that against the OFG decoder.
\end{strategy} 

\subsection{Experimental Results for Attack Strategies} 

\textcolor{black}{In this section, first, we provide details on our experimental setup, and then present the impact of the proposed attack strategies.} 

\subsubsection{\textcolor{black}{Experimental Setup}}

\textcolor{black}{All our experiments are conducted via simulations on a standalone machine by modelling the following three steps. \textbf{Step 1}: Generation of droplets as per the LT encoding process. \textbf{Step 2}: Execution of the proposed threat models on a subset of LT coded droplets based on the read and write capabilties of the attacker, and \textbf{Step 3}: Reconstruction of the blockchain by the  bucket nodes in the network using the coded fragments in the \emph{live} droplet nodes. In order to generate the experimental results related to decoding failure rates, we use Monte-Carlo simulations \citep{Monte_carlo} by implementing \textbf{Step 1}, \textbf{Step 2} and \textbf{Step 3} on MATLAB over an ensemble of $10^{4}$ iterations. In each iteration, to implement \textbf{Step 1}, we fix the LT code parameters for the blockchain, and then generate a certain number of droplets in a random manner using the robust Soliton distribution \citep{SeF}. For \textbf{Step 2}, we fix the read and write capabilities of the attacker, and randomly choose a certain number of droplets with uniform distribution. Subsequently, depending on the attack strategy chosen by the attacker, we make a subset of the read droplet nodes unavailable for other bucket nodes. We realize this task in our simulation model by erasing the contents of the droplet nodes. Finally, for \textbf{Step 3}, we use the rest of the \emph{live} droplet nodes to reconstruct the blockchain. Given that the droplet generation process is stochastic, the choice of the droplets read by the attacker is stochastic in each interaction, we declare a decoding failure if a bucket node is unable to reconstruct the entire $k$ blocks of blockchain in each iteration. Overall, by aggregating the events of decoding failures over $10^{4}$ iterations, we empirically compute the decoding failure rate, and plot them as a function of various system parameters in the experimental results. This experimental setup is used to generate all the results in the rest of the paper.} 

\subsubsection{\textcolor{black}{Results}}
\label{Simulation_Attack_Strategies}
In this section, we present the simulation results that demonstrate the effectiveness of the proposed attack strategies corresponding to each of the decoders discussed hitherto. To present the simulation results, let us define the read-write ratio for an adversary as $\xi \triangleq \frac{\sigma_{0}}{\sigma}$. As we have $\sigma_{0} \ge \sigma$, this implies $\xi\ge 1$.

\begin{figure}[ht!]
  \includegraphics[scale = 0.31]{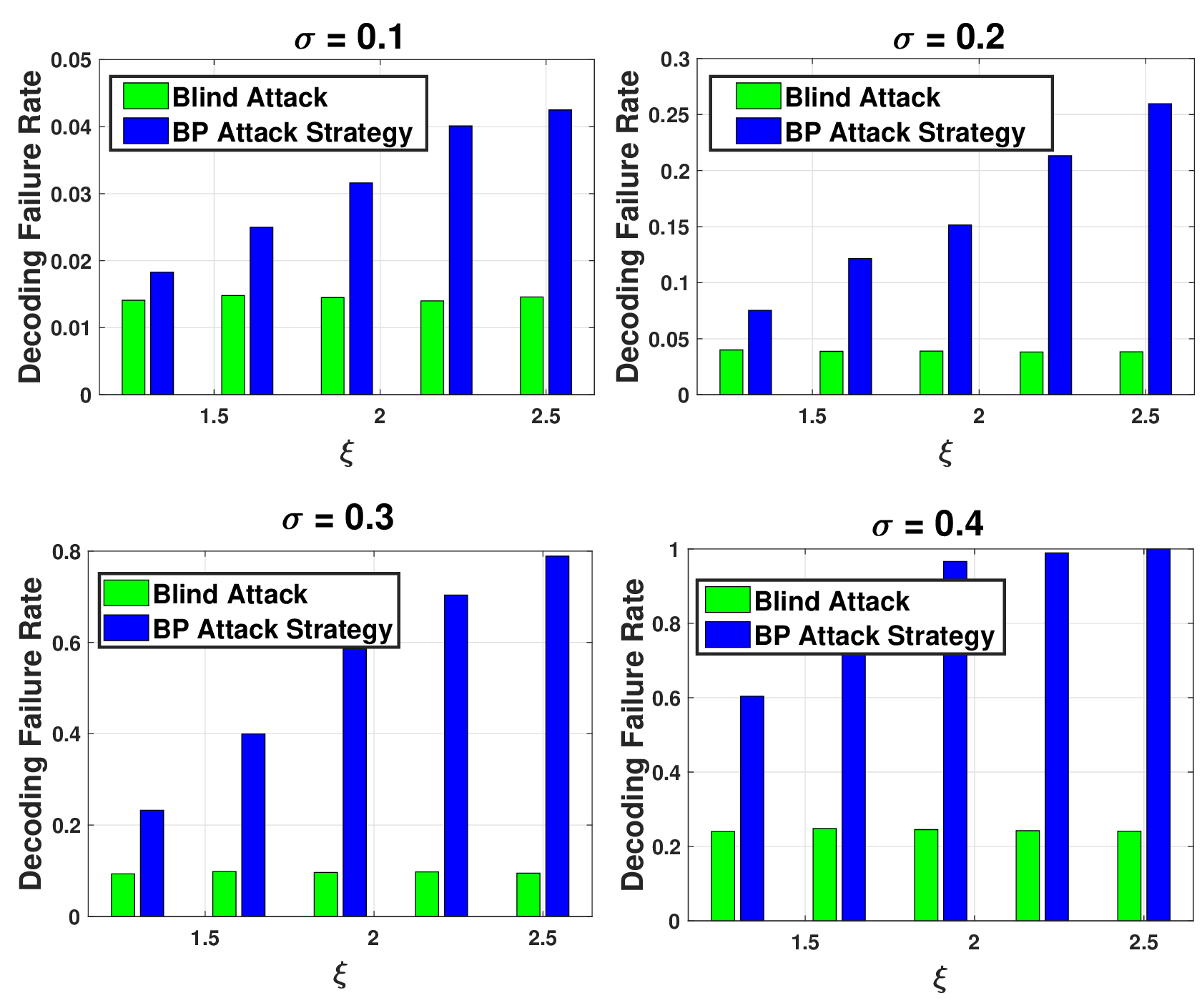}
  \vspace{-0.7cm}
  \caption{\textcolor{black}{Effectiveness of the proposed degree based attack strategy against a bucket node with BP decoder. The parameters for simulations are $k$=20, $S$=60, $\sigma \in \{0.1,0.2,0.3,0.4\}$ and $\xi \in \{1.3,1.6,1.9,2.2,2.5\}$.}}
  \label{fig:Bar_graph_BP_attack}
\end{figure}

First, we conduct simulations to characterize the decoding failure rate of the BP decoder for a given $\sigma$, as a function of $\xi$. Our experiments are performed for $k = 20$, $S = 3k$ such that $\sigma \in \{0.1,0.2,0.3,0.4\}$. For each value of $\sigma$, decoding failure rates are presented with $\xi \in \{1.3,1.6,1.9,2.2,2.5\}$ for two cases: one wherein the droplet nodes that are erased are randomly chosen, and the other wherein droplet nodes are chosen after executing the attack strategy presented in \thref{BP_Optimal_Attack}. For each ($\sigma$,$\xi$) pair, the obtained failure rates are presented in Fig. \ref{fig:Bar_graph_BP_attack}. Based on the blue bars in Fig. \ref{fig:Bar_graph_BP_attack}, it is clear that for a given $\sigma$, decoding failure rate increases with increase in $\xi$ when the droplet nodes are erased by executing the degree based attack strategy. This behaviour is intuitive since the adversary should get advantage as he reads more droplet nodes. \textcolor{black}{Furthermore, when the erased droplet nodes are randomly chosen, referred to as blind attacks, we observe from the green bars that the failure rate remains constant at different values of $\xi$ for a given $\sigma$.} 

\begin{figure}[ht!]
  \includegraphics[scale = 0.31]{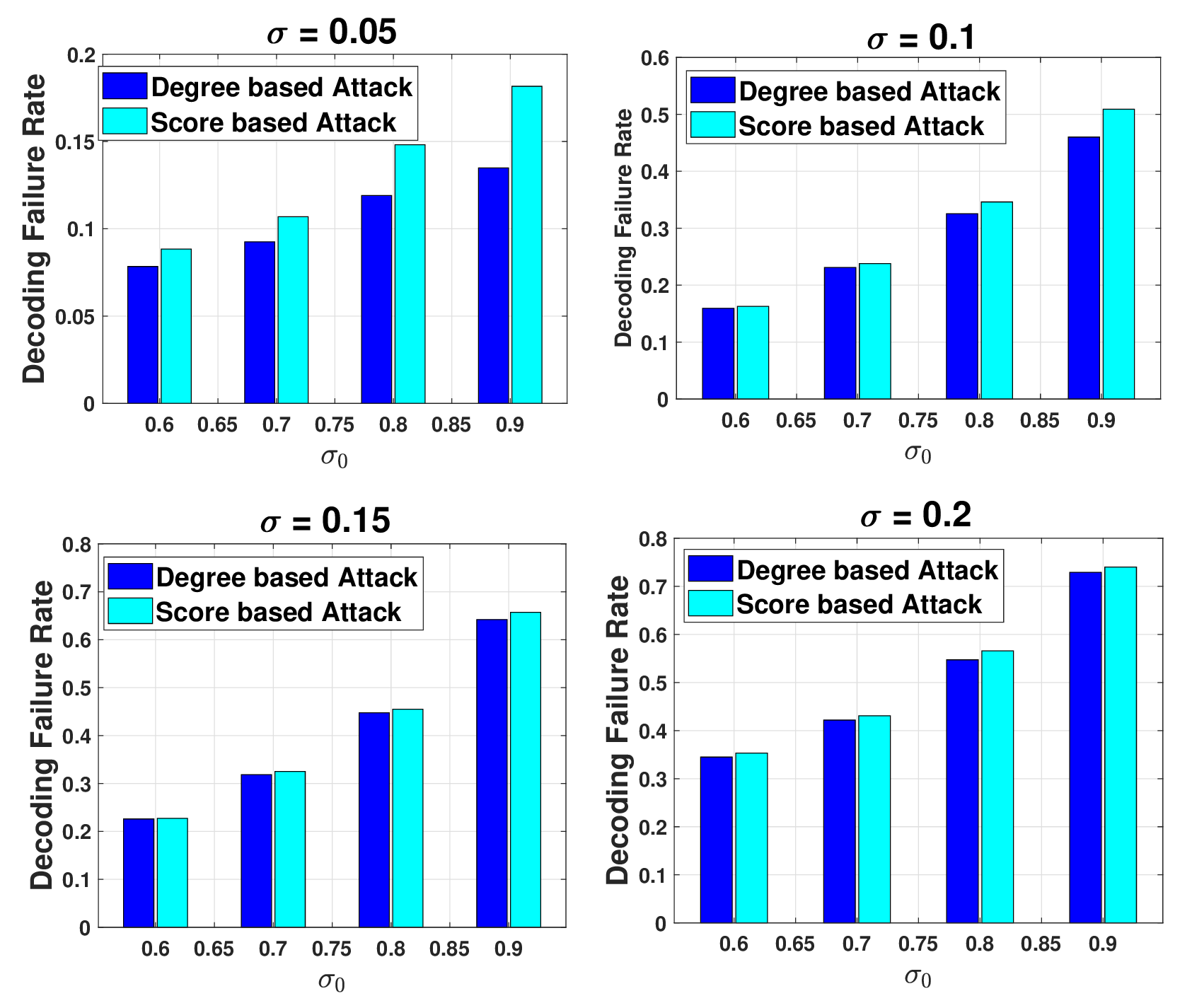}
  \vspace{-1cm}
  \caption{\textcolor{black}{Comparison of degree based and score based attack strategies against a bucket node with BP decoder. The parameters for simulations are $k$=20, $S$=60, $\sigma \in \{0.05,0.1,0.15,0.2\}$ and $\sigma_{0} \in \{0.6,0.7,0.8,0.9\}$.}}
  \label{fig:Bar_graph_deg_score_comparison}
\end{figure}

\begin{figure}[ht!]
  \includegraphics[scale = 0.30]{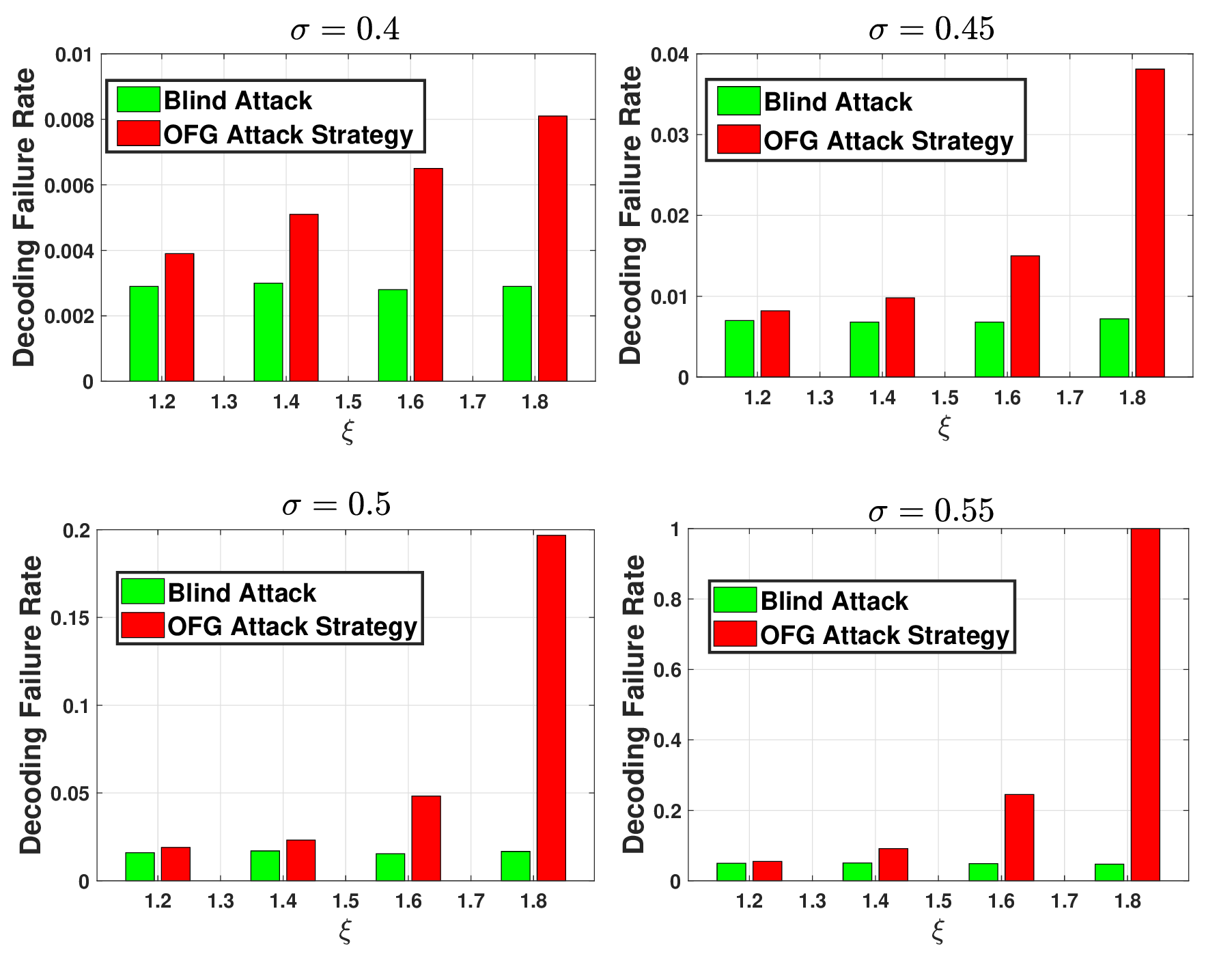}
  \vspace{-1cm}
  \caption{\textcolor{black}{Effectiveness of the proposed attack strategy against a bucket node with OFG decoder. The parameters for the simulations are $k$=20, $S$=60, $\sigma \in \{0.4,0.45,0.5,0.55\}$ and $\xi \in \{1.2,1.4,1.6,1.8\}$.}}
  \label{fig:Bar_graph_OFG_attack}
\end{figure}

Next, we present simulation results demonstrating that the score-based attack strategy for BP decoder, proposed in \thref{BP_Score_Attack} is more effective than the one described in \thref{BP_Optimal_Attack}, in Fig. \ref{fig:Bar_graph_deg_score_comparison}. The parameters used for simulation are specified in the caption of Fig. \ref{fig:Bar_graph_deg_score_comparison}. By comparing the blue and cyan bars in Fig. \ref{fig:Bar_graph_deg_score_comparison}, it is clear that there is an incremental improvement in the failure rate of score-based attack strategy compared to the degree based strategy. We can also observe that this increment is more significant for lower values of $\sigma$. This is intuitive because, as $\sigma$ approaches $\sigma_{0}$, the set of droplet nodes erased by an adversary starts becoming similar for both the strategies. As a result, their respective failure rates also turn out to be almost identical. However, at small $\sigma$, the score-based attack strategy erases the droplet nodes that are of highest priority to the BP decoder, whereas the degree based strategy may not erase the exact same droplet nodes. This creates a significant difference in the failure rates between the two strategies.

% At $\sigma = \sigma_{0}$, both these strategies erase exactly identical droplet nodes, due to which the failure rate also turns out to be identical for both the strategies.

\begin{figure}[ht!]
  \includegraphics[scale = 0.30]{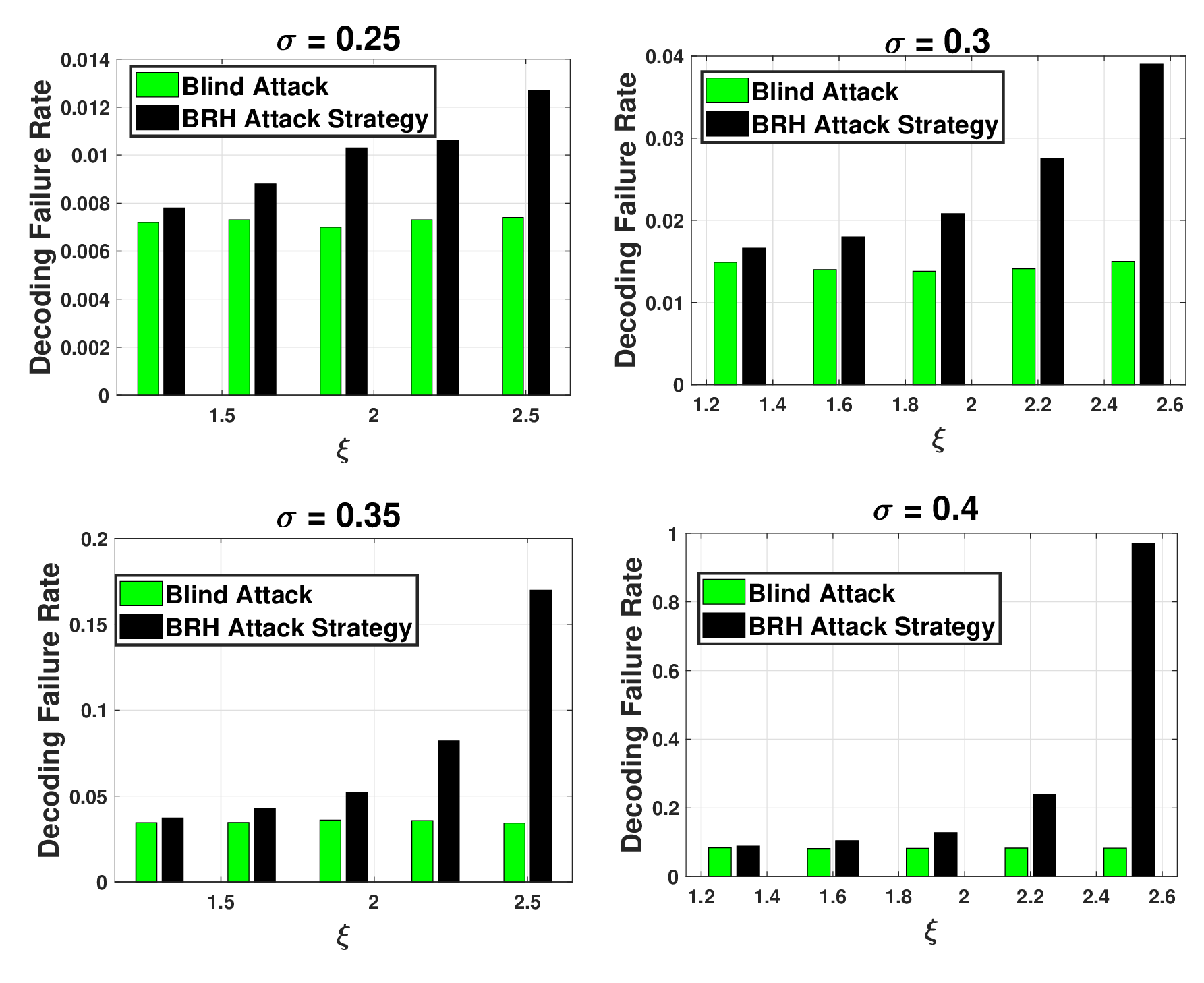}
  \vspace{-1cm}
  \caption{\textcolor{black}{Effectiveness of the proposed attack strategy against a bucket node with BRH decoder. The simulation parameters for the experiment are $k = 20$, $S = 3k$, $K = 44$, $\sigma \in \{0.4,0.45,0.5,0.55\}$ and $\xi \in \{1.3,1.6,1.9,2.2,2.5\}$.}}
  \label{fig:Bar_graph_BRH_attack}
\end{figure}

\begin{figure}[ht!]
  \includegraphics[scale = 0.29]{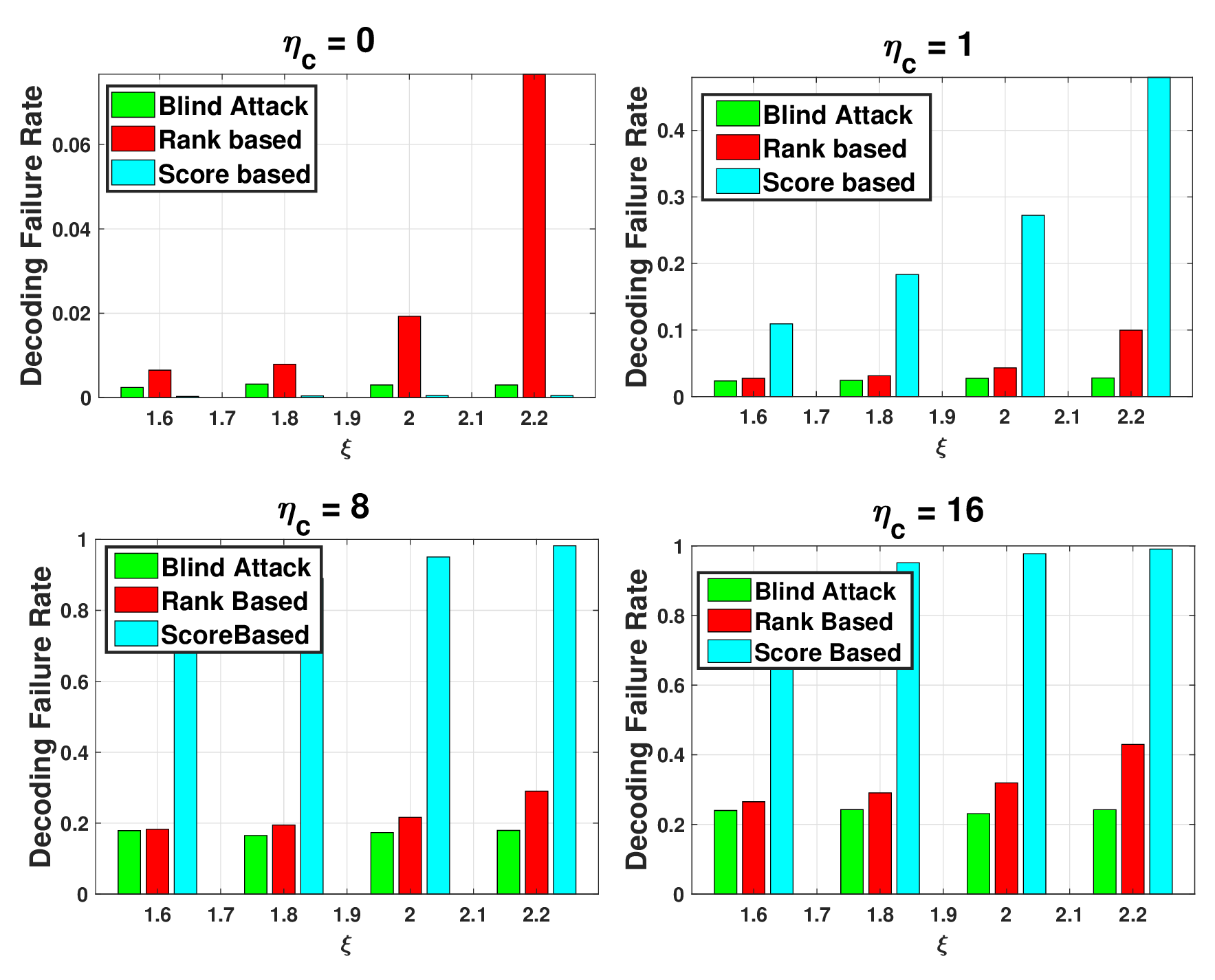}
  \vspace{-0.4cm}
  \caption{\textcolor{black}{Effect of the proposed attack strategies against the CRH decoder. The simulation parameters are $k$=20 and $S$=60. The legends for the top two subplots follow for the bottom two subplots.}}
  \label{fig:Bar_graph_CRH_attack}
\end{figure}

Similar experiments are conducted for the OFG and BRH decoders, and their results are presented in Fig. \ref{fig:Bar_graph_OFG_attack} and Fig. \ref{fig:Bar_graph_BRH_attack}, respectively. The inferences in this case are similar to that of the BP decoder. Note that in the above experiment, we implemented Algorithm \ref{alg:Minimum rank} to carry out attack strategies of both OFG and BRH decoders, as an exhaustive search is too expensive.

Finally, Fig. \ref{fig:Bar_graph_CRH_attack} demonstrates the comparison on the effectiveness of the attack strategies given in \thref{BP_Score_Attack} and \thref{OFG_Optimal_Attack} on the CRH decoder. The experiments are performed for $k=20$, $S=3k$ and $\sigma=0.3$. The failure rates of the CRH decoder corresponding to random erasing (green bars), minimizing rank strategy (red bars) and executing score-based attack strategy (cyan bars) are obtained for values of $\xi \in \{1.5,2,2.5,3\}$. The experiments were conducted for values of $\eta_{c} \in \{1,16\}$, in order to study the behaviour of the employed attack strategies, for near-extreme values of $\eta_{c}$. From the plots, we infer that for any non-zero values of $\eta_{c}$, the attack strategy in \thref{BP_Score_Attack} produces higher decoding failure rates. However, for $\eta_{c} = 0$, the attack strategy in \thref{OFG_Optimal_Attack} maximises the failure rate because the CRH decoder does not necessarily require any singleton for decoding and hence minimizing the rank of honest droplet nodes is sufficient.

% \begin{proposition}[\textbf{CRH decoder - Optimal Attack Strategy}] \thlabel{Complexity_Rigid_Optimal_Attack}
% Arrange the $\sigma_{0} S$ droplet nodes that has been read in the increasing order of their degrees and then erase the first $\rho_{opt}\cdot\sigma S$ droplet nodes of the sorted set. Now, obtain that particular set of ($\sigma_{0}-\sigma$)S droplet nodes from $(\sigma_{0}-\rho_{opt}\sigma)S$ nodes, whose binary vector coefficients $\{\mathbf{v}_{i}~|~ i \in [(\sigma_{0}-\sigma)S]\}$ results in a minimum rank matrix and then erase the complement of this obtained set, which contains the rest of the $(1-\rho_{opt})\sigma S$ droplet nodes, that are to be erased. 
% \end{proposition}

% \begin{figure}
%   \includegraphics[width=\linewidth, height = 1.5in]{opt_rho_plot.eps}
%   \caption{Plots depicting the optimum value of $\rho$ for different $\eta$ values of Complexity-Rigid Hybrid decoder}
%   \label{fig:opt_rho_plot}
% \end{figure}

\subsection{Cost Constrained Optimal Attacks}

\begin{figure}
  \includegraphics[scale = 0.36]{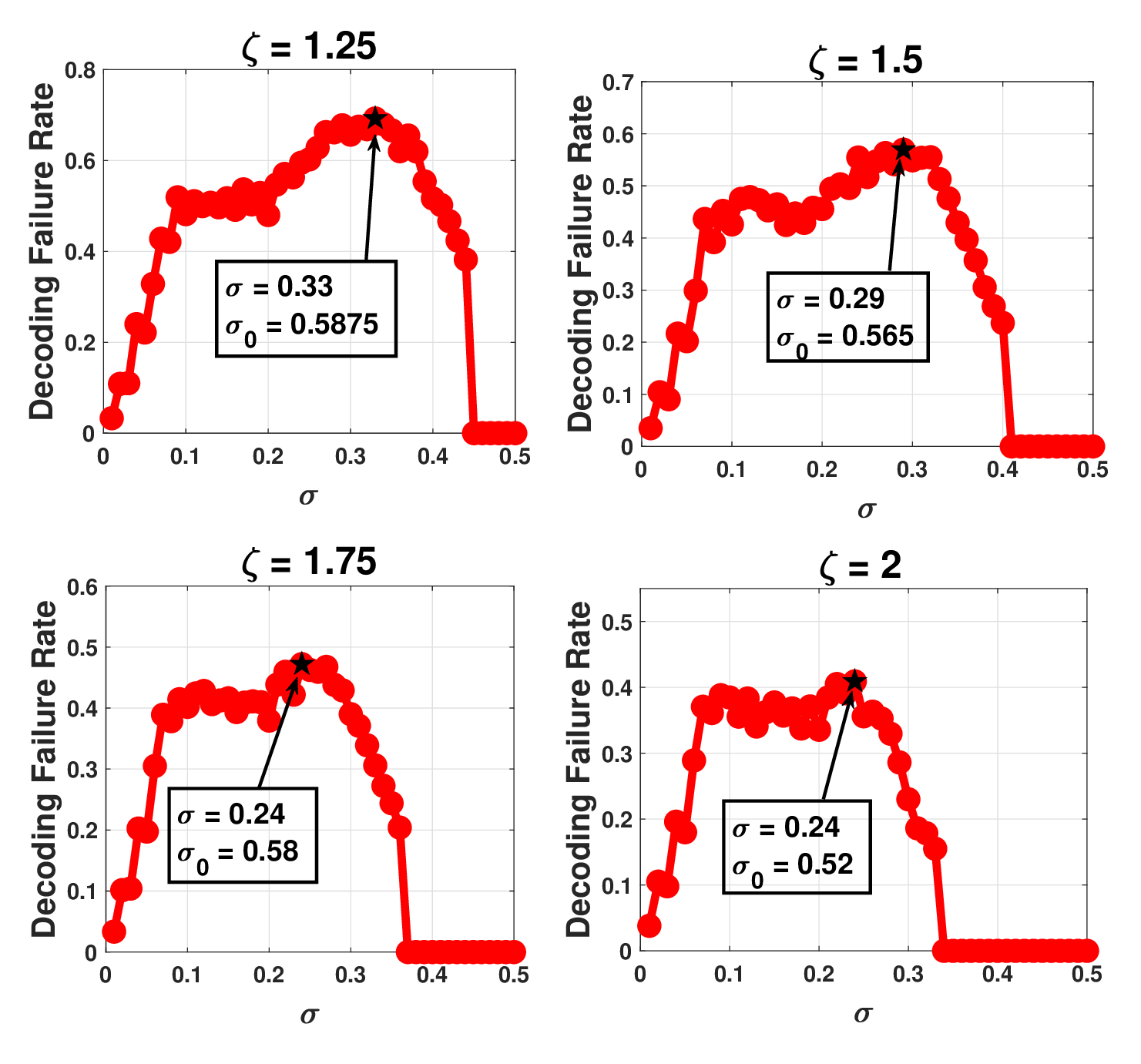}
  \vspace{-1cm}
  \caption{\textcolor{black}{Optimal ($\sigma_{0},\sigma$) pair to attack BP decoder for ($k=20$ and $S=60$) when $\nu=1\times S$.}}
  \label{fig:Optimization_attack_BP}
\end{figure}

In the previous section, we have characterized the decoding failure rates of various LT decoders for arbitrary values of $\sigma_{0}$ and $\sigma$ such that $\sigma_{0} \geq \sigma$. However, in practice, there are costs associated with reading and erasing droplet nodes. In this regard, we denote \textcolor{black}{$c_{r}>0$} and \textcolor{black}{$c_{e}>0$} as costs to read and erase one droplet node, respectively. Therefore, we introduce attack-cost as a new metric of interest, which is defined as follows. 

\begin{definition}[Attack-cost]\label{Attack Cost}
Attack-cost incurred by an adversary is defined as $c_{r}\times$\{no: of droplet nodes read\} + $c_{e}\times$\{no: of droplet nodes erased\}.
\end{definition}

In practice, $c_{r}$ is quantified as the time taken by an adversary to contact one droplet node and read its contents and $c_{e}$ is quantified as the time taken to erase one droplet node. Therefore, the attack-cost is quantified as the total time taken by an adversary to launch the attack. Typically, the attack-cost is upper-bounded in practice, to which we propose a method to choose the optimal pair ($\sigma_{0}$,$\sigma$) using which the cost constrained adversary can incur maximum failure rates on the LT decoders. 

\begin{figure}
  \includegraphics[scale = 0.32]{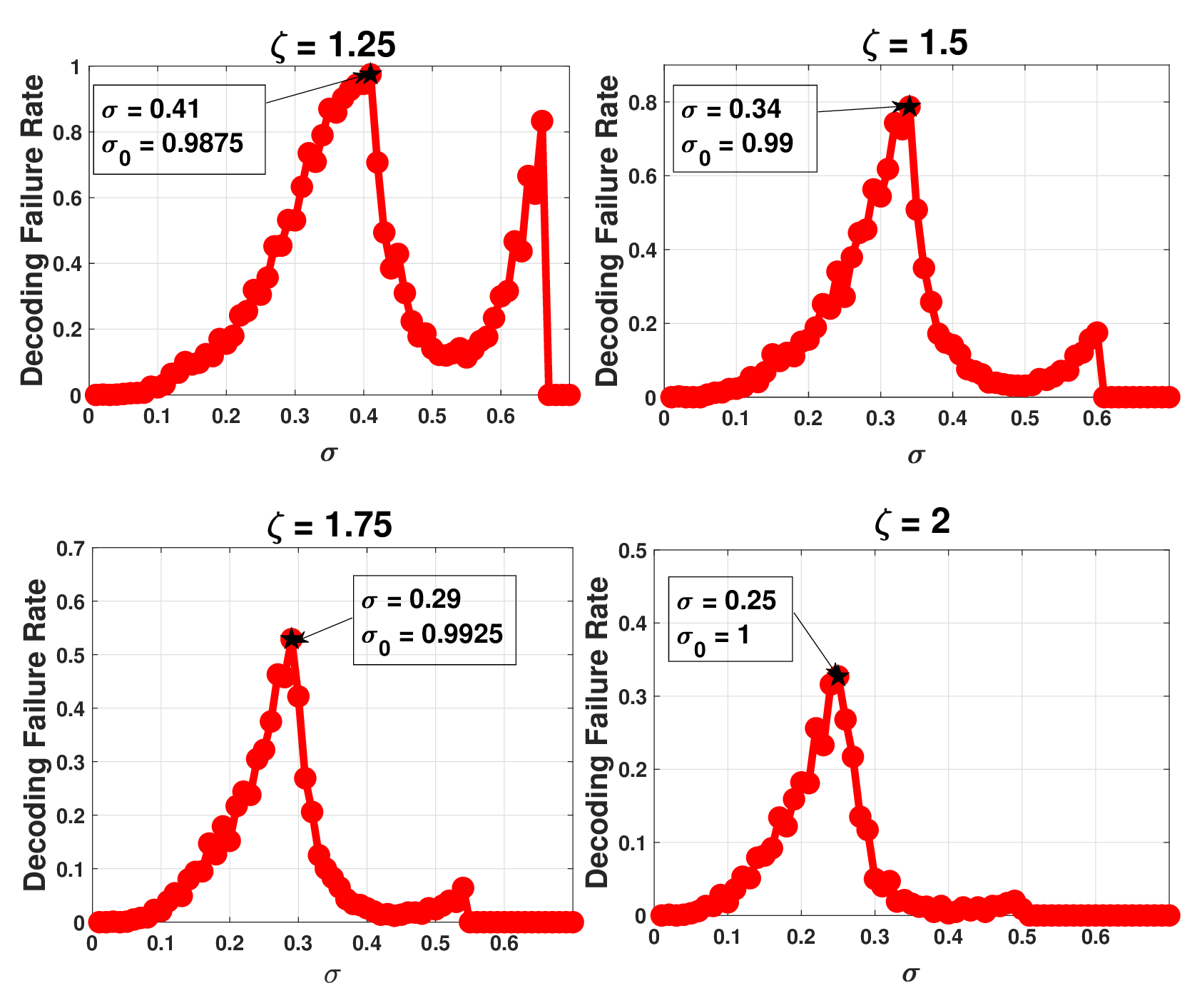}
  \vspace{-0.9cm}
  \caption{\textcolor{black}{Optimal ($\sigma_{0},\sigma$) pair to attack OFG decoder for ($k=20$ and $S=60$) when $\nu=1.5\times S$.}}
  \label{fig:Optimization_attack_OFG}
\end{figure}

% Let $c_{r}$ denote the cost with respect to reading one droplet node and $c_{e}$ denote the cost to erase one droplet node. Then, we introduce a new metric namely the \emph{Attack Cost} which is defined as follows.
% \begin{definition}\thlabel{Attack Cost}
% Attack-Cost is defined as $c_{r}\times$\{no: of droplet nodes read\} + $c_{e}\times$\{no: of droplet nodes erased\}.
% \end{definition}

% In this section, we assume that the attack-cost is constrained. Our analysis assists in choosing the optimal ($\sigma_{0}$,$\sigma$) that the cost constrained adversary must use to incur the maximum failure rates to decoders.

% terms of download and storage cost, spent by an adversary to contact one droplet node and download its contents. $c_{e}$ is quantified as the money spent by the adversary in terms of purchasing powerful computers to flood one droplet node with unwanted traffic, in case of a DoS attack.
% % For a ransomware attack, $c_{e}$ is the overall amount associated with creating the malware, and then encrypting one droplet node. 
% As a result, the attack-cost can be quantified as the total money spent by the adversary to read $\sigma_{0}$-fraction of droplet nodes and then erase $\sigma$-fraction of droplet nodes. 

By defining the read-write cost ratio as $\zeta \triangleq \frac{c_{e}}{c_{r}}$ we can rewrite the attack-cost as $c_{r}\times \sigma_{0} S + c_{e}\times \sigma S$, which could be simplified as $\sigma S c_{r} \times (\xi + \zeta)$.
% \vspace{-1mm}
% \begin{equation}\label{eqn:Attack_Cost}
%     \textbf{Attack-Cost} = c_{r}\times \sigma_{0} S + c_{e}\times \sigma S = \sigma S c_{r} \times (\xi + \zeta).
% \end{equation}

% $c_{r}\times \sigma_{0} S + c_{e}\times \sigma S = \sigma S c_{r} \times (\xi + \zeta)$. 
To further simplify the analysis, we normalise $c_{r}$ to 1, and use the normalised attack-cost as $\sigma S \times (\xi + \zeta)$. 
% This can be simplified as follows:
% \begin{equation}\label{eqn:Attack_Cost}
%     \textrm{Attack-Cost} = \sigma S c_{r}\cdot(\xi + \zeta)
% \end{equation}
% \begin{equation}\label{eqn:Normalised_Attack_Cost}
%     \textrm{Normalised Attack-Cost} = \sigma S \cdot(\xi + \zeta)
% \end{equation}
Now, let the normalised attack-cost be upper-bounded by $\nu$ units for some $\nu > 0$. The corresponding inequality is
\vspace{-1mm}
\begin{equation}\label{eqn:Constraint_on_attack_cost}
    \sigma S \cdot(\xi + \zeta) \le \nu.
\end{equation}
In practice, given $S$, $\nu$ and $\zeta$, the adversary would need to choose the pair ($\sigma_{0}$,$\sigma$) such that the failure rates of the targeted decoder is maximized under the constraint in \eqref{eqn:Constraint_on_attack_cost} along with the constraints $0 \leq \sigma_{0} \leq 1$ and $0 \leq \sigma \leq \sigma_{0}$. Towards reducing the search space for the above problem without compromising on the failure rates, the following theorem presents tighter bounds on $\sigma$ and $\sigma_{0}$.

\begin{theorem}\thlabel{tight_bounds}
The feasible range for the search space on $\sigma$ and $\sigma_{0}$ is to use $\sigma \in (0, \frac{\nu}{S(1+\zeta)}]$, and then to fix $\sigma_{0} = \frac{\nu-\sigma S \zeta}{S}$ for a given value of $\sigma$.
\end{theorem}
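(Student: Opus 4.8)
The plan is to reduce the two-dimensional optimisation over $(\sigma_0,\sigma)$ to a one-dimensional one by combining a monotonicity property of the decoding failure rate with the active cost constraint. First I would establish that, for any of the proposed attack strategies and for a fixed erasure budget $\sigma$, the induced decoding failure rate at the victim bucket node is non-decreasing in $\sigma_0$ --- a trend already visible in the experiments (e.g.\ Fig.~\ref{fig:Bar_graph_BP_attack}, where the failure rate grows with $\xi$ at fixed $\sigma$). The underlying reason is that the number of honest nodes, $(1-\sigma)S$, does not depend on $\sigma_0$; increasing $\sigma_0$ merely enlarges the pool of $\sigma_0 S$ read nodes from which the adversary selects the $\sigma S$ nodes to erase, so the strategy-selected erasure set can only become more damaging (for \thref{BP_Optimal_Attack}/\thref{BP_Score_Attack} it can contain nodes of even lower degree / higher score; for \thref{OFG_Optimal_Attack} it can leave behind an honest coefficient matrix of even lower rank). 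Consequently a cost-constrained adversary should always push $\sigma_0$ to the largest value the budget permits.

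Next I would translate the budget into that bound. Since $\sigma_0=\sigma\xi$, the constraint \eqref{eqn:Constraint_on_attack_cost} is equivalent to $\sigma_0 S+\sigma S\zeta\le\nu$, i.e.\ $\sigma_0\le\tfrac{\nu-\sigma S\zeta}{S}$. Together with the structural requirements $\sigma\le\sigma_0\le1$ and the monotonicity step, the best feasible choice for a fixed $\sigma$ is $\sigma_0=\min\!\bigl\{1,\tfrac{\nu-\sigma S\zeta}{S}\bigr\}$; in the regime where the cost bound (rather than $\sigma_0\le1$) is the active constraint this is exactly $\sigma_0=\tfrac{\nu-\sigma S\zeta}{S}$, matching the statement. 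On the claimed range this value of $\sigma_0$ is automatically non-negative, because $\tfrac{\nu}{S(1+\zeta)}\le\tfrac{\nu}{S\zeta}$.

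Finally I would pin down the range of $\sigma$. Imposing $\sigma_0\ge\sigma$ on the choice above gives $\tfrac{\nu-\sigma S\zeta}{S}\ge\sigma$, i.e.\ $\sigma\le\tfrac{\nu}{S(1+\zeta)}$; conversely, for any $\sigma>\tfrac{\nu}{S(1+\zeta)}$ even the cheapest admissible pair $(\sigma_0,\sigma)=(\sigma,\sigma)$ already costs $\sigma S(1+\zeta)>\nu$, so no feasible attack exists there. Hence the search may be restricted to $\sigma\in(0,\tfrac{\nu}{S(1+\zeta)}]$ with $\sigma_0$ fixed as above, without sacrificing any attainable failure rate, which is the assertion.

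I expect the main obstacle to be making the monotonicity claim rigorous: because the attack strategies are heuristics (degree sorting, the score rule, Algorithm~\ref{alg:Minimum rank}) rather than provably optimal selections, one must argue via a coupling that re-running each strategy on a superset of read nodes produces an erasure set that is at least as harmful on average; one must also handle the corner case in which the constraint $\sigma_0\le1$ becomes active before the cost constraint does, in which the stated assignment for $\sigma_0$ would be replaced by $\sigma_0=1$.
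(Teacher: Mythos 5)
Your proposal is correct and follows essentially the same route as the paper: the paper likewise rearranges the cost constraint (in terms of $\xi=\sigma_0/\sigma$) to get $\sigma\le\frac{\nu}{S(1+\zeta)}$ from $\xi\ge 1$, and then invokes the empirically observed monotonicity of the failure rate in $\xi$ at fixed $\sigma$ to set $\xi$ (equivalently $\sigma_0$) at the cost boundary $\sigma_0=\frac{\nu-\sigma S\zeta}{S}$. Your additional remarks---the need to make the monotonicity rigorous and the corner case $\sigma_0\le 1$---go slightly beyond the paper's own argument but do not change the approach.
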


\begin{proof}
Rearranging \eqref{eqn:Constraint_on_attack_cost}, we get $\sigma \cdot(\xi + \zeta)\le \frac{\nu}{S}$. The maximum value of $\sigma$ as a function of $\zeta$ under this constraint is $\sigma_{max}(\zeta) = \frac{\nu}{S(1+\zeta)}$. As a result, we get a constraint on $\sigma$ as $0 < \sigma \le \sigma_{max}(\zeta)$. Now, for each value of $\sigma$ in this range, the corresponding values of $\xi$ lies in the range $1\le \xi \le \frac{\nu-\sigma S \zeta}{\sigma S}$, where the upper limit on $\xi$ is obtained by rearranging \eqref{eqn:Constraint_on_attack_cost}. From our previous experiments on proving the effectiveness of the optimal attack strategies specific to various decoders, we have shown that for any decoder, the failure rate increases with $\xi$ when $\sigma$ is constant. Therefore, in this case for a given $\sigma$, the maximum value of $\xi$ that satisfies the constraint in \eqref{eqn:Constraint_on_attack_cost} results in the maximum decoding failure rate for that $\sigma$. 
\end{proof}

\begin{figure}[ht!]
  \includegraphics[scale = 0.325]{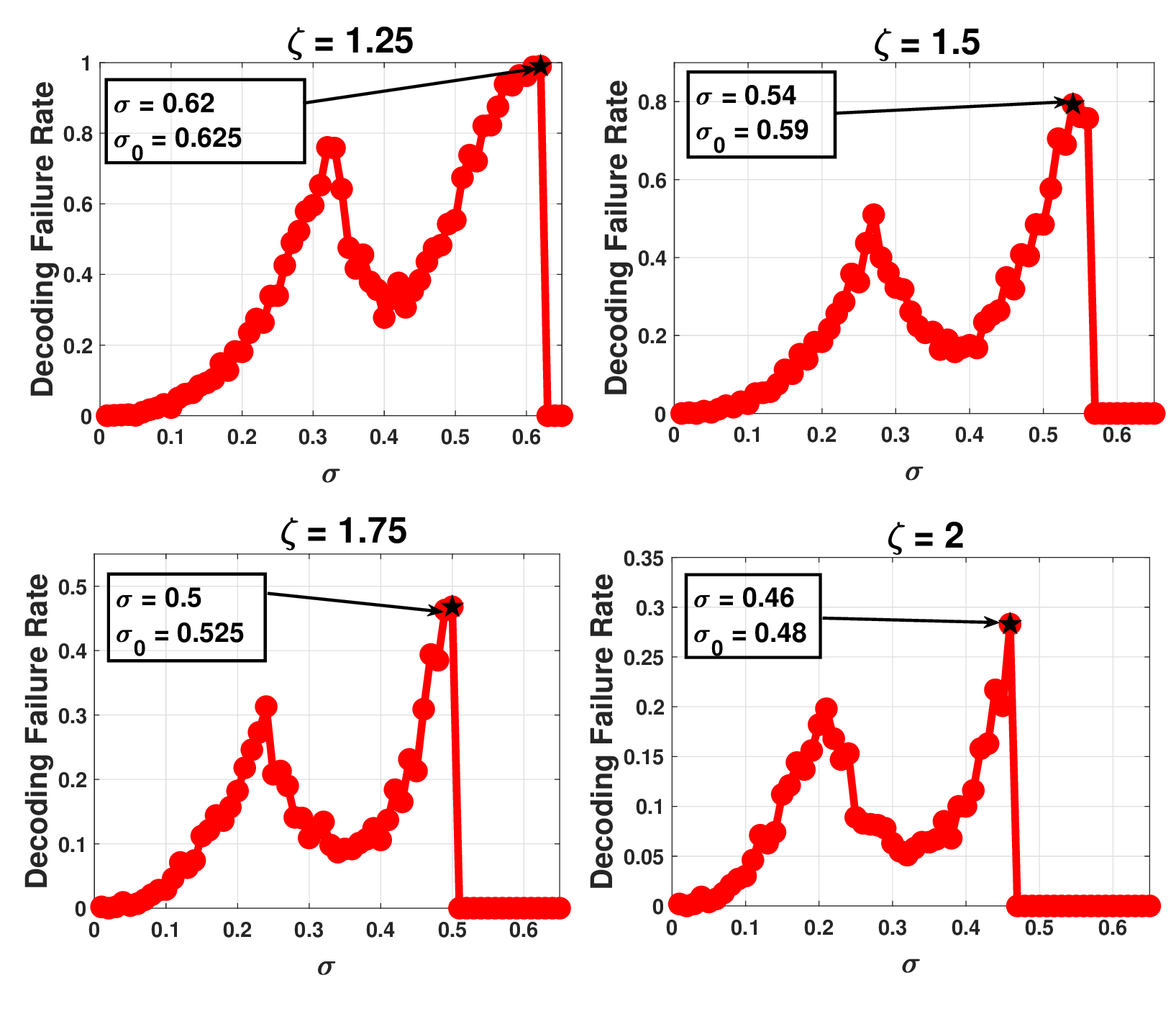}
  \vspace{-0.9cm}
  \caption{\textcolor{black}{Optimal ($\sigma_{0},\sigma$) pair to attack BRH decoder for ($k=20$ and $S=60$) with $K = 44$ when $\nu=1.4\times S$.}}
  \label{fig:Optimization_attack_BRH}
\end{figure}

Using \thref{tight_bounds}, we are able to reduce the search space to one-dimension, thereby reducing the attack complexity. 
Fig. \ref{fig:Optimization_attack_BP} to
% \ref{fig:Optimization_attack_OFG},
% \ref{fig:Optimization_attack_BRH} and
Fig. \ref{fig:Optimization_attack_CRH} demonstrates the evaluation of decoding failure rates of BP, OFG, BRH and CRH decoders respectively, for values of $\sigma$ in the range $0 < \sigma \le \sigma_{max}(\zeta)$ with a step size of 0.01, and their corresponding maximum values of $\xi$. All four decoders use $k$ and $S$ values of 20 and 60, respectively. Fig. \ref{fig:Optimization_attack_BP} finds the best ($\sigma_{0}$,$\sigma$) using the score based attack strategy for BP decoder, wherein the normalised attack-cost of the adversary is constrained to $\nu=1\times S$. Fig. \ref{fig:Optimization_attack_OFG} is for the optimal OFG attack, where $\nu = 1.5\times S$. Fig. \ref{fig:Optimization_attack_BRH} corresponds to failure rates resulting from attacking the BRH decoder with $K = 44$ and $\nu = 1.4\times S$. Fig. \ref{fig:Optimization_attack_CRH} shows failure rates resulting from CRH decoder with $\eta_{c} = 6$ and $\nu = 1\times S$. 
\textcolor{black}{Also, we have used $\zeta\in \{1.25,1.5,1.75,2\}$ for all the experiments pertaining to cost constrained attacks. Note that we have used the values of $c_{e}$ and $c_{r}$ such that $c_{e} \geq c_{r}$, i.e., $\zeta \geq 1$. This condition stems from our assumption that the costs incurred for executing a DoS threat on a droplet node is typically much higher than reading the contents of a droplet node. Otherwise, using the values of $\zeta < 1$ would imply that the coded blockchain is weak enough to encourage adversarial entities to erase the contents of a droplet node with efforts much lower than that of downloading the contents of a droplet node. This is not a desired feature of a secure coded blockchain network.}

\begin{figure}
  \includegraphics[scale = 0.33]{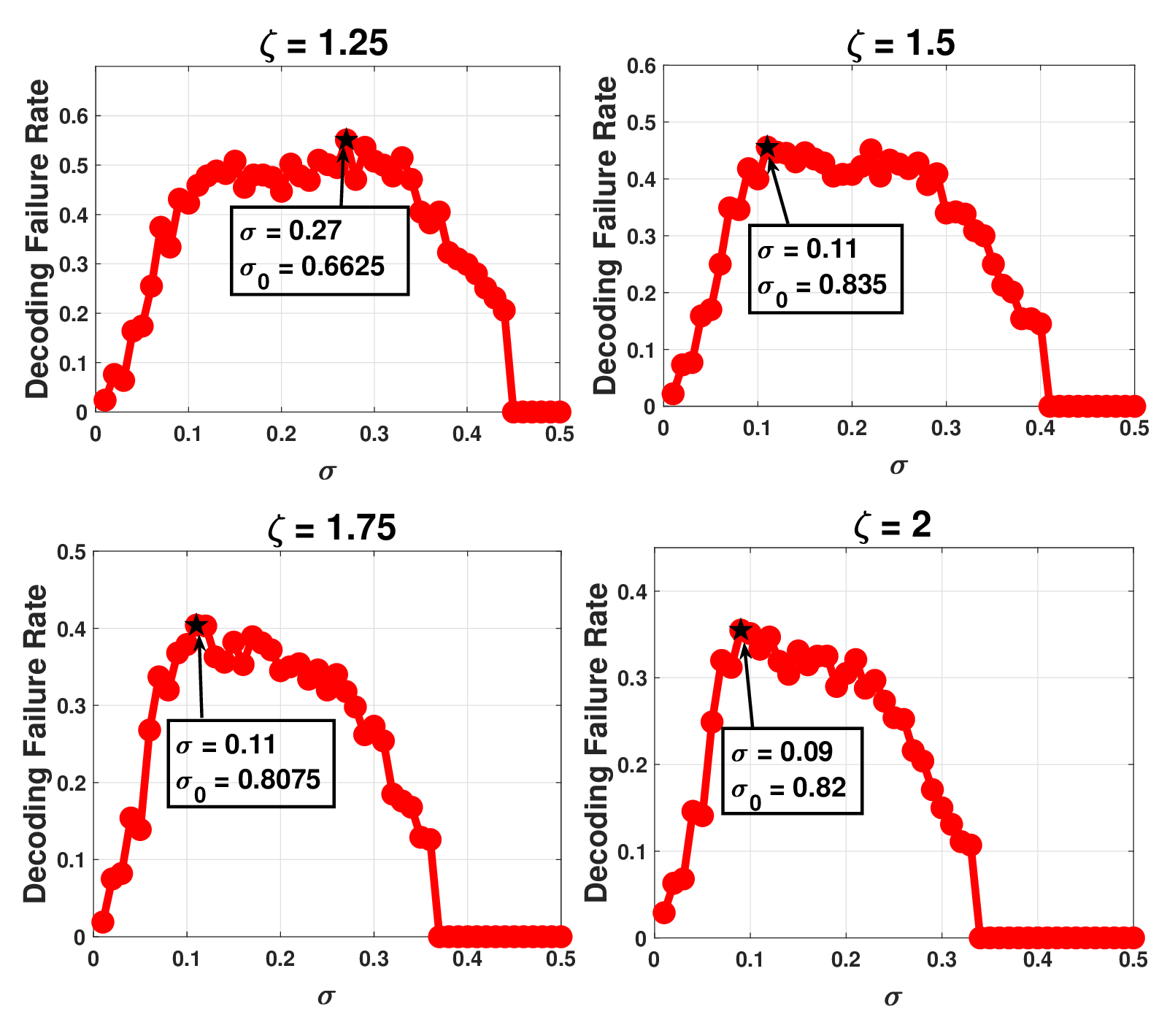}
  \vspace{-0.9cm}
  \caption{\textcolor{black}{Optimal ($\sigma_{0},\sigma$) pair to attack CRH decoder for ($k=20$ and $S=60$) with $\eta_{c} = 6$ when $\nu=1\times S$.}}
  \label{fig:Optimization_attack_CRH}
\end{figure}
\vspace{-1mm}
\subsection{Discussion}

In this section, we have highlighted the vulnerabilities of the LT-coded blockchain architecture against non-oblivious DoS attacks. Along these lines, we have proposed reasonable attack strategies specific to various LT decoder types which enables an adversary to prevent new nodes with specific download and computing capabilities from joining the blockchain network, limiting its scalability. Furthermore, we have also studied a cost-constrained adversary, in which the entire cost of accessing the storage nodes' contents and launching DoS attacks on a subset of them, is fixed. Under such constraints, our analysis assists the adversary in determining the optimal fraction of the storage nodes that must be read/erased in order to incur the highest decoding failure rate in the decoder he targets. 
Overall, we emphasize that when implementing LT coded blockchains, one must be aware of these threats, and hence take preventive measures against them.

\vspace{-2mm}
\section{Conclusion and Future Scope}
\vspace{-2mm}
This paper investigates two major challenges of LT coded blockchains: (i) Scaling LT coded blockchains in heterogeneous networks and (ii) Vulnerabilities of LT coded blockchain architecture to optimized non-oblivious DoS attacks. 

In the first part of the paper, we have presented BRH and CRH decoders to facilitate seamless scalability of LT coded blockchains in heterogeneous networks. It was demonstrated that the optimal mirroring costs of the BRH and CRH decoders are lower than those of the traditional BP and OFG decoders. While we highlight that the robust soliton distribution has been used at the encoding stage to support the BRH and CRH decoders, we believe that there is scope to design new degree distributions that are suited for the aforementioned classes of hybrid decoders. Hence, we leave the design and analysis of optimal degree distribution suitable for BRH and CRH decoders as a future work.

In the second part, we have investigated the vulnerabilities of LT-coded blockchains against a wide range of non-oblivious DoS attacks. We have identified that the degree of the droplet nodes, which are easily accessible to the adversary, open doors to the proposed class of attacks. \textcolor{black}{The attacks proposed in Section \ref{sec:threats} have carefully leveraged on the underlying algebraic structure of the BP and the OFG decoders of LT codes. While we were successful in convincing that these strategies make a significant impact on scalability, further studies on optimal attack strategies under the class of non-oblivious attacks are interesting directions for future research.} When adopting coded blockchains, it is critical to be aware of these threats and take precautions to mitigate them. Furthermore, as an interesting direction for future research, new coded architectures can be proposed for blockchains that enhances its resilience against optimized threats while also providing the storage savings and scalability features. \textcolor{black}{The novelty of our threat model lies in the choice of the droplet nodes for executing the DoS attack by the attacker. Therefore, we used simulation based experiments to validate the impact of the proposed contributions. However, if one is interested in measuring other peripheral metrics such as delays associated with implementing the attacks, and the impact of synchronization issues between the stages of attack and blockchain reconstruction, then these attacks could also be implemented on a hardware setup over a network of servers. These are interesting directions for future research. }

\vspace{-2mm}
\section{Acknowledgement}
\vspace{-2mm}
This work was supported by the research grant ``Secure Networks and Edge-Computing Hardware for Industry 4.0,” funded by the Ministry of Electronics and Information Technology, New Delhi, India.
% In this work, we have presented BRH and CRH decoders for facilitating scalability of LT coded blockchains in heterogeneous networks. It was demonstrated that the optimal mirroring costs of the BRH and CRH decoders are lower than those of BP and OFG decoders. While we highlight that RSD has been used at the encoding stage to support BRH and CRH decoders, we believe that there is scope to design new degree distributions that are suited for BRH and CRH decoders. Hence, we leave the design and analysis of optimal degree distribution suitable for BRH and CRH decoders as a future work.

% We have also investigated the vulnerabilities of LT coded blockchains against non-oblivious DoS attacks. We have identified that the degree of the droplet nodes, which are easily accessible to the adversary, open doors to the proposed class of attacks. When implementing LT coded blockchains, one must be aware of these threats, and hence take preventive measures against them. Furthermore, as an interesting direction for future research, new coded architectures can be proposed for blockchains that enhances its resilience against optimized threats while also providing the storage savings and scalability features.
%%
\vspace{-2mm}

%% else use the following coding to input the bibitems directly in the
%% TeX file.

%%\begin{thebibliography}{00}

%% \bibitem[Author(year)]{label}
%% For example:

%% \bibitem[Aladro et al.(2015)]{Aladro15} Aladro, R., Martín, S., Riquelme, D., et al. 2015, \aas, 579, A101

%%\end{thebibliography}

\end{document}